\documentclass[
paper=letter,%
numbers=noendperiod,%
captions=nooneline,%
abstracton,%
DIV=10%
]{scrartcl}

\usepackage[T1]{fontenc}%
\usepackage{lmodern}%
\usepackage[american]{babel}%
\usepackage{microtype}%

\usepackage[
hyperref,%
table%
]{xcolor}%

\usepackage{scrlayer-scrpage}%
\usepackage{eso-pic}%
\usepackage{rotating}%
\usepackage{amsmath}%
\usepackage{mathtools}%
\usepackage{amssymb}%
\usepackage{amsthm}%
\usepackage{thmtools}%
\usepackage{etoolbox}%
\usepackage{bm}%
\usepackage{bbm}%
\usepackage{enumitem}%
\usepackage{graphicx}%
\usepackage{grffile}%
\usepackage{tikz}%
\usepackage{wrapfig}%
\usepackage{tabularx}%
\usepackage{siunitx}%
\usepackage{booktabs}%
\usepackage{multirow}%
\usepackage{vruler}%
\usepackage{fancyvrb}%
\usepackage{listings}%
\usepackage{csquotes}%
\usepackage[
style=authoryear,%
dashed=false,%
hyperref=true,%
useprefix=true,%
maxcitenames=2,%
maxbibnames=6%
]{biblatex}%
\usepackage[
hypertexnames=false,%
setpagesize=false,%
pdfborder={0 0 0},%
pdfstartview=Fit,%
bookmarksopen=true,%
bookmarksnumbered=true%
]{hyperref}%
\xdefinecolor{lightgray}{RGB}{247, 247, 247}%
\xdefinecolor{semilightgray}{RGB}{240, 240, 240}%
\xdefinecolor{middlegray}{RGB}{127, 127, 127}%
\xdefinecolor{blue}{RGB}{58, 95, 205}%
\xdefinecolor{deepskyblue}{RGB}{0, 154, 205}%
\xdefinecolor{chocolate}{RGB}{205, 102, 29}%

\pagestyle{scrheadings}%
\setkomafont{pageheadfoot}{\normalfont\normalcolor\sffamily}%
\setkomafont{pagenumber}{\normalfont\normalcolor\sffamily}%
\automark{section}%
\setcounter{secnumdepth}{3}%
\setkomafont{captionlabel}{\normalfont\normalcolor\sffamily\bfseries}%

\setlist{%
  align=left,%
  labelsep=*,%
  leftmargin=*,%
  topsep=1mm,%
  itemsep=0mm%
}
\newcommand*{\mysquare}{\rule[0.18em]{0.36em}{0.36em}}
\newcommand*{\mytriangle}{\raisebox{0.12em}{\resizebox{0.48em}{0.48em}{$\blacktriangleright$}}}
\newcommand*{\mybar}{\rule[0.32em]{0.62em}{0.08em}}
\newcommand*{\mydot}{\raisebox{0.14em}{\resizebox{0.44em}{!}{$\bullet$}}}
\setlist[itemize,1]{label={\mysquare\ }}%
\setlist[itemize,2]{label={\mytriangle\ }}%
\setlist[itemize,3]{label={\mybar\ }}%
\setlist[itemize,4]{label={\mydot\ }}%
\setlist[enumerate,1]{label=\arabic*)}%
\setlist[enumerate,2]{label=\arabic{enumi}.\arabic*)}%
\setlist[enumerate,3]{label=\arabic{enumi}.\arabic{enumii}.\arabic*)}%

\makeatletter
\newcommand\myisodate{\number\year-\ifcase\month\or 01\or 02\or 03\or 04\or 05\or 06\or 07\or 08\or 09\or 10\or 11\or 12\fi-\ifcase\day\or 01\or 02\or 03\or 04\or 05\or 06\or 07\or 08\or 09\or 10\or 11\or 12\or 13\or 14\or 15\or 16\or 17\or 18\or 19\or 20\or 21\or 22\or 23\or 24\or 25\or 26\or 27\or 28\or 29\or 30\or 31\fi}%
\makeatother
\newcommand*{\abstractnoindent}{}%
\let\abstractnoindent\abstract
\renewcommand*{\abstract}{\let\quotation\quote\let\endquotation\endquote
  \abstractnoindent}
\deffootnote[1em]{1em}{1em}{\textsuperscript{\thefootnotemark}}%
\pdfstringdefDisableCommands{\let\bm\relax}%

\lstset{
  basicstyle=\upshape\ttfamily\small,%
  frame=lrtb, framerule=0pt, framexleftmargin=1pt,%
  basewidth=0.5em,%
  tabsize=8,%
  showstringspaces=false,%
  captionpos=b,%
  breaklines=true,%
  fancyvrb=true,%
  extendedchars=false,%
  rangeprefix=\#\#'\ \{\ ,%
  rangesuffix=\ \},%
  includerangemarker=false%
}
\lstdefinestyle{input}{
  backgroundcolor=\color{semilightgray},%
  commentstyle=\itshape\color{black},%
  keywordstyle=\color{black},%
  emphstyle=\color{black},%
  stringstyle=\color{black},%
  numbers=left,%
  numbersep=4.8pt,%
  numberstyle=\color{darkgray!80}\tiny%
}
\lstdefinestyle{output}{
  backgroundcolor=\color{lightgray}%
}

\lstdefinestyle{Rstyle}{
  language=R,%
  keywords={function, if, else, switch, repeat, while, for, in, next, break},%
  otherkeywords={},%
  emph={TRUE, FALSE, NULL, NA, NaN, Inf}%
}
\expandafter\let\csname Sinput\endcsname\relax
\expandafter\let\csname endSinput\endcsname\relax
\expandafter\let\csname Soutput\endcsname\relax
\expandafter\let\csname endSoutput\endcsname\relax
\lstnewenvironment{Sinput}[1][]{%
  \lstset{style=input, style=Rstyle}
  #1%
}{\vspace{-0.25\baselineskip}}%
\lstnewenvironment{Soutput}[1][]{%
  \lstset{style=output, style=Rstyle}
  #1%
}{\vspace{-0.25\baselineskip}}%

\lstdefinestyle{LaTeXstyle}{
  language=[LaTeX]TeX,%
  texcs={},%
  otherkeywords={}%
}
\lstnewenvironment{LaTeXinput}[1][]{%
  \lstset{style=input, style=LaTeXstyle}
  #1%
}{\vspace{-0.25\baselineskip}}%
\lstnewenvironment{LaTeXoutput}[1][]{%
  \lstset{style=output, style=LaTeXstyle}
  #1%
}{\vspace{-0.25\baselineskip}}%

\lstdefinestyle{otherstyle}{
  language={},%
  otherkeywords={},%
  upquote=true%
}
\lstnewenvironment{otherinput}[1][]{%
  \lstset{style=input, style=otherstyle}
  #1%
}{\vspace{-0.25\baselineskip}}%
\lstnewenvironment{otheroutput}[1][]{%
  \lstset{style=output, style=otherstyle}
  #1%
}{\vspace{-0.25\baselineskip}}%
\newcommand*{\code}{\lstinline[
  basicstyle=\upshape\ttfamily,
  style=otherstyle,
  literate={~}{{$\sim$}}1
  ]}

\setlength{\bibhang}{1em}%
\DeclareNameAlias{sortname}{family-given}%
\DefineBibliographyExtras{american}{\DeclareQuotePunctuation{}}%
\renewbibmacro*{volume+number+eid}{%
  \setunit*{\addcomma\space}%
  \printfield{volume}%
  \printfield{number}}
\DeclareFieldFormat*{number}{(#1)}
\DeclareFieldFormat*{title}{#1}%
\DeclareFieldFormat{doi}{%
  \ifhyperref
    {\href{http://dx.doi.org/#1}{\nolinkurl{doi:#1}}}%
    {\nolinkurl{doi:#1}}}%
\renewbibmacro*{in:}{}%
\DeclareFieldFormat{isbn}{ISBN #1}%
\DeclareFieldFormat{pages}{#1}%
\DeclareFieldFormat{url}{\url{#1}}%
\DeclareFieldFormat{urldate}{\mkbibparens{#1}}%
\addbibresource{paper.bib}%
\renewcommand*{\cite}[2][]{\textcite[#1]{#2}}%

\newif\ifstarttheorem
\declaretheoremstyle[%
  spaceabove=0.5em,
  spacebelow=0.5em,
  headfont=\sffamily\bfseries\global\starttheoremtrue,
  notefont=\sffamily\bfseries,
  notebraces={(}{)},
  headpunct={},
  bodyfont=\normalfont,
  postheadspace=\newline%
]{myMainStyle}
\declaretheorem[style=myMainStyle, numberwithin=section]{definition}%

\declaretheorem[style=myMainStyle, sibling=definition]{lemma}

\declaretheorem[style=myMainStyle, sibling=definition]{remark}

\declaretheorem[style=myMainStyle, sibling=definition]{algorithm}

\makeatletter
\preto\itemize{%
  \if@inlabel
    \ifstarttheorem
      \mbox{}\par\nobreak\vskip\glueexpr-\parskip-\baselineskip+0.25em\relax\hrule\@height\z@
    \fi%
  \fi%
  \global\starttheoremfalse%
 \def\tempa{proof}%
 \ifx\tempa\mycurrenvir
    \ifstarttheorem
      \mbox{}\par\nobreak\vskip\glueexpr-\parskip-\baselineskip+0.25em\relax\hrule\@height\z@
    \fi%
 \fi%
 \global\starttheoremfalse%
}
\preto\enditemize{\global\starttheoremfalse}
\makeatother

\makeatletter
\preto\enumerate{%
  \if@inlabel
    \ifstarttheorem
      \mbox{}\par\nobreak\vskip\glueexpr-\parskip-\baselineskip+0.25em\relax\hrule\@height\z@
    \fi%
  \fi%
  \global\starttheoremfalse%
 \def\tempa{proof}%
 \ifx\tempa\mycurrenvir
    \ifstarttheorem
      \mbox{}\par\nobreak\vskip\glueexpr-\parskip-\baselineskip+0.25em\relax\hrule\@height\z@
    \fi%
 \fi%
 \global\starttheoremfalse%
}
\preto\endenumerate{\global\starttheoremfalse}
\makeatother

\newcommand*{\omu}[3]{\underset{#3}{\overset{#1}{#2}}}
\newcommand*{\T}{^{\top}}
\renewcommand*{\i}{\leftarrow}
\newcommand*{\isim}{\omu{\text{\tiny{ind.}}}{\sim}{}}

\newcommand*{\IN}{\mathbb{N}}

\newcommand*{\IR}{\mathbb{R}}

\newcommand*{\PNVM}{\operatorname{PNVM}}
\newcommand*{\MVT}{\operatorname{MVT}}

\newcommand*{\Par}{\operatorname{Par}}

\newcommand*{\NVM}{\operatorname{NVM}}
\newcommand*{\U}{\operatorname{U}}

\newcommand*{\N}{\operatorname{N}}

\newcommand*{\rd}{\mathrm{d}}

\newcommand*{\argmax}{\operatorname*{argmax}}
\newcommand*{\argmin}{\operatorname*{argmin}}

\newcommand*{\LSE}{\operatorname{LSE}}
\renewcommand*{\P}{\mathbb{P}}
\newcommand*{\E}{\mathbb{E}}

\newcommand*{\var}{\operatorname{var}}
\newcommand*{\cov}{\operatorname{cov}}
\newcommand*{\corr}{\operatorname{corr}}

\newcommand*{\R}{\textsf{R}}
\newcommand*{\eps}{\varepsilon}
\newcommand*{\btheta}{\bm{\theta}}
\newcommand*{\ba}{\bm{a}}
\newcommand*{\bnu}{\bm{\nu}}
\newcommand*{\bb}{\bm{b}}
\newcommand*{\bx}{\bm{x}}
\newcommand*{\by}{\bm{y}}

\newcommand*{\bu}{\bm{u}}
\newcommand*{\bU}{\bm{U}}
\newcommand*{\bv}{\bm{v}}

\newcommand*{\bZ}{\bm{Z}}
\newcommand*{\bY}{\bm{Y}}
\newcommand*{\bX}{\bm{X}}

\newcommand*{\bzero}{\bm{0}}
\newcommand*{\bone}{\bm{1}}
\newcommand*{\bmu}{\bm{\mu}}

\newcommand*{\ta}{\tilde{a}}
\newcommand*{\tb}{\tilde{b}}

\newcommand*{\hd}{\hat{d}}
\newcommand*{\he}{\hat{e}}
\newcommand*{\hmuMCn}{\hat{\mu}^\text{\tiny{MC}}_n}
\newcommand*{\hsigma}{\hat{\sigma}}
\newcommand*{\hmurqmc}{\hat{\mu}^\text{\tiny{RQMC}}}
\newcommand*{\hmurqmcn}{\hat{\mu}^\text{\tiny{RQMC}}_{n}}
\newcommand*{\imax}{i_{\max}}
\hyphenation{Ar-chi-me-dean}

\begin{document}
\thispagestyle{plain}
\begin{center}
  \sffamily
  {\bfseries\LARGE Normal variance mixtures: Distribution,
density and parameter estimation
\par}
  \bigskip\smallskip
  {\Large
    Erik Hintz\footnote{Department of Statistics and Actuarial Science, University of
    Waterloo, 200 University Avenue West, Waterloo, ON, N2L
    3G1,
    \href{mailto:erik.hintz@uwaterloo.ca}{\nolinkurl{erik.hintz@uwaterloo.ca}}.},
    Marius Hofert\footnote{Department of Statistics and Actuarial Science, University of
    Waterloo, 200 University Avenue West, Waterloo, ON, N2L
    3G1,
    \href{mailto:marius.hofert@uwaterloo.ca}{\nolinkurl{marius.hofert@uwaterloo.ca}}. The
    author would like to thank NSERC for financial support for this work through Discovery
    Grant RGPIN-5010-2015.},
    Christiane Lemieux\footnote{Department of Statistics and Actuarial Science, University of
    Waterloo, 200 University Avenue West, Waterloo, ON, N2L
    3G1,
    \href{mailto:clemieux@uwaterloo.ca}{\nolinkurl{clemieux@uwaterloo.ca}}. The
    author would like to thank NSERC for financial support for this work through Discovery
    Grant RGPIN-238959.}
    \par
    \bigskip
    \myisodate\par}
\end{center}
\par\smallskip
\begin{abstract}
Normal variance mixtures are a class of multivariate distributions that generalize the multivariate normal by randomizing (or mixing) the covariance matrix via multiplication by a non-negative random variable $W$. The multivariate $t$ distribution is an example of such mixture, where $W$ has an inverse-gamma distribution. Algorithms to compute the joint distribution function and perform parameter estimation for the multivariate normal and $t$ (with integer degrees of freedom) can be found in the literature and are implemented in, e.g., the \R\ package \texttt{mvtnorm}. In this paper, efficient algorithms to perform these tasks in the general case of a normal variance mixture are proposed. In addition to the above two tasks, the evaluation of the joint (logarithmic) density function  of a general normal variance mixture  is tackled as well, as it is needed for parameter estimation and does not always exist in closed form in this more general setup. For the evaluation of the joint distribution function, the proposed algorithms apply randomized quasi-Monte Carlo (RQMC) methods in a way that improves upon existing methods proposed for the multivariate normal and $t$ distributions. An adaptive RQMC algorithm that similarly exploits the superior convergence properties of RQMC methods is presented for the task of evaluating the joint log-density function. In turn, this allows  the parameter estimation task to be accomplished via an expectation-maximization-like algorithm where all weights and log-densities are numerically estimated.
It is demonstrated through numerical examples that the suggested algorithms are quite fast; even for high dimensions around 1000 the distribution function can be estimated with moderate accuracy using only a few seconds of run time. Even log-densities around $-100$ can be estimated accurately and quickly. An implementation of all algorithms presented in this work is available in the \R\ package \texttt{nvmix} (version $\ge$ 0.0.4).
\end{abstract}
\minisec{Keywords}
Multivariate normal variance mixtures, distribution functions, densities, Student $t$, quasi-random number sequences.
\minisec{MSC2010}
62H99, 65C60%

\section{Introduction}
The multivariate normal
and (Student) $t$
distributions are among the most widely used multivariate distributions within
applications in statistics, finance, insurance and risk management. A simple way to create a much larger range of distributions than the multivariate normal, with different (joint and marginal) tail behavior
including tail dependence, is by
randomizing (mixing) the covariance matrix of a multivariate normal distribution.
This makes normal variance mixtures better suited,
for example, for
log-return distributions, while keeping many of the advantages of multivariate
normal distributions such as closedness with respect to linear combinations; see
\cite[Section~6.2]{mcneilfreyembrechts2015} for more details.

Formally speaking, we say that a random vector $\bX=(X_1,\dots,X_d)$ follows a \emph{normal variance
  mixture}, denoted $\bX\sim \NVM_d(\bmu,\Sigma,F_W)$, if, in
distribution,
\begin{align}
    \bX=\bmu+\sqrt{W}A\bm{Z},\label{eq:nvm:stoch:rep}
\end{align}
where $\bmu\in\IR^d$ denotes the \emph{location (vector)}, $\Sigma=AA\T$ for
$A\in\IR^{d\times k}$ is the \emph{scale (matrix)} (a covariance matrix), and
$W\sim F_W$ is a non-negative random variable independent of
$\bm{Z}\sim\N_k(\bm{0},I_k)$ (where $I_k\in\IR^{k\times k}$ denotes the identity
matrix), which we can think of as the mixing variable; see, for example, \cite[Section~6.2]{mcneilfreyembrechts2015}. Note
that $(\bX\,|\,W)\sim\N_d(\bmu,W\Sigma)$, hence the name of this class of
distributions. This implies that if $\E(\sqrt{W})<\infty$, then
$\E(\bX)=\bmu$, and if $\E(W)<\infty$, then
$\cov(\bX)=\E(W)\Sigma$ and $\corr(\bX)=P$ (the correlation matrix
corresponding to $\Sigma$). Furthermore, note that in the latter case with
$A=I_d$ (so when the components of $\bX$ are uncorrelated) the
components of $\bX$ are
independent if and only if $W$ is constant almost surely and thus $\bX$ is
multivariate normal; see \cite[Lemma~6.5]{mcneilfreyembrechts2015}.
The multivariate $t$ distribution is obtained by letting $W$ have an inverse-gamma distribution.
In what
follows we focus on the case $k=d$ in which $A$ is typically the Cholesky factor
computed from a given $\Sigma$; other decompositions of $\Sigma$ into $AA\T$ for
some $A\in\IR^{d\times d}$ can be obtained from the eigendecomposition or
singular-value decomposition.

Working with normal variance mixtures (as with any other multivariate
distribution) often involves four tasks: sampling, computing the joint
distribution function, computing the joint density function as well as parameter
estimation.
Sampling is straightforward via~\eqref{eq:nvm:stoch:rep} based on
the Cholesky factor $A$ of $\Sigma$.

In contrast,
evaluating multivariate distribution functions (such as the normal and the $t$) is a difficult, yet important problem that has gained much attention in the last couple of decades; see, for instance, \cite{genz1992}, \cite{hickernellhong1997}, \cite{genzbretz1999}, \cite{genzbretz2002}, \cite{genzbretz2009} as well as references therein for a discussion of the estimation of multivariate normal and $t$ probabilities and recent work in \cite{botevlecuyer2015} for the evaluation of truncated multivariate $t$ distributions.
To further illustrate how challenging this problem is, we note that  the \R\ package \texttt{mvtnorm} (one of the most
widely used packages according to reverse depends, see \cite{eddelbuettel2012})
and other \R\ packages do not even provide functionality for evaluating the
distribution function of the well-known multivariate $t$ distribution for
non-integer degrees of freedom $\nu>0$.

In this paper, we propose efficient algorithms for computing the joint distribution function and joint density function of a normal variance mixture, and also for estimating its parameters. The only requirement we have for the normal variance mixture is that we must have access to a (possibly numerical) procedure to evaluate the quantile function of $W$.
The assumption that such ``black-box'' procedure is available to evaluate the quantile is something we refer to as having a {\em computationally tractable} quantile function for $W$. Providing algorithms for the above tasks for a more general family of distributions than what currently exists in the literature is one of the main contributions of this work.

The algorithm we propose to efficiently evaluate the joint distribution function of a normal variance mixture (including the case when $\Sigma$ is singular)  is obtained by generalizing
methods by A. Genz and F. Bretz to evaluate the distribution function of the multivariate normal and $t$ distribution.
In particular, we generalize a variable reordering algorithm originally suggested by~\cite{gibsonglasbeyelston1994} and adapted by~\cite{genzbretz2002} which significantly reduces the variance of the integrand yielding fast convergence of our estimators. We also propose a different approach for using RQMC methods within the integration routine required to evaluate the joint distribution function. Our approach better leverages the improved convergence properties of these methods compared to Monte Carlo sampling. In addition, we explore the synergy between these methods and the variable reordering algorithm using the concept of Sobol' indices and effective dimension, thus providing new insight on why the reordering algorithm works so well. Sections \ref{sec:mcrqmc} and \ref{sec:pnvmix} respectively include the discussion of RQMC methods and the tasks of evaluating the joint distribution function.

Regarding the joint density function of $\bX$, when going from a simple case such as the multivariate normal to a general normal variance mixture, it can go from being available in closed form to requiring the numerical evaluation of an intractable one-dimensional integral. An example of the latter situation is when $W$ follows an inverse-Burr distribution. Since our goal is to provide algorithms that work for any normal variance mixture, an efficient algorithm to approximate the joint (log)-density function of $\bX$  is needed.
We tackle this
 by proposing in Section \ref{sec:dnvmix} an adaptive RQMC algorithm that mostly samples  in certain important subdomains of the range of the mixing variable to efficiently estimate the log-density of a multivariate normal variance mixture. Even log-densities around $-100$ can be estimated efficiently.

This flexible algorithm turns out to be a key ingredient for the task of parameter estimation, which we again propose in enough generality to handle any normal variance mixture, as explained in Section \ref{sec:fitnvmix}. More precisely,
we employ an ECME (``Expectation/Conditional Maximization Either'') algorithm, which is a likelihood-based fitting procedure developed in~\cite{liurubin1994}. This procedure requires repeated evaluations of the log-density function of $\bX$, which is one of the reasons why efficient algorithms for the latter are important when this density does not have a closed form.

An extensive numerical study for all proposed
algorithms is included in Section~\ref{sec:numericalexamples}. This section also includes a detailed investigation of why the reordering algorithm works well with RQMC methods, as well as a data analysis with real-world financial data.

All presented algorithms are available in our \R\ package
\texttt{nvmix} (in particular, via \texttt{rnvmix()}, \texttt{pnvmix()}, \texttt{dnvmix()} and \texttt{fitnvmix()}; see also \texttt{vignette(nvmix\_functionality)}) and the conducted simulations are reproducible with the demo
\texttt{numerical\_experiments}; see \cite{hoferthintzlemieux2020}.

To the best of our knowledge, none of the four  aforementioned tasks have
been discussed in the literature in such generality where the only requirement is to have a computationally tractable quantile
function for the mixing variable $W$. By specifying the latter,
methods developed in this paper (and the implementation in \texttt{nvmix}) can
be used to perform standard modeling tasks for multivariate normal variance
mixtures well beyond the case of a multivariate $t$ distribution. To demonstrate this,
a real financial data set is analyzed using an inverse-gamma, a Pareto and an inverse-Burr mixture at the end of Section~\ref{sec:numericalexamples}.

\section{Normal variance mixture distribution function and density}%
We assume that $\Sigma$ has full rank so that the density of $\bX\sim\NVM_d(\bmu, \Sigma, F_W)$ exists.
 Denote by $D^2(\bx;\bmu,\Sigma)=(\bx-\bmu)\T\Sigma^{-1}(\bx-\bmu)$ the (squared) Mahalanobis distance of $\bx\in\mathbb{R}^d$ from $\bmu$ with respect to (wrt) $\Sigma$. By conditioning
on $W$ and substituting $w=F_W^\i(u)$ (where
$F_W^\i(u)=\inf\{w\in[0,\infty):F_W(w)\ge u\}$, $u\in(0,1)$, denotes the
quantile function of $F_W$),
the density of $\bX$ can then be written as
\begin{align}
    f_{\bX}(\bx) &=\int_0^\infty f_{\bX|W}(\bx\,|\,w)\,\rd F_W(w)
    =\int_0^{\infty} \frac{1}{\sqrt{ (2\pi w)^d |\Sigma|}} \exp\left(-\frac{D^2(\bx;\bmu,\Sigma)}{2w}\right)\,\rd F_W(w)\label{eq:densityX:raw}\\
    &=\int_0^{1} \frac{1}{\sqrt{ (2\pi F_W^\i(u) )^d |\Sigma|}} \exp\left(-\frac{D^2(\bx;\bmu,\Sigma)}{2F_W^\i(u) }\right)\,\rd u.\label{eq:densityX}
\end{align}
Note that this representation holds for the case when $W$ is absolutely continuous, discrete or of mixed type.
In the former case, (\ref{eq:densityX:raw}) equals
\begin{align}\label{eq:densityX:fw}
    f_{\bX}(\bx) = \int_0^{\infty} \frac{1}{\sqrt{ (2\pi w)^d |\Sigma|}} \exp\left(-\frac{D^2(\bx;\bmu,\Sigma)}{2w}\right)f_W(w) \,\rd w,
\end{align}
where $f_W$ denotes the density of $W$.

Furthermore, note that $f_{\bX}(\bx)$ is decreasing in the Mahalanobis distance $D^2(\bx;\bmu,\Sigma)$. Thus
$$ f_{\bX}(\bx) \leq f_{\bX}(\bmu) = \frac{1}{\sqrt{(2\pi)^d |\Sigma|}} \E\left(\frac{1}{W^{d/2}}\right); \quad \bx\in\mathbb{R}^d$$
so that $ f_{\bX}(\bx)$ is bounded if and only if $\E(W^{-d/2}) < \infty$.

Let $F_{\bX}(\ba,\bb)$ denote the probability
that $\bX$ falls into the hyperrectangle spanned by the lower-left endpoint
$\ba$ and upper-right endpoint $\bb$, where $\ba,\bb\in\bar{\mathbb{R}}^d$ for $\bar{\mathbb{R}}=\mathbb{R} \cup\{-\infty, \infty\}$ and $\ba < \bb$
(interpreted componentwise), where we interpret non-finite components as the corresponding limits. Note that the joint distribution function
of $\bX$ is a special case of $F_{\bX}(\ba,\bb)$ since $F_{\bX}(\bx):=\P(\bX \leq \bx) = F_{\bX}(\ba,\bx)$ for $\ba =(-\infty,\dots,-\infty)$. In what follows we write $F(\ba,\bb)$ instead of $F_{\bX}(\ba,\bb)$
to simplify notation.
For computing $F(\ba,\bb)$ assume (potentially after adjusting $\ba,\bb$) that
$\bmu=\bm{0}$ and that $\Sigma$ has full rank (the singular case will be discussed in Section~\ref{sec:app:singular}).
By conditioning and the substitution $w=F_W^\i(u)$ we obtain that
\begin{align}
  F_{\bX}(\ba,\bb)&=\P(\ba < \bX \le \bb)= \P(\ba < \sqrt{W} A \bZ \le \bb) = \E\left(\P(\ba/\sqrt{W} < A\bZ \le\bb/\sqrt{W}\;|\, W)\right)\notag\\
  &=\E\left(\Phi_{\Sigma}(\ba/\sqrt{W}, \bb/\sqrt{W})\right) = \int_0^{\infty}\Phi_{\Sigma}(\ba/\sqrt{w}, \bb/\sqrt{w})\,\rd F_W(w)\notag\\
  &=\int_0^1\Phi_{\Sigma}\left(\ba/\sqrt{F_W^\i(u)}, \bb/\sqrt{F_W^\i(u)}\right)\,\rd u,\label{eq:cdfX}
\end{align}
where $\Phi_{\Sigma}(\ba,\bb)=\P(\ba<\bY\leq \bb)$ for $\bY\sim \N_d(\bm{0},\Sigma)$.

\section{Monte Carlo and (randomized) quasi-Monte Carlo methods}\label{sec:mcrqmc}
Quantities of interest in this paper, such as the distribution function of a normal variance mixture, are (after a suitable transformation) expressed as intractable integrals over the unit hypercube $(0,1)^d$ for some $d\in\IN$, i.e.,
\begin{align}\label{eq:mu}
\mu = \int_{(0,1)^d} g(\bu)\,\rd\bu,
\end{align}
where $g:(0,1)^d\rightarrow\IR$ is integrable. Monte Carlo (MC) methods approximate $\mu$ in~\eqref{eq:mu} by the arithmetic average $\hmuMCn = (1/n) \sum_{i=1}^n g(\bU_i)$ where $\bU_1,\dots,\bU_n\isim\U(0,1)^d$. An asymptotic $(1-\alpha)$-confidence interval (CI) can be approximated for sufficiently large $n$ by
\begin{align*}
\left[\hmuMCn - z_{1-\alpha/2} \hsigma_g/\sqrt{n},\ \hmuMCn + z_{1-\alpha/2} \hsigma_g/\sqrt{n}\right],
\end{align*}
where $z_{\alpha}=\Phi^{-1}(\alpha)$ and $\hsigma_g^2 = \widehat{\var}(g(\bU)=\sum_{i=1}^n(g(\bU_i)-\hmuMCn)^2/(n-1)$. One can choose $n$ so that the length of this CI does not exceed a pre-determined absolute error tolerance.

Replacing the (pseudo-random) evaluation points $\bU_1,\dots,\bU_n$ by a deterministic low-discrepancy point set which aims at filling the unit hypercube in a more homogeneous way, say $P_n=\{\bv_1,\dots,\bv_n\}\subset [0,1)^d$, leads to a \emph{quasi-Monte Carlo} (QMC) estimator for $\mu$. QMC methods often provide better estimators than classical MC methods, the deterministic nature of the points in $P_n$ however does not allow for simple error estimation via CIs as was done for the MC estimator $\hmuMCn$. To overcome this, one can randomize the point set $P_n$ in a way
such that the points in the resulting point set, say $\tilde{P}_n$, are uniformly distributed over
$(0,1)^d$ without losing the low-discrepancy of the point set overall. This leads to randomized QMC (RQMC) methods. In our algorithms, we use a digitally-shifted Sobol' sequence (\cite{sobol1967}) as implemented in the function \texttt{sobol(, randomize = "digital.shift")} of the \R\ package \texttt{qrng}; see \cite{hofertlemieux2019}. We remark that generating $\tilde{P}_n$ is slightly faster than generating $\bU_1,\dots,\bU_n\isim\U(0,1)^d$ using \R's default (pseudo-)random number generator, the Mersenne Twister.

Given $B$ independently randomized copies of $P_n$, say
$\tilde{P}_{n,b}=\{\bu_{1,b},\dots,\bu_{n,b}\}$ for $b=1,\dots,B$, one can construct
$B$ independent RQMC estimators of the form
\begin{align}\label{eq:hmurqmcbn}
  \hmurqmc_{b,n} = \frac{1}{n} \sum_{i=1}^n g(\bu_{i,b}), \quad b=1,\dots,B,
\end{align}
and combine them to the RQMC estimator
\begin{align}\label{eq:hmurqmcn}
  \hmurqmcn = \frac{1}{B} \sum_{b=1}^B \hmurqmc_{b,n}
\end{align}
of $\mu$. An approximate $(1-\alpha)$-CI for $\mu$ can be estimated as
\begin{align}\label{eq:CI:rqmc}
 \left[\hmurqmcn - z_{1-\alpha/2}\hsigma_{\hmurqmc}/\sqrt{B}, \hmurqmcn + z_{1-\alpha/2}\hsigma_{\hmurqmcn}/\sqrt{B}\right],
 \end{align}
where
\begin{align}\label{eq:sigma:rqmc}
\hsigma_{\hmurqmcn} = \sqrt{\frac{1}{B-1} \sum_{i=1}^B (\hmurqmc_{b,n}-\hmurqmcn)^2}.
\end{align}
One can compute $\hmurqmcn$ from~\eqref{eq:hmurqmcn} for some initial sample size $n$ (e.g., $n=2^7$) and iteratively increase the sample size of each $\hmurqmc_{b,n}$ in~\eqref{eq:hmurqmcbn} until the length of the CI in~\eqref{eq:CI:rqmc} satisfies a pre-specified error tolerance. In our implementations, we use $B=15$, an absolute default error tolerance $\varepsilon=0.001$ (which can be changed by the user) and $z_{1-\alpha/2}=3.5$ (so $\alpha\approx0.00047$). By using $\hmurqmcn$ as approximation for the true value of $\mu$, one can also consider relative errors instead of absolute errors.

Function evaluations from iterations that did not meet the tolerance can be recycled as
follows. Let $P_{n_1, n_2} = \{\bv_{n_1+1},\dots,\bv_{n_1+n_2}\}$ be the point
set consisting of the $n_2$ low-discrepancy points after skipping the first
$n_1$-many points. Furthermore, let
$\tilde{P}_{n_1, n_2, b} = \{\bu_{n_1+1,b},\dots,\bu_{n_1+n_2,b}\}$ be the
$b$th randomly shifted version of $P_{n_1,n_2}$ and let
\begin{align*}
  \hmurqmc_{b,n_1, n_2}= \frac{1}{n_2} \sum_{\bu\in\tilde{P}_{n_1, n_2, b}}g(\bu),\quad b=1,\dots,B.
\end{align*}
If $\hmurqmc_{n_1}$ does not meet the error tolerance, an estimator based
on $n_1+n_2$ points can be calculated using only $n_2$ additional function
evaluations based on
\begin{align*}
\hmurqmc_{b,0, n_2}= \frac{n_1 \times \hmurqmc_{b,0,n_1} + n_2 \times \hmurqmc_{b,n_1,n_2}}{n_1+n_2},\quad b=1,\dots,B.
\end{align*}

In iteration $i$ this update is being done with $n_1=in_0$ and $n_2=n_1+n_0$ in Step~\ref{step:update:rqmc} of our Algorithm~\ref{alg:RQMC:mu} to estimate $\mu$ from~\eqref{eq:mu}. That is, we start with initial sample size $n_0$ and add another $n_0$ points in each iteration. We highlight that this update can be easily implemented for a Sobol' sequence, as one can generate $P_{n_1,n_2}$ efficiently without having to generate $P_{0, n_2}$; in \R, this can be achieved by calling \texttt{sobol(, skip = n1)}. We do not lose any low-discrepancy properties of the randomized Sobol' sequences as the resulting estimator is mathematically equivalent to $\hmurqmc_{n^*} = (1/B) \sum_{b=1}^B \hmurqmc_{b,0,n^*}$ where $n^*$ is the total number of function evaluations in each randomization. We therefore leverage convergence properties in $n$ of Sobol' sequence based estimators.
It is important to point out that the reason why we can add points in this way without discarding previous function evaluations is because the Sobol' sequence is extensible in $n$. That is, it is constructed as a sequence in such a way that the first $n$ points can be used as a low-discrepancy point set $P_n$ for any $n$, with additional uniformity properties when $n$ is a  power of 2 (or a multiple of a power of 2).

The update in our algorithm is conceptually different from updates in RQMC methods suggested in our references: For instance, the RQMC algorithm proposed in \cite{genzbretz2002} to estimate the distribution function of a multivariate $t$ distribution, therein referred to as QRSVN algorithm, is based on a randomized Korobov rule (which belong to the wider class of
lattice rules; see \cite{keast1973} and \cite{cranleypatterson1976}). The QRSVN algorithm also iteratively evaluates the integrand at low-discrepancy points until the estimated error is small enough; however, it does not move along the same sequence of low-discrepancy points from one iteration to another. In iteration $i$, their method computes an estimator based on a lattice of size $p_i$ (a prime), and estimators from different iterations are combined as a variance-weighted average. Ultimately, the QRSVN algorithm outputs a weighted average of $B\cdot i^*$ different RQMC estimators based on different sample sizes (where $i^*$ denotes the number of iterations needed until termination), whereas our algorithm outputs the average of $B$ digitally-shifted RQMC estimators based on the first $n^*$ points of a Sobol' sequence. Hence, our methods leverage properties of the Sobol' sequence with growing $n$ rather than combining more and more RQMC estimators of different sample sizes.
Our proposed approach is thus superior because the variance of RQMC estimators can be shown to be in $O(n^{-\delta})$ with $\delta >1$ (and the smoother $f$ is, the larger $\delta$ is). Hence for a given fixed computing budget of $Bn$ function evaluations that must be split between $B$ and the size $n$ for the point set $P_n$, it is best to try to take $B$ just large enough so that we get a reasonable variance estimate, and then set $n$ as large as possible in order to further reduce the variance thanks to its $O(n^{-\delta})$ behavior: this is precisely what our approach does.
Numerical results in Section \ref{subsec:pnvmixexamples} illustrate how this leads to improved efficiency compared to the QRSVN algorithm.

Finally, our way of updating merely requires $B n_0$ additional function evaluations in each iteration, rather than $B p_{i+1}$. This typically leads to a smaller run-time, as only as many function evaluations as needed are computed.

\begin{algorithm}[RQMC Algorithm to estimate $\mu=\int_{(0,1)^d} g(\bu)\;\rd \bu$.]\label{alg:RQMC:mu}
    Given $\eps$, $B$, $n_0$, $\imax$, estimate $\mu=\int_{(0,1)^d} g(\bu)\;\rd \bu$ via:
    \begin{enumerate}
        \item Set $n=n_0$, $i=1$, and compute $\hmurqmc_{b,n}=\hmurqmc_{b,0,n_0}$ for $b=1,\dots,B$ and $\hmurqmc_n$ from~\eqref{eq:hmurqmcbn} and~\eqref{eq:hmurqmcn}.
        \item Set $\hat{\eps}=3.5 \hsigma_{\hmurqmc_n}$ with $\hsigma_{\hmurqmc_n}$ from~\eqref{eq:sigma:rqmc}.
        \item While $\hat\eps>\eps$ and $i\leq \imax$ do:
        \begin{enumerate}
            \item\label{step:update:rqmc} Set $n=n+n_0$, compute $\hmurqmc_{b,in_0,(i+1)n_0}$, $b=1,\dots,B$ and set $\hmurqmc_{b,n}=(i\hmurqmc_{b,n}+\hmurqmc_{b,in_0,(i+1)n_0})/(i+1)$.
            \item Update $\hmurqmc_n=(1/B)\sum_{i=1}^B\hmurqmc_{b,n}$ and update $\hat{\eps}=3.5 \hsigma_{\hmurqmc_n}$ with $\hsigma_{\hmurqmc_n}$ from~\eqref{eq:sigma:rqmc}.
            \item Set $i=i+1$.
        \end{enumerate}
        \item Return $\hmurqmc_n$.
    \end{enumerate}
\end{algorithm}

\label{page:properlog}
Sometimes it is necessary to estimate $\log \mu$ rather than $\mu$; in particular, when $\mu$ is small. For instance, if $\mu = f(\bx)$ where $f(\bx)$ is the density function of $\bX\sim\NVM_d(\bmu, \Sigma, F_W)$ evaluated at $\bx\in\IR^d$, interest may lie in $\log(\mu)=\log f(\bx)$ as this quantity is needed to compute the log-likelihood of a random sample (which then may be optimized over some parameter space).
When $\mu$ is small, using $\log\mu \approx \log(\hmurqmcn)$ directly should be avoided. One should instead compute a numerically more robust estimator for $\log \mu$, a \emph{proper logarithm}. To this end, define the function $\LSE$ (for Logarithmic Sum of Exponentials) as

\begin{align*}
 \LSE(c_1,\dots,c_n) = \log\left(\sum_{i=1}^n \exp(c_i)\right) = c_{\text{max}} + \log\left(\sum_{i=1}^n \exp(c_i - c_{\text{max}})\right),
 \end{align*}

\noindent
where $c_1,\dots,c_n\in\IR$ and $c_{\max} = \max\{c_1,\dots,c_n\}$. The right-hand side of this equation is numerically more stable than the left-hand side as the the sum inside the logarithm is bounded between 1 and $n$.

Let $c_{i,b}=\log g(u_{i,b})$ for $i=1,\dots,n$ and $b=1,\dots,B$. An estimator numerically superior (but mathematically equivalent) to $\log(\hmurqmcn)$ is given by
\begin{align}\label{eq:hmurqmcn:log}
\hmurqmc_{n, \log} = -\log(B) + \LSE(\hmurqmc_{1, n, \log} ,\dots,\hmurqmc_{B, n, \log}),
\end{align}
where
\begin{align}\label{eq:hmurqmcbn:log}
\hmurqmc_{b, n, \log} = -\log(n) + \LSE(c_{1,b},\dots, c_{n,b}),\quad b=1,\dots,B.
\end{align}
The standard deviation of $\hmurqmc_{n, \log} $ is estimated in the usual way by computing the sample standard deviation of $\hmurqmc_{1, n, \log} ,\dots,\hmurqmc_{B, n, \log}$ so that, as before, the integration error can be estimated from the length of the CI in~\eqref{eq:CI:rqmc}. A summary of the procedure to estimate $\log\mu$ with a proper logarithm via RQMC is given in Algorithm~\ref{alg:RQMC:logmu} in the appendix.
Note that despite the fact that the problem under study here is a one-dimensional integral, we refer to our algorithm as being in the RQMC family. We do so because although the distinctive features of RQMC mostly have to do with how they design low-discrepancy point sets in dimension larger than 1, another distinctive feature they have is to make use of low-discrepancy sequences that are extensible in $n$, which is precisely what we are exploiting for this algorithm.

For more information about RQMC methods and their applications in the financial literature, see, e.g., \cite{niederreiter1992}, \cite{lemieux2009} and \cite{glasserman2013}.

\section{Computing the distribution function}\label{sec:pnvmix}

As mentioned in the introduction,
throughout this paper we assume that the quantile function
$F_W^\i$ of $W$ is computationally tractable (possibly through an
approximation). Assume furthermore that the scale matrix $\Sigma$ has full rank;
the evaluation of singular normal variance mixtures is discussed in~\ref{sec:app:singular}.

One might be tempted to sample $U_i\isim\U(0,1)$, $i=1,\dots,n$, and then
approximate the integral in (\ref{eq:cdfX}) by the conditional Monte Carlo estimator
\begin{align*}
  F(\ba,\bb) \approx \hat{\mu}_{F}^{\text{CMC}} = \frac{1}{n} \sum_{i=1}^n \Phi_{\Sigma}\left(\ba/\sqrt{F_W^\i(U_i)}, \bb/\sqrt{F_W^\i(U_i)}\right).
\end{align*}

However, $\Phi_{\Sigma}$ itself is a $d$-dimensional integral typically evaluated by
RQMC methods, so this approach
would be time-consuming. Hence, the first step should be to approximate
$\Phi_{\Sigma}$. To this end, we follow \cite{genz1992} and start by expressing $\Phi_{\Sigma}$ (and then $F(\ba,\bb)$) as
integrals over the unit hypercube. In the second part of this section, we derive an efficient RQMC
algorithm to approximate $F(\ba,\bb)$ based on Algorithm \ref{alg:RQMC:mu}. In particular, it
details how a significant variance reduction (and hence decrease in run time)
can be achieved through a variable reordering following an approach originally
suggested by \cite{gibsonglasbeyelston1994} for multivariate normal
probabilities and later adapted by \cite{genzbretz2002} to work for multivariate
$t$ probabilities.

The novelty of our approach for this problem is three-fold: first, our algorithm applies to any normal variance mixture; second, it uses RQMC methods in a way that better leverages their convergence properties, compared to previous work done for the multivariate normal and $t$ distributions, and third, we include a detailed analysis (with our numerical results, in Section \ref{subsubsec:VarReorder}) of why the reordering algorithm works well with RQMC methods.

\subsection{Reformulation of the integral}\label{sec:reformulationFab}
We now address $\Phi_{\Sigma}$. Let $C=(C_{ij})_{i,j=1}^d$  be the Cholesky factor of $\Sigma$, i.e., a lower triangular matrix satisfying $C C\T = \Sigma$. Denote by $C_k\T$ the $k$th row of $C$ for $k=1,\dots,d$. \cite{genz1992} (see also \cite{genzbretz1999}, \cite{genzbretz2002} and
\cite{genzbretz2009}) uses a series of transformations that rely on the lower triangular structure of $C$ to produce a separation of variables as follows:
\begin{align}\label{eq:normalcdf}
    \Phi_{\Sigma}(\ba,\bb)&= \int_{a_1}^{b_1}\dots \int_{a_d}^{b_d} \frac{1}{\sqrt{(2\pi)^d |\Sigma|}}\exp\left(-\frac{\bx\T \Sigma^{-1} \bx}{2}\right)\rd \bx \nonumber \\
    &= (\he_1-\hd_1) \int_0^1 (\he_2-\hd_2)\ \dots \int_0^1 (\he_d-\hd_d)\, \rd u_{d-1}\dots \rd u_1,
\end{align}
where the $\hd_i$ and $\he_i$ are recursively defined via
\begin{align*}
\he_1 = \Phi\left(\frac{b_1}{C_{11}}\right), \he_i = \he_i(u_1,\dots,u_{i-1})=\Phi\left(\frac{ b_i - \sum_{j=1}^{i-1} C_{ij}\Phi^{-1}\left(\hd_j + u_j(\he_j-\hd_j)\right)}{C_{ii}}\right),
\end{align*}
and $\hd_i$ is $\he_i$ with $b_i$ replaced by $a_i$ for $i=1,\dots,d$. Note that the final integral in~\eqref{eq:normalcdf} is $(d-1)$-dimensional.

With this at hand, we can write~\eqref{eq:cdfX} as
\begin{align}\label{eq:cdfXtransformed2}
 F(\ba,\bb) =\int\limits_{(0,1)^d} g(\bu)\,\rd \bu =  \int\limits_0^1 g_1(u_0) \int\limits_0^1 g_2 (u_0,u_1) \dots \int\limits_0^1 g_d(u_0,\dots,u_{d-1})\, \rd u_{d-1}\dots \rd u_0,
 \end{align}
 where
\begin{align}\label{eq:gi}
g(\bu)=\prod_{i=1}^d g_i(u_0,\dots, u_{i-1}),\quad g_i(u_0,\dots,u_{i-1})= e_i - d_i,\quad i=1,\dots,d,
\end{align}
for $\bu=(u_0,u_1,\dots,u_{d-1})\in(0,1)^d$. The $e_i$ are recursively defined by
\begin{align}\label{eq:ei}
e_1 &= e_1(u_0) = \Phi\left(\frac{b_1}{C_{11}\sqrt{F_{W}^\i(u_0)}}\right),\nonumber \\
e_i &= e_i(u_0, \dots, u_{i-1}) = \Phi\left( \frac{1}{C_{ii}}\left( \frac{b_i}{\sqrt{F_{W}^\i(u_0)}}  - \sum_{j=1}^{i-1} C_{ij}\Phi^{-1}(d_j+u_{j}(e_j-d_j))\right)\right),
\end{align}
for $i=2,\dots,d$ and the $d_i$ are $e_i$ with $b_i$ replaced by $a_i$ for $i=1,\dots,d$. We remark that there is a typo (wrong bracket) in the corresponding formula for the special case of a multivariate $t$ distribution in \cite[p. 958]{genzbretz2002}.

Summarizing, the original $(d+1)$ dimensional integral is reduced to
 $$ F(\ba,\bb) = \int_{(0,1)^d} g(\bu)\;\rd\bu,$$
with the function $g$ defined in~\eqref{eq:gi} so that RQMC methods from Section~\ref{sec:mcrqmc} could be applied directly to the problem in this form to estimate $F(\ba,\bb)$. As pointed out in \cite{genzbretz2009}, the transformations undertaken in this section to produce a separation of variables essentially describe a Rosenblatt transform; see \cite{rosenblatt1952}.

\subsection{Variable reordering and RQMC estimation}\label{subsec:precond}
Inspecting~\eqref{eq:cdfXtransformed2} and~\eqref{eq:ei}, we see that the sampled component $u_j$ of $\bu$ in the $j$th integral affects the ranges of all $g_k$
with $k>j$. Observe that permuting
the order in $\ba$, $\bb$ and $\Sigma$ does not affect the value of
$F(\ba,\bb)$ as long as the same permutation is applied to $\ba$, $\bb$
and to both the rows and columns of $\Sigma$. It therefore seems to be a fruitful approach to
choose a permutation of $\ba$, $\bb$ and $\Sigma$ such that $g_2$ has, on average, the smallest range; $g_3$ the second smallest, and so on. This has been observed in
\cite{gibsonglasbeyelston1994} in the context of calculating multivariate normal
probabilities and has been adapted by \cite{genzbretz2002} to handle multivariate $t$
integrals. As in the latter reference, one can sort the integration limits a
priori according to their expected length of integration limits. This is more complicated
than just ordering $\ba$, $\bb$ and $\Sigma$ according to the lengths $b_{i}-a_{i}$ (assuming all of them are finite) as the latter does not take into account the dependence of the components in $\bX$. We generalize the Gibson, Glasbey and Elston method for reordering according to expected ranges to work for normal variance mixture distribution functions in Algorithm~\ref{alg:precond} in the appendix.

From a simulation point of view, the particular value of $u_1$ will
affect the ranges of all the remaining $d - 2$ integrals. Indeed, each input $\bu=(u_{0},\dots,u_{d-1})\sim\U(0,1)^{d}$ is transformed to a product of conditional probabilities: The first component, $u_{0}$, is used to sample from the mixing variable via inversion; $g_{1}(u_{0})$ is then the conditional probability of the first component of the random vector $\bX$  falling into $(a_{1}, b_{1})$ given that $W = F_W^\i(u_0)$, that is $g_1(u_0)=\P(X_1 \in (a_{1}, b_{1}) \mid W = F_W^\i(u_0))$.
Next, $u_{1}$ is transformed to $y_{1} = \Phi^{-1}(d_1+u_{1}(e_1-d_1))$, which is a realization of the random variable $\left(X_{1}\mid X_{1} \in (a_{1}, b_{1}), W = F_W^\i(u_0)\right)$. Then, $g_{2}(u_{0}, u_{1}) = \P(X_{2}\in(a_{2}, b_{2}) \mid X_{1} = y_{1}, W = F_W^\i(u_0))$ and so on and so forth. As we are conditioning on events of the form $\{X_{1} = y_{1},\dots, X_{l} = y_l, W = F_W^\i(u_0)\}$ for all subsequent probabilities, this also explains why variable reordering can help decrease the variance: It is designed in a way so that $X_1$ has smallest (expected) range, $X_2$ second smallest and so on. In the explanation above, if $b_{1}-a_{1}$ is small, there is only little variability in $y_{1}$ so that $g(u_{0}, u_{1})$ should be close to $\P(X_{2}\in(a_{2}, b_{2}) \mid X_{1} \in(a_{1}, b_{1}), W = F_W^\i(u_0))$. We point out that if $F_W^\i(u)$ is a non-zero constant for all $u\in(0,1)$ (corresponding to $\bX$ being multivariate normal), this is the original derivation in \cite{gibsonglasbeyelston1994} who independently developed a Monte Carlo procedure to approximate multivariate normal probabilities similar to \cite{genz1992}.

Algorithm~\ref{alg:precond} is a greedy procedure that only reorders $\ba,\bb$, $\Sigma$ (and updates the Cholesky factor $C$ accordingly). Changing the order in $\ba$, $\bb$ and $\Sigma$ does not introduce any bias so that one can use a rather crude approximation for $\mu_{\sqrt{W}}$ for $\E(\sqrt{W})$ if the true mean is not known. Note also that variable reordering needs to be performed only once before applying RQMC to the integrand $g$ in~\eqref{eq:gi} so that the cost of reordering is low compared to the overall cost of evaluating $F(\ba, \bb)$.

Our method to estimate $F(\ba,\bb)$ is summarized in Algorithm~\ref{alg:RQMC:F:a:b}.

\begin{algorithm}\label{alg:RQMC:F:a:b}
Given $\ba,\bb,\Sigma$, $\eps$, $B$, $n_0$, $\imax$, estimate $F(\ba,\bb)$ as follows:
\begin{enumerate}
\item Apply the reordering Algorithm~\ref{alg:precond} to the inputs $\ba,\bb,\Sigma$.
\item Apply Algorithm~\ref{alg:RQMC:mu} on the integrand $(g(\bu)+g(\bone-\bu))/2$ with $g$ from~\eqref{eq:gi} and reordered inputs.
\end{enumerate}
\end{algorithm}

In Section~\ref{subsec:pnvmixexamples} it is shown through a simulation study that this (rather cheap) variable reordering can yield a great variance reduction for the RQMC algorithm,
Algorithm~\ref{alg:RQMC:F:a:b}. A detailed study as to why this works so well is included in Section \ref{subsubsec:VarReorder}.

Note that parallelization of our methods, i.e., estimation of $F(\ba_i, \bb_i)$, $i=1,\dots,n$, simultaneously is difficult for two reasons: Reordering needs to be performed for each input $\ba_i,\bb_i$ separately so that Algorithm~\ref{alg:precond} needs to be called $n$ times. Furthermore, the structure of the integrand $g$ from~\eqref{eq:gi} (see also~\eqref{eq:ei}) does not allow for an efficient implementation of common random numbers as all quantile evaluations $\Phi^{-1}(\cdot)$ depend on the limits $\ba$, $\bb$ so that they cannot be recycled.

\section{Computing the (logarithmic) density}\label{sec:dnvmix}

We now turn to the task of computing the (logarithmic) density function of a normal
variance mixture.
Let us first point out that the main reason why we need to be able to evaluate the density function is for the fitting procedure, which is  likelihood-based and is explained in detail in Section \ref{sec:fitnvmix}. Now, since our goal is to be able to cover all normal variance mixtures, we cannot assume that the density function of $\bX$ is available in closed form. Indeed, a closed form for $f_{\bX}(\bx)$ exists in some cases (e.g., when $W$ is an inverse-gamma or Pareto), but not in all cases (e.g., when $W$ follows an inverse-Burr distribution, a model actually used with success in Section \ref{subsec:pnvmixexamples}). For those latter cases, an efficient approximation is needed, as there is likely to be  a repeated need for evaluating the density (or log-density) within the fitting procedure. This also means that fitting algorithms proposed for the multivariate $t$ cannot be directly adapted for the general normal variance mixture case, as they would not include functionalities able to deal with a density that does not exist in closed form. Below we propose an adaptive RQMC algorithm to deal with those cases, which is based on the ideas presented in Section \ref{sec:mcrqmc}.

From~\eqref{eq:densityX} it follows that computing the density requires the evaluation of the univariate integral
$ \mu:=f_{\bX}(\bx) = \int_0^1 h(u)\,\rd u, $
where
\begin{align}\label{eq:integrandh}
 h(u) = \frac{1}{\sqrt{ (2\pi F_W^\i(u) )^d |\Sigma|}} \exp\left(-\frac{D^2(\bx;\bmu,\Sigma)}{2F_W^\i(u) }\right),\quad u\in(0,1).
 \end{align}
To simplify notation, we write $f(\bx)$ instead of $f_{\bX}(\bx)$ whenever confusion is not possible.

For likelihood-based methods one should compute the logarithmic density (or
\emph{log-density}) rather than the density. Since $f(\bx)$ is expressed as a univariate integral over $(0,1)$, Algorithm~\ref{alg:RQMC:logmu}, that is, RQMC methods combined with a proper logarithm as described at the end of Section~\ref{sec:mcrqmc} on Page~\pageref{page:properlog}, can be applied directly to estimate $\log(\mu)=\log f(\bx)$ via RQMC. In fact, given inputs $\bx_1,\dots,\bx_N$, the log-densities $\log f(\bx_1),\dots,\log f(\bx_N)$ can be estimated simultaneously by using the same realizations of $W$, i.e., using the same $F_W^\i(u_{i,b})$ for all inputs $\bx_k$, $k=1,\dots,N$, until the precision is reached for all inputs. This procedure, i.e. estimating $\log \mu$ directly based on~\eqref{eq:integrandh} via RQMC, will be referred to as the \emph{crude procedure}.

\begin{figure}[t]
\minipage{0.32\textwidth}
  \includegraphics[width=\linewidth]{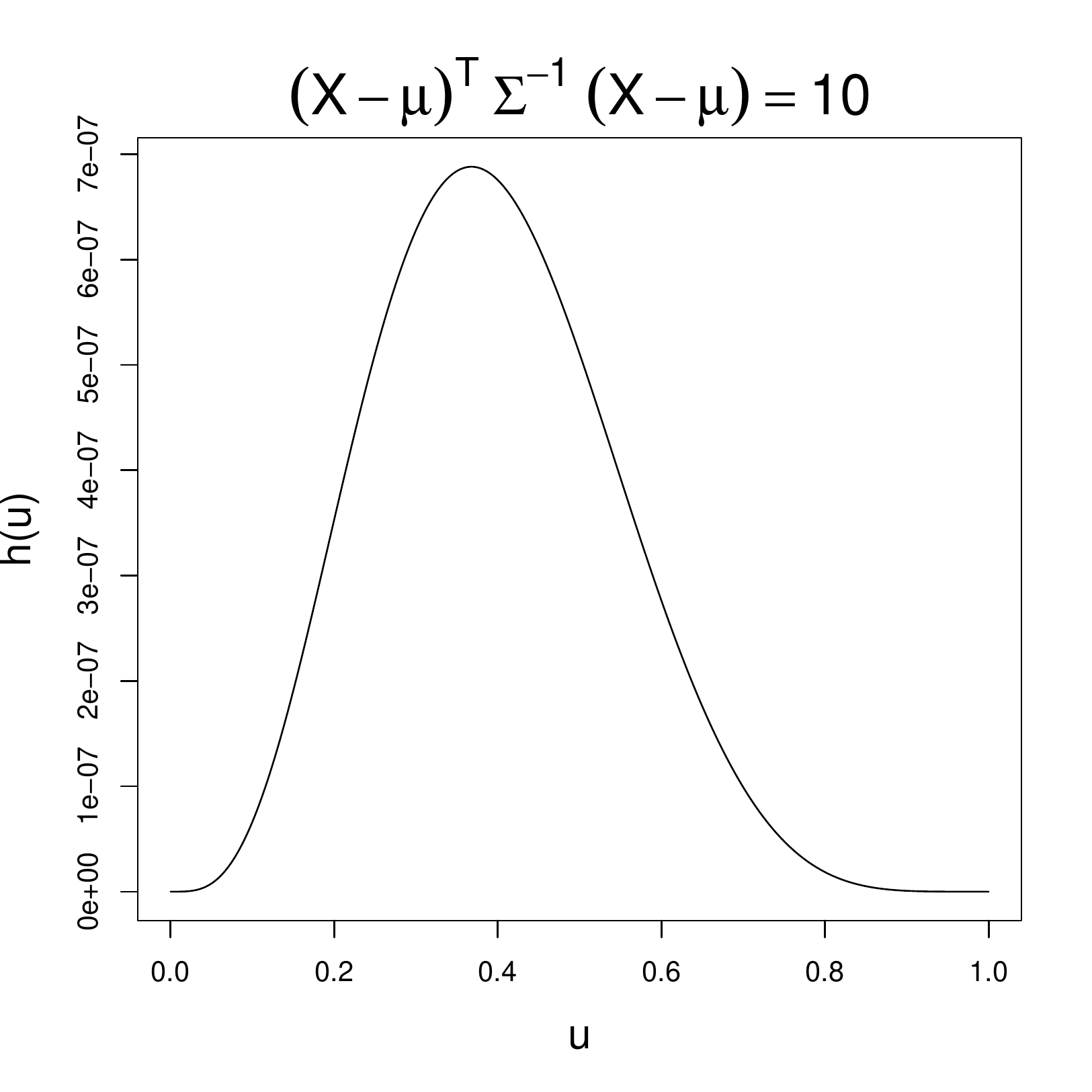}
\endminipage\hfill
\minipage{0.32\textwidth}
  \includegraphics[width=\linewidth]{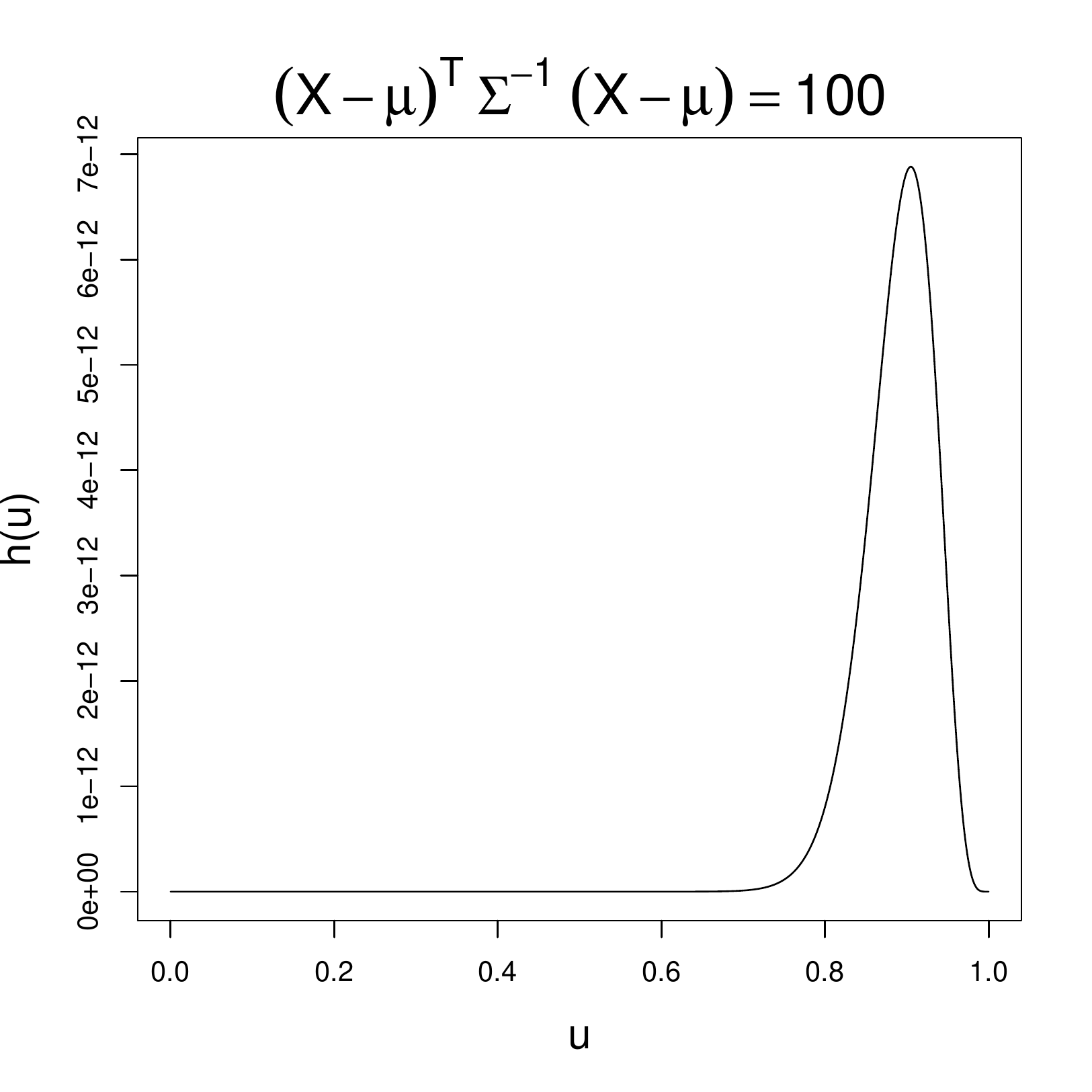}
\endminipage\hfill
\minipage{0.32\textwidth}
  \includegraphics[width=\linewidth]{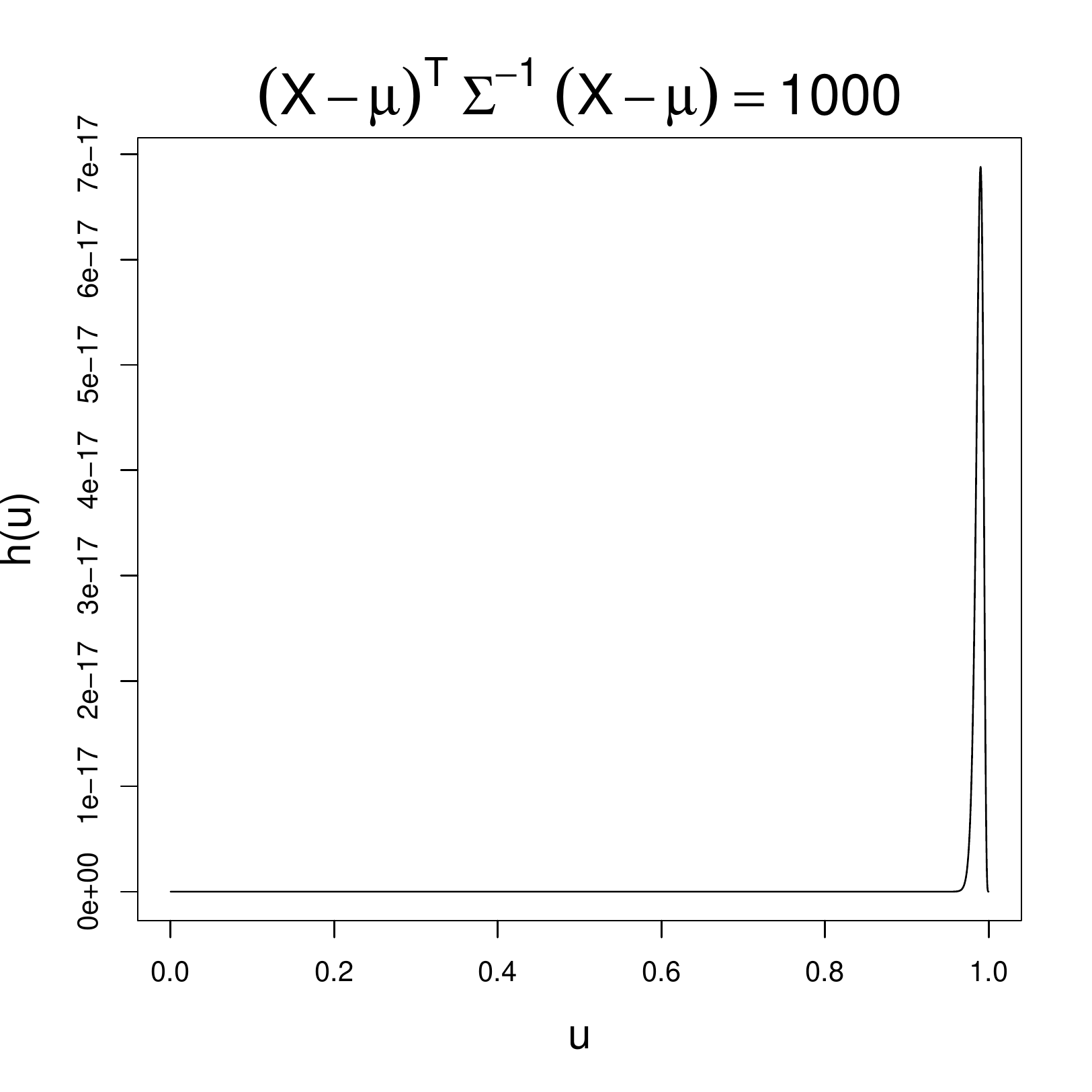}
\endminipage\hfill
\caption{Integrand $h$ for a 10-dimensional $t$ distribution with 2 degrees of freedom.}
\label{fig:density_integrand}
\end{figure}

It turns out that the crude procedure works sufficiently well for inputs $\bx$ with small to moderate Mahalanobis distances, but deteriorates for larger Mahalanobis distances. The reason is that the overall shape of the integrand $h$ is heavily influenced by $D^2(\bx; \bmu, \Sigma)$ and for large values, most of the mass is concentrated in a small domain of $(0,1)$. This is illustrated in Figure~\ref{fig:density_integrand} where the integrand $h(u)$ is plotted against $u$ in the special case where $\bX$ follows a multivariate $t$ distribution in dimension 10 with 2 degrees of freedom. For instance, in the right-most plot, most of the mass is concentrated near 1. It thus seems to be a fruitful approach to tailor the integration routine in a way so that it samples mostly in this relevant domain around the maximum, giving rise to an \emph{adaptive algorithm}. To this end, we summarize some properties of the function $h$ in the following lemma which can be shown using elementary calculus.

\begin{lemma}\label{lemma:propertiesh}
Let $W$ have a continuous distribution supported on the whole positive real line. Then, the function $h$ from Equation~\eqref{eq:integrandh} is continuous on $(0,1)$, satisfies $h(0)=h(1)=0$ and $h(u)>0$ for $u\in(0,1)$. Furthermore, the maximum
value of $h$ on $(0,1)$, i.e., $h_{\max} = \max\{h(u): u\in [0,1]\}$ is attained in the interior of $(0,1)$. The maximum is attained at
\begin{align}\label{eq:hmax}
u^* = F_W\left( \frac{D^2(\bx; \bmu, \Sigma)}{d}\right) \text{ with } h(u^*) = \left(\frac{2\pi|\Sigma|^{1/d}\cdot D^2(\bx; \bmu, \Sigma) }{d}\right)^{-d/2} \exp\left(-\frac{d}{2}\right)
\end{align}
so that $h_{\max}$ is independent of the distribution of $W$. Finally, $h$ is strictly increasing on $(0,u^*)$ and strictly decreasing on $(u^*, 1)$.
\end{lemma}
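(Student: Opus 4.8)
The plan is to substitute $w = F_W^{\leftarrow}(u)$ and analyze the resulting function of $w$ over $(0,\infty)$, then transport conclusions back to $u$ via the fact that $F_W$ is a continuous, strictly increasing bijection from $(0,\infty)$ onto $(0,1)$ under the stated hypothesis. Concretely, write $\tilde h(w) = \bigl( (2\pi w)^d |\Sigma| \bigr)^{-1/2} \exp\bigl( -D^2/(2w) \bigr)$ with $D^2 := D^2(\bx;\bmu,\Sigma) > 0$ (the case $\bx=\bmu$, where $D^2=0$, should be noted separately — there $\tilde h$ is monotone and the claimed formulas need the obvious reinterpretation, but the generic statement assumes $D^2>0$). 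Then $h(u) = \tilde h(F_W^{\leftarrow}(u))$, and since under the hypothesis $F_W^{\leftarrow}$ is continuous and strictly increasing on $(0,1)$ with $F_W^{\leftarrow}(0^+)=0$ and $F_W^{\leftarrow}(1^-)=\infty$, every qualitative property (continuity, positivity, location and value of the max, monotonicity on each side) transfers directly.

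First I would record the elementary limits: as $w\to 0^+$, the exponential $\exp(-D^2/(2w))\to 0$ faster than the polynomial $w^{-d/2}\to\infty$ blows up, so $\tilde h(w)\to 0$; as $w\to\infty$, $w^{-d/2}\to 0$ while the exponential $\to 1$, so again $\tilde h(w)\to 0$. Positivity on $(0,\infty)$ is immediate since both factors are positive. This gives $h(0)=h(1)=0$ and $h>0$ on $(0,1)$, and continuity of $h$ on $(0,1)$ follows from continuity of $\tilde h$ and of $F_W^{\leftarrow}$.

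Next I would differentiate. It is cleanest to work with $\log \tilde h(w) = -\tfrac{d}{2}\log(2\pi w) - \tfrac12\log|\Sigma| - D^2/(2w)$, whence $\frac{\rd}{\rd w}\log\tilde h(w) = -\tfrac{d}{2w} + \tfrac{D^2}{2w^2} = \tfrac{1}{2w^2}\bigl(D^2 - d w\bigr)$. Since $2w^2>0$, the sign of $\tilde h'$ equals the sign of $D^2 - dw$: positive for $w < D^2/d$, zero at $w = D^2/d$, negative for $w > D^2/d$. Hence $\tilde h$ is strictly increasing on $(0, D^2/d)$ and strictly decreasing on $(D^2/d, \infty)$, with a unique global maximum at $w^* = D^2/d$. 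Pulling back through $u = F_W(w)$, the maximizer in $u$ is $u^* = F_W(D^2/d)$, which lies in $(0,1)$ because $D^2/d \in (0,\infty)$ and $F_W$ maps $(0,\infty)$ onto $(0,1)$; $h$ is strictly increasing on $(0,u^*)$ and strictly decreasing on $(u^*,1)$. Finally, plugging $w^* = D^2/d$ into $\tilde h$ gives $h(u^*) = \bigl( (2\pi D^2/d)^d |\Sigma| \bigr)^{-1/2}\exp(-d/2)$, which one rearranges into the displayed form $\bigl( 2\pi |\Sigma|^{1/d} D^2/d \bigr)^{-d/2}\exp(-d/2)$; this expression involves only $d$, $|\Sigma|$ and $D^2$, so $h_{\max}$ does not depend on the law of $W$.

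There is really no hard part here — the lemma is "elementary calculus" as the authors say. The only points requiring a little care are: (i) justifying the boundary limits $h(0^+)=h(1^-)=0$, which hinges on $F_W^{\leftarrow}(0^+)=0$ and $F_W^{\leftarrow}(1^-)=\infty$ — these use exactly the hypothesis that $W$ is continuous with support all of $(0,\infty)$, so that $F_W$ is a strictly increasing continuous bijection $(0,\infty)\to(0,1)$ (without full support, $u^*$ could fail to be interior, e.g. if $D^2/d$ exceeds the essential sup of $W$); and (ii) keeping track that "max attained in the interior" follows from the interior critical point together with the vanishing boundary limits, via continuity on the compact $[0,1]$ (extending $h$ by $0$ at the endpoints). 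I would also remark in passing that the algebraic identity $\bigl((2\pi D^2/d)^d|\Sigma|\bigr)^{-1/2} = \bigl(2\pi|\Sigma|^{1/d}D^2/d\bigr)^{-d/2}$ is just the observation that raising the bracket to the power $-d/2$ distributes over the product and $|\Sigma|^{-1/2} = (|\Sigma|^{1/d})^{-d/2}$.
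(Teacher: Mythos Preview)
Your proof is correct and is precisely the elementary-calculus argument the paper has in mind; the paper does not actually write out a proof but simply states that the lemma ``can be shown using elementary calculus.'' Your substitution $w=F_W^\leftarrow(u)$, the log-derivative computation yielding the unique critical point $w^*=D^2/d$, and the care you take with the boundary limits and the full-support hypothesis are exactly what is needed.
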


Equation~\eqref{eq:hmax} is crucial for the adaptive algorithm we propose: The value $h_{\max}$, i.e., the height of the peak of the integrand $h$, is independent of the distribution of $W$ as long as $W$ is continuous and supported on the whole positive real line. If $W$ is continuous but has bounded support, $h_{\max}$ may need to be replaced by $h(0)$ or $h(1)$. If $W$ is discrete, the problem becomes trivial as an analytical formula for the density is available in this case.

The idea is now to apply RQMC to a relevant region around $u^*$ from~\eqref{eq:hmax}, which can be done as follows: Given a threshold $\eps_{\text{th}}$ with $0 < \eps_{\text{th}} \ll h_{\max}$, the structure of the integrand $h$ guarantees the existence of $u_l$ and $u_r$ ($l$ for ``left'' and $r$ for ``right'') with $0< u_l < u^* < u_r < 1$ so that $h(u) > \eps_{\text{th}}$ if and only if $u \in (u_l, u_r)$. For instance, take
\begin{align}\label{eq:epsth}
\eps_{\text{th}} = 10^{\log(h_{\max})/\log(10) - k_{\text{th}}}
\end{align}
with $k_{\text{th}} = 10$ so that $\eps_{\text{th}}$ is 10 orders smaller than $h_{\max}$. RQMC can then be used in the region $(u_l, u_r)$ by replacing every number $v\in(0,1)$ by $v' = u_l + (u_r - u_l) v\in(u_l, u_r)$ yielding an estimate for $\log \int_{u_l}^{u_r} h(u)\;\rd u$. For the remaining regions $(0, u_l)$ and $(u_r, 1)$ we suggest using a crude trapezoidal rule: If $\eps_{\text{th}} \ll h_{\max}$ those regions do not significantly contribute to the overall integral anyway, so a rather cheap and quick procedure is recommended here.

It remains to discuss how the numbers $u_l, u^*, u_r$ can be computed. Recall that the only information available about $W$ is its quantile function $F_W^\i$ in form of a ``black box'' so that $u^*$ from~\eqref{eq:hmax} cannot be computed directly. We suggest using a bisection algorithm to solve the equivalent equation $F_W^\i(u) = \bx\T \Sigma^{-1} \bx / d$. Starting values can be found using a small number of pilot runs. Similarly, there is no direct formula for $u_l$ and $u_r$. While those can be expressed using Lambert's $W$ function, the lack of information about $W$ does not allow a direct computation. A bisection can be used here as well. Clearly, all pilot runs and all quantile evaluations performed in the bisections should be stored so that those expensive evaluations can be re-used.

It is clear from Figure~\ref{fig:density_integrand} that the shape of the integrand heavily depends on $\bx$ through its Mahalanobis distance, and this holds true for $u_l, u^*, u_r$ as well. As such, the adaptive procedure just described does not allow for simultaneous estimation of $\log f(\bx_1),\dots,\log f(\bx_N)$ directly, as the regions to which RQMC is applied differ from one input to another one. In order to reduce run time, we suggest using the crude procedure on all inputs $\bx_1,\dots,\bx_N$ with a small number of iterations (say, $\imax = 4$) first and use the adaptive procedure only for those inputs $\bx_j$ whose error estimates did not reach the tolerance. The advantage is that only little run time is spent on estimating ``easy'' integrals. Furthermore,  if $\imax=4$, $B=15$  and the initial sample size is $n_0=128$, such pilot run gives 7680 pairs $(u, F_W^\i(u))$. These can be used to determine starting values for the bisections to find $u_l$, $u_r$ and $u^*$ and they can also be used to estimate the integral in the regions $(0, u_l)$ and $(u_r, 1)$ using a trapezoidal rules with non-equidistant knots. The following algorithm summarizes our procedure, which is implemented in the \R\ function \texttt{dnvmix(, log = TRUE)} of the \R\ package \texttt{nvmix}.

\begin{algorithm}[Adaptive RQMC Algorithm to estimate $\log f(\bx_1),\dots,\log f(\bx_N)$]\label{alg:RQMC:f:x:adaptive}
    Given $\bx_1,\dots,\bx_N$ , $\Sigma$, $\eps$, $\eps_{\text{bisec}}$, $B$, $i_{\max}$, $k_{\text{th}}$, estimate $\log f(\bx_l)$, $l=1,\dots,N$, via:
    \begin{enumerate}
        \item Apply Algorithm~\ref{alg:RQMC:logmu} with at most $\imax$ iterations on all inputs $\bx_l$, $l=1,\dots,N$.  Store all uniforms and corresponding quantiles $F_W^\i(\cdot)$ in a list, say $\mathcal{L}$.
        \item If all estimates $\mu_{\log f(\bx_l)}^{\text{RQMC}}$, $l=1,\dots,N$ meet the error tolerance $\eps$, go to Step~\ref{return}.\\
        If not, we can assume wlog (after reordering) that $\bx_s$, $s=1,\dots,N'$ with $1\leq N' \leq N$ are the inputs whose error estimates did not meet the error tolerance.
        \item For each remaining input $\bx_s$, $s=1,\dots,N'$, do the following:
        \begin{enumerate}
            \item Determine $h_{\max}$ using~\eqref{eq:hmax} and $\eps_{\text{th}}$ using~\eqref{eq:epsth}.
            \item Find maximal $u^{*,\text{l}}$ and minimal $u^{*,\text{r}}$ in the list $\mathcal{L}$ so that $F_W^\i(u^{*, \text{l}}) \leq \bx_s\T \Sigma^{-1} \bx_s / d \leq F_W^\i(u^{*, \text{r}})$ (which implies $u^{*, \text{l}} \leq u^* \leq u^{*, \text{r}}$). Use a bisection algorithm with starting values $u^{*, \text{l}}$ and $u^{*, \text{r}}$ and a tolerance of $\eps_{\text{bisec}}$ to find $u^*$. Add any additional $u$'s and $F_W^\i(u)$'s computed in the bisection to the list $\mathcal{L}$.
            \item Find the largest number $u_l^{(1)}\in\mathcal{L}$ and the smallest number $u_l^{(2)}\in\mathcal{L}$ such that $u_l^{(1)}\leq u_l^{(2)}\leq u^*$,  $h(u_l^{(1)})\leq \eps_{\text{th}}$ and $h(u_l^{(2)})\geq \eps_{\text{th}}$. Then $u_l^{(1)} \leq u_l \leq u_l^{(2)}\leq u^*$. \\
            Similarly, find the largest number $u_r^{(1)}\in\mathcal{L}$ and the smallest number $u_r^{(2)}\in\mathcal{L}$ such that $u^*\leq u_r^{(1)}\leq u_r^{(2)}$, $h(u_r^{(1)})\geq \eps_{\text{th}}$ and $h(u_r^{(2)})\leq\eps_{\text{th}}$. Then $u^*\leq u_r^{(1)} \leq u_r \leq u_r^{(2)}$.\\
            Then use a bisection to find $u_l$ (using starting values $u_l^{(1)}$ and $u_l^{(2)}$) and $u_r$ (using starting values $u_r^{(1)}$ and $u_r^{(2)}$) with a tolerance of $\eps_{\text{bisec}}$. Add any additional $u$'s and $F_W^\i(u)$'s computed in the bisection to the list $\mathcal{L}$.
            \item Approximate $\log \int_{0}^{u_l} h(u)\,\rd u$ using a trapezoidal rule with knots $u_1',\dots,u_m'$ where $u_i'$ are those $u$'s in $\mathcal{L}$ satisfying $u\leq u_l$. Call the approximation $\hat{\mu}_{(0,u_l)}(\bx_s)$.
            \item Approximate $\log \int_{u_r}^{1} h(u)\,\rd u$ using a trapezoidal rule with knots $u_1'',\dots,u_p''$ where $u_i''$ are those $u$'s in $\mathcal{L}$ satisfying $u\geq u_r$. Call the approximation $\hat{\mu}_{(u_r,1)}(\bx_s)$.
            \item Apply Algorithm~\ref{alg:RQMC:logmu} where all uniforms $v\in(0,1)$ are replaced by $v' = u_l + (u_r - u_l) v\in(u_l, u_r)$. Call the output $\hat{\log\mu}$. Then set $\hat{\log\mu}_{(u_l,u_r)} = \log(u_r - u_l) + \hat{\log\mu}$ which estimates $\log \int_{u_l}^{u_r} h(u)\,\rd u$.
            \item Combine
            $$\hat{\mu}_{\log f(\bx_s)}^{\text{RQMC}} = \LSE\left(\hat{\mu}_{(0,u_l)}(\bx_s), \hat{\mu}_{(u_l,u_r)}(\bx_s), \hat{\mu}_{(u_r,1)}(\bx_s)\right)$$
        \end{enumerate}
        \item\label{return} Return $\hat{\mu}_{\log f(\bx_l)}^{\text{RQMC}}$, $l=1,\dots,N$
    \end{enumerate}
\end{algorithm}

\begin{remark}\label{remark:htilde}
Algorithm~\ref{alg:RQMC:f:x:adaptive} can be applied to estimate a slightly larger class of integrals. Let
$$ \mu = \int_0^\infty c w^{-k} \exp\left(m/w\right)\rd F_W(w)=\int_0^\infty \tilde{h}(u)\,\rd u;$$
here, $k,m>0$ are constant and $\tilde{h}(u)= c F_W^\i(u)^{-k} \exp\left(m/F_W^\i(u)\right)$ for $u\in(0,1)$.
A result similar to Lemma~\ref{lemma:propertiesh} applies to $\tilde{h}$ (replace $d$ by $k$ in the formula for $u^*$ in~\eqref{eq:hmax}). Thus, after only slight adjustments to Algorithm~\ref{alg:RQMC:f:x:adaptive}, the latter can be used to estimate $\log(\mu)$ efficiently. This will be useful in Section~\ref{sec:fitnvmix}.
\end{remark}

\section{Fitting multivariate normal variance mixtures}\label{sec:fitnvmix}

In this section, we derive an expectation-maximization (EM)-like algorithm
whose distinctive feature is that it can estimate the parameters of any given normal variance mixture. Its design is inspired by the ECME algorithm used for fitting multivariate $t$ models, but is appropriately modified to allow for a general mixing variable $W$. This requirement means that our approach must be able to handle the case where the density $f_{\bX}(\bx)$ may not exist in closed form, and must therefore be approximated. The fact that ECME-type algorithms break the optimization part into two steps---and thus handle the parameters $\bnu$ of $W$'s distribution separately from $\bmu$ and $\Sigma$---meshes very well with our assumption that all we may know about $W$ is through access to a ``black-box'' function for its quantile function. That is, since the step to find $\bnu$ is done separately, we can easily make it adaptable to whether or not $W$'s distribution is such that $f_{\bX}(\bx)$ exists in closed form. More precisely, in our \R\ implementation, we assume the user either provides a ``black-box'' function for the quantile function of $W$---in which case $f_{\bX}(\bx)$ is approximated using the algorithm described in the previous section---or specifies that $W$ is constant, inverse-gamma, or Pareto, in which case $f_{\bX}(\bx)$ is evaluated exactly. Examples provided in Section \ref{subsec:pnvmixexamples} demonstrate that the versatility of our algorithm, which we now explain, does not come at the cost of decreased accuracy.

Assume $\bX_1,\dots,\bX_n\isim\NVM_d(\bmu, \Sigma, F_W)$ with unknown location vector $\bmu$ and unknown scale matrix $\Sigma$ where $F_W$ has quantile function $F_W^\i(u; \bm{\nu})$ with unknown parameter vector $\bm{\nu}$ of length $p_{\bnu}$. For notational convenience, let  $\btheta = (\bm{\nu}, \bmu , \Sigma^{-1})$ and denote by $\btheta_k$ the current value of $\btheta$ in iteration $k$.

Before deriving our algorithm, we need some notation. The original log-likelihood is given by
\begin{align*}
\log L^{\text{org}}(\bnu, \bmu, \Sigma; \bX_1,\dots,\bX_n) &= \sum_{i=1}^n \log f_{\bX}(\bX_i;\bnu, \bmu, \Sigma)
\end{align*}
and the complete log-likelihood $\log L^{\text{c}}$ can be written as
\begin{align}\label{eq:logLc}
    \log L^{\text{c}}(\btheta; \bX_1,\dots,\bX_n,W_1,\dots,W_n) &= \sum_{i=1}^n \log f_{\bX,W}(\bX_i,W_i;\btheta)\nonumber \\
    &= \sum_{i=1}^n \log f_{\bX|W}(\bX_i\,|\,W_i;\bmu,\Sigma) + \sum_{i=1}^n \log f_{W}(W_i;\bnu),
\end{align}
where $W_1,\dots,W_n$ are (unobserved) iid copies of $W$. Note that the first sum contains the log-likelihood contributions of $\N_d(\bmu,W_i\Sigma)$ and thus is almost the log-likelihood of a normal distribution apart from potentially different $W_i$ (expected, for example, if $W$ is continuously distributed on the whole positive real line).
The expected value of the complete log-likelihood given the (observed) data $\bX_1,\dots,\bX_n$ and current estimate $\btheta_k$ is then
\begin{align}\label{eq:Qtheta}
Q(\btheta;\btheta_k) =  \E(\log L^{\text{c}}(\btheta; \bX_1,\dots,\bX_n,W_1,\dots,W_n)\,|\,\bX_1,\dots,\bX_n;\btheta_k).
\end{align}

As mentioned earlier, rather than trying to maximize $Q(\btheta;\btheta_k)$
over $\btheta$ as a classical EM algorithm would do, we instead
employ an ECME algorithm as developed in \cite{liurubin1994}; see also references therein for more details on variations of the EM algorithm.
In this way, and as explained below, optimization is broken into two steps, which respectively deal with $(\bmu,\Sigma)$ and $\bnu$.

The basic structure of our algorithm is as follows:

\begin{algorithm}[ECME Algorithm for fitting normal variance mixtures: Main idea]\label{alg:fitnvmixmainidea}
    Given iid data $\bX_1,\dots,\bX_n$, estimate $\bmu, \Sigma, \bnu$ via:
    \begin{enumerate}
        \item Obtain an initial estimate $\btheta_0=(\bnu_0, \bmu_0, \Sigma_0^{-1})$
        \item For $k=1,\dots$, repeat until convergence
        \begin{enumerate}
            \item\label{step:updatemusigma1} Update $\bmu_k$ and $\Sigma_k$ by maximizing $Q(\btheta;\btheta_k)$ with respect to $\bmu$ and $\Sigma$ with $\bnu = \bnu_{k-1}$ held fixed.
            \item Update $\bnu_k$ by maximizing $\log L^{\text{org}}(\bnu, \bmu_k, \Sigma_k; \bX_1,\dots,\bX_n)$ with respect to $\bnu$.
        \end{enumerate}
    \end{enumerate}
\end{algorithm}

That is, in the $k$'th iteration, we first update $\bmu$ and $\Sigma$ by maximizing the expected complete log-likelihood conditional on the observed data and then update $\bnu$ by maximizing the original likelihood with respect to $\bnu$ with $\bmu$ and $\Sigma$ set to their current estimates. This is an ECME algorithm as we either maximize the expected complete log-likelihood or the original likelihood; see also \cite{liurubin1995} for a discussion of an ECME algorithm for the multivariate $t$ distribution.

 Let
$$ \xi_{ki}=\E(\log W_i\,|\,\bX_i;\btheta_k)\quad\text{and}\quad \delta_{ki} =\E(1/W_i\,|\,\bX_i;\btheta_k),\quad i=1,\dots,n.$$
We calculate $Q(\btheta;\btheta_k)$ from~\eqref{eq:Qtheta} in the following lemma:

\begin{lemma}\label{lemma:Qtheta}
$Q(\btheta;\btheta_k)$ from~\eqref{eq:Qtheta} allows for the decomposition $Q(\btheta;\btheta_k) = Q_{\bX|W}(\bmu,\Sigma^{-1}; \btheta_k) + Q_W(\bnu;\btheta_k)$
where
\begin{align*}
Q_{\bX|W}(\bmu,\Sigma^{-1}; \btheta_k)&=-\frac{1}{2}\biggl(nd\log(2\pi)-n\log(\det(\Sigma^{-1}))+\sum_{i=1}^n(D^2(\bX_i;\bmu,\Sigma)\delta_{ki}+d\xi_{ki})\biggr),\\
Q_W(\bnu;\btheta_k) &=  \sum_{i=1}^n \E(\log f_{W}(W_i;\bnu)\,|\,\bX_i;\btheta_k).
\end{align*}
\end{lemma}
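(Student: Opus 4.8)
The plan is to start from the definition of the complete log-likelihood in~\eqref{eq:logLc}, then push the conditional expectation defining $Q(\btheta;\btheta_k)$ in~\eqref{eq:Qtheta} through the two sums separately by linearity. First I would make the conditional density explicit: since $(\bX_i\,|\,W_i)\sim\N_d(\bmu,W_i\Sigma)$,
\[
\log f_{\bX|W}(\bX_i\,|\,W_i;\bmu,\Sigma)=-\frac{d}{2}\log(2\pi)-\frac{d}{2}\log W_i+\frac{1}{2}\log\det(\Sigma^{-1})-\frac{D^2(\bX_i;\bmu,\Sigma)}{2W_i},
\]
where I used $\log|\Sigma|=-\log\det(\Sigma^{-1})$. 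Summing over $i=1,\dots,n$ collects the deterministic terms into $-\tfrac{nd}{2}\log(2\pi)+\tfrac{n}{2}\log\det(\Sigma^{-1})$ and leaves the two random sums $-\tfrac{d}{2}\sum_{i=1}^n\log W_i$ and $-\tfrac{1}{2}\sum_{i=1}^n D^2(\bX_i;\bmu,\Sigma)/W_i$.

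Next I would apply $\E(\,\cdot\,|\,\bX_1,\dots,\bX_n;\btheta_k)$ term by term. Because the pairs $(\bX_i,W_i)$ are iid, the conditional law of $W_i$ given the whole sample $\bX_1,\dots,\bX_n$ coincides with its conditional law given $\bX_i$ alone, so each summand depending only on $(\bX_i,W_i)$ reduces to an expectation conditional on $\bX_i$. The factor $D^2(\bX_i;\bmu,\Sigma)$ is measurable with respect to the conditioning $\sigma$-field — it is a function of the observed $\bX_i$ and of the free arguments $\bmu,\Sigma$, not of $\btheta_k$ — hence it pulls out of the expectation, giving $\E(D^2(\bX_i;\bmu,\Sigma)/W_i\,|\,\bX_i;\btheta_k)=D^2(\bX_i;\bmu,\Sigma)\,\delta_{ki}$ and $\E(\log W_i\,|\,\bX_i;\btheta_k)=\xi_{ki}$. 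Substituting and regrouping the constants yields exactly $Q_{\bX|W}(\bmu,\Sigma^{-1};\btheta_k)$. The identical conditioning argument applied to the second sum in~\eqref{eq:logLc} gives $\sum_{i=1}^n\E(\log f_W(W_i;\bnu)\,|\,\bX_i;\btheta_k)=Q_W(\bnu;\btheta_k)$, and adding the two pieces by linearity of conditional expectation completes the proof.

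The computation is routine; the only point requiring a little care is the reduction from conditioning on the entire sample to conditioning on a single observation, which rests on the independence of the $(\bX_i,W_i)$, together with the bookkeeping that $D^2(\bX_i;\bmu,\Sigma)$ is treated as a (data-and-argument-dependent) constant under the expectation operator rather than as something evaluated at $\btheta_k$. A minor caveat is that $f_W$ should be read as a density with respect to a suitable dominating measure when $W$ is not absolutely continuous; this does not affect the algebra. I do not expect any genuine obstacle here — it is essentially an exact-computation lemma rather than one hiding an inequality or a limiting argument.
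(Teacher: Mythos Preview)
Your proof is correct and follows essentially the same route as the paper: split the complete log-likelihood into the $\bX|W$ and $W$ parts, use the iid structure to reduce conditioning on the full sample to conditioning on $\bX_i$, and then expand the multivariate normal log-density to identify the $\delta_{ki}$ and $\xi_{ki}$ terms. Your write-up is in fact slightly more explicit than the paper's about why the reduction to a single observation is valid and why $D^2(\bX_i;\bmu,\Sigma)$ pulls out of the conditional expectation.
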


\begin{proof}
Starting from~\eqref{eq:Qtheta} and using~\eqref{eq:logLc} we obtain
{\allowdisplaybreaks
 \begin{align*}
    Q(\btheta;\btheta_k)&= \E(\log L^{\text{c}}(\btheta; \bX_1,\dots,\bX_n,W_1,\dots,W_n)\,|\,\bX_1,\dots,\bX_n;\btheta_k)\\
    &= \sum_{i=1}^n \E(\log f_{\bX|W}(\bX_i\,|\,W_i;\bmu,\Sigma)\,|\,\bX_1,\dots,\bX_n;\btheta_k)\\
    &\phantom{=+} + \sum_{i=1}^n \E(\log f_{W}(W_i;\bnu)\,|\,\bX_1,\dots,\bX_n;\btheta_k)\\
    &= \sum_{i=1}^n \E(\log f_{\bX|W}(\bX_i\,|\,W_i;\bmu,\Sigma)\,|\,\bX_i;\btheta_k) + \sum_{i=1}^n \E(\log f_{W}(W_i;\bnu)\,|\,\bX_i;\btheta_k)\\
    &= Q_{\bX|W}(\bmu,\Sigma^{-1};\btheta_k) + Q_W(\bnu;\btheta_k)
 \end{align*}}%
where the first expectation is taken with respect to $W_1,\dots,W_n$ for given $\bX_1,\dots,\bX_n$ and $\btheta_k$, and the last line of the displayed equation is understood as the definition of $Q_{\bX|W}$ and $Q_W$. Using that $\bX\mid W \sim \N_d(\bmu, W\Sigma)$,
it is easily verified that
\begin{align*}
    Q_{\bX|W}(\bmu,\Sigma^{-1};\btheta_k) &= \sum_{i=1}^n \E(\log f_{\bX|W}(\bX_i\,|\,W_i;\bmu,\Sigma)\,|\,\bX_i;\btheta_k)\\
    &= -\frac{1}{2}\biggl(nd\log(2\pi)-n\log(\det(\Sigma^{-1}))+\sum_{i=1}^n(D^2(\bX_i;\bmu,\Sigma)\delta_{ki}+d\xi_{ki})\biggr).
 \end{align*}
\end{proof}

\noindent
With $Q(\btheta; \btheta_k)$ at hand, we show in the following lemma how $\bmu$ and $\Sigma$ are updated in Step~\ref{step:updatemusigma1} of Algorithm~\ref{alg:fitnvmixmainidea}.
\begin{lemma}\label{lemma:musigmaupdate}
Maximizing $Q(\btheta;\btheta_k)$ with respect to $\bmu$ and $\Sigma$ in Step~\ref{step:updatemusigma1} of Algorithm~\ref{alg:fitnvmixmainidea} gives the next iterates
 \begin{align}\label{eq:musigmaupdate}
    \bmu_{k+1} = \frac{\sum_{i=1}^n\delta_{ki}\bX_i}{\sum_{i=1}^n\delta_{ki}}\quad\text{and}\quad \Sigma_{k+1} = \frac{1}{n}\sum_{i=1}^n\delta_{ki}(\bX_i-\bmu_k)(\bX_i-\bmu_k)\T.
\end{align}
\end{lemma}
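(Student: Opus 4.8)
The plan is to maximize the function $Q_{\bX|W}(\bmu,\Sigma^{-1};\btheta_k)$ from Lemma~\ref{lemma:Qtheta} over $\bmu$ and $\Sigma$, since by the decomposition in that lemma the term $Q_W(\bnu;\btheta_k)$ does not depend on $\bmu$ or $\Sigma$. Writing $\Omega = \Sigma^{-1}$, the objective (up to the constant $-\tfrac{1}{2}nd\log(2\pi)$ and the $\bmu,\Sigma$-free term $-\tfrac{1}{2}d\sum_i\xi_{ki}$) is
\begin{align*}
\tfrac{1}{2}\Bigl(n\log\det(\Omega) - \sum_{i=1}^n \delta_{ki}\,(\bX_i-\bmu)\T\Omega(\bX_i-\bmu)\Bigr).
\end{align*}
First I would differentiate with respect to $\bmu$: the gradient is $\Omega\sum_{i=1}^n\delta_{ki}(\bX_i-\bmu)$, and since $\Omega$ is positive definite, setting this to zero yields the weighted mean $\bmu_{k+1} = \bigl(\sum_i\delta_{ki}\bX_i\bigr)/\bigl(\sum_i\delta_{ki}\bigr)$, exactly as claimed. (Note the $\delta_{ki}>0$ since $W_i>0$, so the denominator is nonzero.)

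Next I would hold $\bmu$ fixed and optimize over $\Omega$. Using the standard matrix-calculus identities $\partial\log\det(\Omega)/\partial\Omega = \Omega^{-1}$ and $\partial\,\text{tr}(\Omega M)/\partial\Omega = M\T$, rewrite the quadratic form as $\text{tr}\bigl(\Omega \sum_i \delta_{ki}(\bX_i-\bmu)(\bX_i-\bmu)\T\bigr)$ and differentiate: the stationarity condition is $n\Omega^{-1} = \sum_i\delta_{ki}(\bX_i-\bmu)(\bX_i-\bmu)\T$, i.e. $\Sigma = \tfrac{1}{n}\sum_i\delta_{ki}(\bX_i-\bmu)(\bX_i-\bmu)\T$. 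Alternatively, to avoid matrix derivatives one can invoke the well-known fact that for a symmetric positive definite $S$, the map $\Omega\mapsto \log\det\Omega - \text{tr}(\Omega S)$ is maximized at $\Omega = S^{-1}$ (this follows from concavity of $\log\det$ and a one-line computation, or from the Gaussian MLE characterization). Setting $S = \tfrac{1}{n}\sum_i\delta_{ki}(\bX_i-\bmu)(\bX_i-\bmu)\T$ gives the same answer.

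The one genuinely delicate point — and the reason the claimed formulas have $\bmu_k$ rather than $\bmu_{k+1}$ in the update for $\Sigma_{k+1}$ — is the order in which the two coordinate updates are performed. Since this is an ECM-type (coordinatewise conditional maximization) step rather than a joint maximization, $\Sigma$ is updated using the value of $\bmu$ available at the time, namely $\bmu_k$ (the current iterate), giving $\Sigma_{k+1} = \tfrac{1}{n}\sum_i\delta_{ki}(\bX_i-\bmu_k)(\bX_i-\bmu_k)\T$; I would state this convention explicitly. I expect the main obstacle to be purely expository: getting the matrix-derivative bookkeeping right (symmetry of $\Omega$, the factor of $n$, treating $\bmu$ as fixed when differentiating in $\Omega$) and being careful about which iterate of $\bmu$ enters the $\Sigma$ update, rather than any conceptual difficulty. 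I would close by noting that positive definiteness of the resulting $\Sigma_{k+1}$ is inherited from the weights $\delta_{ki}>0$ together with $n\ge d$ in general position, so the iteration is well-defined.
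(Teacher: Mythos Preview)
Your proposal is correct and follows essentially the same route as the paper: invoke the decomposition from Lemma~\ref{lemma:Qtheta} so that only $Q_{\bX|W}$ matters, then solve the first-order conditions in $\bmu$ and in $\Sigma^{-1}$ using the standard matrix-calculus identities. Your trace reformulation and the alternative appeal to the maximizer of $\Omega\mapsto\log\det\Omega-\operatorname{tr}(\Omega S)$ are equivalent repackagings of the same computation the paper does directly via $\partial\log\det(\Sigma^{-1})/\partial\Sigma^{-1}=\Sigma$ and $\partial D^2/\partial\Sigma^{-1}=(\bX_i-\bmu)(\bX_i-\bmu)\T$; your added remarks on the $\bmu_k$ versus $\bmu_{k+1}$ convention and on positive definiteness go beyond what the paper's own proof spells out.
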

\begin{proof}
By Lemma~\ref{lemma:Qtheta}, $Q(\btheta;\btheta_k) =  Q_{\bX|W}(\bmu,\Sigma^{-1};\btheta_k)+ Q_W(\bnu;\btheta_k)$ and $\bmu$ and $\Sigma$ do not appear in $Q_W(\bnu;\btheta_k)$ so that we only need to maximize $ Q_{\bX|W}(\bmu,\Sigma^{-1};\btheta_k)$. \\
The necessary conditions are $\frac{\partial}{\partial \bmu} Q_{\bX|W}(\bmu,\Sigma^{-1};\btheta_k) = 0$ and $\frac{\partial}{\partial \Sigma^{-1}} Q_{\bX|W}(\bmu,\Sigma^{-1};\btheta_k) = 0$. Using $\frac{\partial}{\partial \bmu} D^2(\bX_i; \bmu, \Sigma) = - 2\Sigma^{-1}(\bX_i-\bmu)$ one obtains that $\frac{\partial}{\partial \bmu}  Q_{\bX|W}(\bmu,\Sigma^{-1};\btheta_k) = 0$ if and only if $\sum_{i=1}^n \delta_{ki} \Sigma^{-1}(\bX_i-\bmu)=0$. Solving for $\bmu$ gives $\bmu_{k+1}$ as given in the lemma.%
For full rank $\Sigma$, it holds that $\frac{\partial}{\partial \Sigma^{-1}} \log\det(\Sigma^{-1}) = \Sigma$. Since $\frac{\partial}{\partial \Sigma^{-1}} D^2(\bX_i; \bmu, \Sigma) = (\bX_i-\bmu)(\bX_i-\bmu)\T$ one gets $\frac{\partial}{\partial \Sigma^{-1}} Q_{\bX|W}(\bmu,\Sigma^{-1};\btheta_k) = 0$ if and only if $n\Sigma - \sum_{i=1}^n\delta_{ki}(\bX_i-\bmu)(\bX_i-\bmu)\T=0$ which, after solving for $\Sigma$, gives the formula for $\Sigma_{k+1}$ as given in the statement.
\end{proof}

Lemma~\ref{lemma:musigmaupdate} indicates that we need to approximate the weights $\delta_{ki}$, $i=1,\dots,n$, in Step~\ref{step:updatemusigma1} of Algorithm~\ref{alg:fitnvmixmainidea}. Note that
\begin{align*}
        \rd F_{W|\bX}(w\,|\,\bx) & = \frac{f_{\bX|W}(\bx\,|\,w)\, \rd F_W(w)}{f_{\bX}(\bx)}= \frac{ \phi(\bx; \bmu, w\Sigma)}{f_{\bX}(\bx)}\,\rd F_W(w),\quad w > 0,
\end{align*}
where $\phi(\bx; \bmu, \Sigma)$ denotes the density of $\N_d(\bmu, \Sigma)$ so that
\begin{align*}
    \delta_{ki}&=\E\left(\frac{1}{W_i}\,\middle\vert\ \bX_i;\btheta_k\right)
    =\int_0^\infty \frac{1}{w}\,\rd F_{W|\bX}(w\,|\,\bX_i)\\
     &=\frac{1}{f_{\bX}(\bX_i; \bmu_k, \Sigma_k, \bnu_k)}\int_0^\infty \frac{\phi(\bX_i; \bmu_k, w\Sigma_k)}{w}\, \rd F_W(w; \bnu_k).
\end{align*}
This yields
\begin{align}
    \log(\delta_{ki}) &= \log\left( \int_0^\infty \frac{\phi(\bX_i; \bmu_k, w\Sigma_k)}{w}\, \rd F_W(w; \bnu_k)\right) - \log f_{\bX}(\bX_i;  \bmu_k, \Sigma_k, \bnu_k)\nonumber \\
    &= \log\left(\int_0^1 \frac{1}{\sqrt{ (2\pi)^d F_W^\i(u; \bnu_k)^{d+2} |\Sigma_k|}} \exp\left(-\frac{D^2(\bX_i;\bmu_k,\Sigma_k)}{2F_W^\i(u; \bnu_k) }\right)\rd u\right) \nonumber\\
    &\phantom{=} -  \log\left(\int_0^1 \frac{1}{\sqrt{ (2\pi)^d F_W^\i(u; \bnu_k)^{d} |\Sigma_k|}} \exp\left(-\frac{D^2(\bX_i;\bmu_k,\Sigma_k)}{2F_W^\i(u; \bnu_k) }\right)\rd u\right). \label{eq:logdeltaki}
\end{align}
Estimation of the latter integral (corresponding to $\log f_{\bX}(\bx)$) was discussed in Algorithm~\ref{alg:RQMC:f:x:adaptive}; the former integral differs from the latter only by a factor of $F_W^\i(u)^{-1}$, and can be estimated similarly; see Remark~\ref{remark:htilde} for details.

Summarizing, the $k$'th iteration of the algorithm consists of approximating the weights $\delta_{ki}$, $i=1,\dots,n$ with $\bnu=\bnu_k$ held fixed (which are then used to update $\bmu$ and $\Sigma$ as in \eqref{eq:musigmaupdate}) and then updating $\bnu$ by maximizing the original likelihood $\log L^{\text{org}}(\btheta; \bX_1,\dots,\bX_n)$ as a function of $\bnu$ with $\bmu$ and $\Sigma$ set to their current estimates, i.e., we set
\begin{align}\label{eq:nuupdate}
\bnu_{k+1} = \argmax\limits_{\bnu} \log L^{\text{org}}(\bnu, \bmu_{k+1}, \Sigma_{k+1}; \bX_1,\dots,\bX_n)
\end{align}
and solve this $p_{\bnu}$-dimensional optimization problem numerically. This optimization problem is the same optimization problem one would solve if $\bmu$ and $\Sigma$ were known (and given by $\bmu_{k+1}$ and $\Sigma_{k+1}$) and is a classical ingredient in ECME algorithms; for more details on rates of convergence of the proposed ECME scheme, see \cite[Section~4]{liurubin1994}. Note that the dimension $p_{\bnu}$ of $\bnu$ is typically small so that this optimization problem is also numerically feasible. In our implementation, we use the \R\ optimizer \texttt{optim()} which by default only relies on function evaluations and works for non-differentiable functions: Derivative-based methods can, due to small estimation errors in the likelihood function, fail to detect a global optimum.

This step is the most costly one as it involves multiple estimation of the likelihood of the data using Algorithm~\ref{alg:RQMC:f:x:adaptive}: Each call to the likelihood function requires the approximation of $n$ integrals. It turns out that estimating the weights $\delta_{ki}$ is faster so that it seems to be fruitful to first update $\bmu$ and $\Sigma$ until convergence (with $\bnu=\bnu_k$ held fixed) and then update $\bnu$. In fact, this can be done efficiently: The weights $\delta_{ki}$ depend on $\bX_i$, $\bmu_k$ and $\Sigma_k$ only through the Mahalanobis distances $D^2(\bX_i;\bmu_k,\Sigma_k)$. Once $\bmu_k$ and $\Sigma_k$ are updated to, say, $\bmu_k'$ and $\Sigma_k'$, (some of) the new weights $\delta_{ki}'$ for the new Mahalanobis distances $D^2(\bX_i;\bmu'_k,\Sigma_k')$ can be obtained by interpolating the already calculated weights $\delta_{ki}$ corresponding to the (old) Mahalanobis distances $D^2(\bX_i;\bmu_k,\Sigma_k)$.

It remains to discuss how a starting value $\btheta_0$ can be found. We suggest using $\bmu_0 = \bar{\bX}_n$, the sample mean vector, as an unbiased estimator for $\bmu$. Denote by $S_n$ the sample covariance matrix (Wishart matrix) of $\bX_1,\dots,\bX_n$. Since $S_n$ is unbiased for $\cov(\bX)$ it follows that $\E(S_n) = \E(W) \Sigma$. The idea is now to maximize the likelihood given $\bmu=\bmu_0$ and given $\Sigma = c\cdot S_n$ with respect to $\bnu$ and $c$ (restricted to $c>0$) which is a $(p_{\bnu}+1)$ dimensional optimization problem. That is, we find
\begin{align}\label{eq:startingvalue}
(\bm{\nu^*}, c^*) = \argmax\limits_{\bnu, c>0} L^{\text{org}}(\bnu, \bmu_0, c S_n; \bX_1,\dots,\bX_n)
\end{align}
numerically (again via \R's \texttt{optim()}) and set $\bnu_0 = \bm{\nu^*}$ and $\Sigma_0 = c^* S_n$ which is just a multiple of the Wishart matrix. As this step is merely needed to obtain a starting value for $\bnu$ and $\Sigma$, this optimization can be done over a subset of the sample $\{\bX_1,\dots,\bX_n\}$ to save run time.

The complete procedure is summarized in Algorithm~\ref{alg:fitnvmix}. As convergence criterion we suggest stopping once the maximal relative difference in parameter estimates is smaller than a given threshold. We define the maximal relative difference by
$$ d(\bnu_k, \bnu_{k+1}) = \max_{i=1,\dots,p_{\bnu}} \frac{ | \bnu_{k,i}-\bnu_{k+1,i} | }{ | \bnu_{k,i} |}, \quad \bnu_k=(\bnu_{k,1},\dots,\bnu_{k,p_{\bnu}})$$
and similarly for $\bmu$ and $\Sigma$.

\begin{algorithm}[ECME algorithm for fitting normal variance mixtures]\label{alg:fitnvmix}
    Given iid input data $\bX_1,\dots,\bX_n$ and convergence criteria $\eps_{\bmu}$, $\eps_\Sigma$ and $\eps_{\bnu}$, estimate $\bmu, \Sigma, \bnu$ via:
    \begin{enumerate}
        \item\label{step:startingvalue} \underline{\emph{Starting value.}}\\
         Set $\bmu_0 = \bar{\bX}_n$ and solve the optimization problem~\eqref{eq:startingvalue} numerically to obtain $\bm{\nu^*}$ and $c^*$. Set $\bnu_0 = \bm{\nu^*}$ and $\Sigma_0 = c^* S_n$.
        \item\label{step:ecmeiteration} \underline{\emph{ECME iteration.}} \\
        For $k=0,1,\dots$, do:
        \begin{enumerate}
            \item\label{step:updatemusigma} \emph{Update $\bmu$ and $\Sigma$.}\\
            Set $\bmu_k^{(1)} = \bmu_k$ and $\Sigma_k^{(1)} = \Sigma_k$. \\
            For $l=1,\dots$, do:
            \begin{enumerate}
                \item\label{step:deltaki} Estimate new weights $\delta_{ki}^{(l+1)} = \E(1/W_i\,|\,\bX_i; \bmu_k^{(l)}, \Sigma_k^{(l)}, \bnu_k)$, $i=1,\dots,n$ using \eqref{eq:logdeltaki} and Algorithm~\ref{alg:RQMC:f:x:adaptive}.
                \item Calculate the new iterates $\bmu_{k}^{(l+1)}$ and $\Sigma_{k}^{(l+1)}$ using \eqref{eq:musigmaupdate} with weights $\delta_{ki}^{(l+1)}$, $i=1,\dots,n$.
                \item If $d(\bmu_k^{(l)}, \bmu_k^{(l+1)})<\eps_{\bmu}$ and $d(\Sigma_k^{(l)}, \Sigma_k^{(l+1)})<\eps_{\Sigma}$, set $\bmu_{k+1}=\bmu_k^{(l+1)}$, $\Sigma_{k+1}=\Sigma_k^{(l+1)}$ and go to Step~\ref{step:updatenu}.
            \end{enumerate}
            \item\label{step:updatenu} \emph{Update $\bnu$.}\\
            Numerically solve the optimization problem~\eqref{eq:nuupdate} to obtain $\bnu_{k+1}$.
            \item If $d(\bnu_k, \bnu_{k+1})<\eps_{\bnu}$, return the MLEs $\bmu^*=\bmu_{k+1}$, $\Sigma^*=\Sigma_{k+1}$ and $\bnu^*=\bnu_{k+1}$.
        \end{enumerate}
    \end{enumerate}
\end{algorithm}

Algorithm~\ref{alg:fitnvmix} is implemented in the function \texttt{fitnvmix()} of our \R\ package \texttt{nvmix}. The mixing variable is specified by providing a function to the argument \texttt{qmix}.
In the special case where $W$ follows an inverse-gamma or Pareto distribution, the density function is
known in closed form which is used by \texttt{fitnvmix()} when called with argument \texttt{qmix = "inverse.gamma"} or \texttt{qmix = "pareto"}.

\section{Numerical Examples}\label{sec:numericalexamples}

In this section we provide a careful numerical analysis of all algorithms presented. The first part discusses the type of mixing distributions used; the second, third and fourth part detail numerical examples for estimating the distribution function using Algorithm~\ref{alg:RQMC:F:a:b} with variable reordering as in Algorithm~\ref{alg:precond}, estimating the log-density function using Algorithm~\ref{alg:RQMC:f:x:adaptive}, and estimating parameters $\nu$, $\bmu$ and $\Sigma$ given a random sample using Algorithm~\ref{alg:fitnvmix}, respectively. The last part provides an application of our methods to a multivariate financial data set.

\subsection{Test Distributions}

For our numerical examples, we consider two distributions for the mixing variable~$W$, an inverse-gamma distribution (so that $\bX$ is multivariate $t$) and a Pareto distribution.

\paragraph{Inverse-gamma mixture} Here $W$ follows an inverse-gamma distribution with shape and scale parameter $\nu / 2$. The resulting distribution is the multivariate $t$ distribution, $\bX\sim \text{MVT}_d(\nu,\bmu,\Sigma)$ with positive degrees of freedom $\nu$; see, for instance, \cite[Chapter~1]{kotznadarajah2004}. Note that if $\nu>1$, $\E(\bX)=\bmu$ and if $\nu>2$, $\cov(\bX)=\frac{\nu}{\nu-2}\Sigma$. The multivariate $t$ distribution has the density
\begin{align}\label{eq:dens:mvt}
 f_{\bX}(\bx) = \frac{ \Gamma( (\nu+d)/2 ) }{ \Gamma(\nu/2) \sqrt{ (\nu\pi)^d |\Sigma|}}\left(1+D^2(\bx; \bmu, \Sigma)/\nu\right)^{-\frac{\nu+d}{2}},\;\;\; \bx\in\mathbb{R}^d.
\end{align}
For the ECME procedure it is useful to calculate the weight $\E(1/W \mid \bX)$. Since
\begin{align*}
f_{W\mid \bX}(w\mid \bx) &\propto f_{\bX\mid W}(\bx\mid w) f_W(w)\propto w^{-\frac{d+\nu}{2}-1} \exp\left(- \frac{ (D^2(\bx; \bmu, \Sigma)+\nu)/2}{w}\right),\quad w>0,
\end{align*}
$W\mid \bX$ follows an inverse-gamma distribution, i.e., $W\mid \bX\sim \operatorname{IG}( (d+\nu)/2, (D^2(\bX; \bmu, \Sigma)+\nu)/2)$. This implies
$$ \E(1/W \mid \bX) = \frac{\nu + d}{\nu + D^2(\bX; \bmu, \Sigma)},$$
so that the weights $\delta_{ki}$ in Step~\ref{step:deltaki} of Algorithm~\ref{alg:fitnvmix} can be calculated analytically in this case.

\paragraph{Pareto mixture}
In order to test our algorithms for a normal variance mixture distribution that has not been studied as extensively as the multivariate $t$ distribution we consider $W\sim\Par(\alpha, x_m)$ with density
$$ f_W(w) = \alpha \frac{x_m^\alpha}{w^{\alpha+1}},\;\;\;w\geq x_m.$$
One can calculate that $\E(W^k)$ exists with $\E(W^k)=\alpha/(\alpha-k)$ if $k<\alpha$. This implies for the resulting normal variance mixture $\bX = \bmu + \sqrt{W} A \bZ$ that $\E(\bX)=\bmu$ for $\alpha>1/2$ and $\cov(\bX)=\frac{\alpha}{\alpha-1}\Sigma$ for $\alpha>1$. The density $f_{\bX}(\bx)=f_{\bX}(\bx; \mu, \Sigma, \alpha, x_m)$ can be determined using~\eqref{eq:densityX:fw}:
\begin{align*}
f_{\bX}(\bx) &= \frac{\alpha x_m^\alpha}{\sqrt{(2\pi)^d |\Sigma|}}\int_{x_m}^\infty w^{-d/2-\alpha-1}\exp\left(-\frac{D^2(\bx; \bmu, \Sigma)}{2w}\right)\rd w\\
&= \frac{\alpha x_m^\alpha}{\sqrt{(2\pi)^d |\Sigma|}}\left(\frac{D^2(\bx; \bmu, \Sigma)}{2}\right)^{-d/2-\alpha}\int_0^{\frac{D^2(\bx; \bmu, \Sigma)}{2x_m}} u^{d/2+\alpha-1} \exp(-u)\,\rd u\\
&= \frac{\alpha x_m^\alpha}{\sqrt{(2\pi)^d |\Sigma|}}\left(\frac{D^2(\bx; \bmu, \Sigma)}{2}\right)^{-d/2-\alpha}\gamma\left(\alpha+\frac{d}{2}; \frac{D^2(\bx; \bmu, \Sigma)}{2 x_m}\right), \;\;\; \bx\in\mathbb{R}^d,
\end{align*}
where $\gamma(z; x) = \int_0^x t^{z-1}e^{-t}\,\rd t$ for $z,x>0$ denotes the (lower) incomplete gamma function. Note that $f_{\bX}(\bx; \mu, \Sigma, \alpha, x_m) = f_{\bX}(\bx; \mu, x_m\Sigma, \alpha, 1)$ so that the scale parameter $x_m$ is redundant as the scaling can be achieved via scaling $\Sigma$. We can thus set $x_m=1$ and obtain
\begin{align}\label{eq:densityX:Paretomix}
f_{\bX}(\bx; \mu, \Sigma, \alpha) = \frac{\alpha}{\sqrt{(2\pi)^d |\Sigma|}}\left(\frac{D^2(\bx; \bmu, \Sigma)}{2}\right)^{-d/2-\alpha}\gamma\left(\alpha+\frac{d}{2}; \frac{D^2(\bx; \bmu, \Sigma)}{2}\right), \;\;\;\bx\in\mathbb{R}^d.
\end{align}
We use the notation $\bX\sim\PNVM(\alpha,\bmu,\Sigma)$ (``Pareto normal variance mixture'') for a random vector $\bX$ with density~\eqref{eq:densityX:Paretomix}.

As in the case of an inverse-gamma mixture, it is possible to derive an expression for $\E(1/W\mid \bX)$ in the Pareto setting. Note that
\begin{align*}
f_{W\mid \bX}(w\mid \bx) &\propto f_{\bX\mid W}(\bx\mid w) f_W(w)\propto w^{-(\alpha+d/2+1)} \exp\left(-D^2(\bx; \bmu, \Sigma)/(2w) \right),\quad w>1,
\end{align*}
so that using the density transformation formula we obtain for $\tilde{W}=1/W$ that
$$ f_{\tilde{W} \mid \bX}(\tilde{w} \mid \bx) \propto \tilde{w}^{\alpha+d/2-1} \exp(-\tilde{w} D^2(\bx; \bmu, \Sigma)/2),\quad\tilde{w}\in(0,1).$$
Therefore, $W^{-1}\mid \bX$ follows a $(0,1)$ truncated gamma distribution with shape $\alpha+d/2$ and scale $2/D^2(\bX; \bmu, \Sigma)$. For more details on truncated gamma distributions, see \cite{coffeymuller2000};  Equation (2.12) therein implies that
$$ \E(1/W\mid \bX) = \frac{F_{\Gamma}(1; \alpha+d/2+1, 2/D^2(\bX; \bmu, \Sigma))}{F_{\Gamma}(1; \alpha+d/2, 2/D^2(\bX; \bmu, \Sigma))}\frac{2\alpha + d}{D^2(\bX; \bmu, \Sigma)}.$$

\subsection{Estimating the distribution function}\label{subsec:pnvmixexamples}
In the case where $\bX\sim \MVT_d(\nu,\bmu,\Sigma)$, Algorithm~\ref{alg:RQMC:F:a:b} combined with the variable reordering Algorithm~\ref{alg:precond} can be used to estimate $F(\ba,\bb)$, and is implemented in the function \texttt{pStudent()} in the \R\ package \texttt{nvmix}. In this case, one can also use the QRSVN algorithm from  \cite{genzbretz2002}, which is implemented in the function \texttt{pmvt()} of the \R\ package \texttt{mvtnorm} (\cite{genzbretzmivamileischscheiplhothorn2019}).
The differences between these two algorithms was explained in Section \ref{sec:mcrqmc}.
Furthermore, our implementation relies on C code, whereas \texttt{pmvt()} internally calls Fortran code.

\subsubsection{Error behaviour as a function of the sample size}

\begin{figure}[!t]
\centering
  \includegraphics[width=0.32\linewidth]{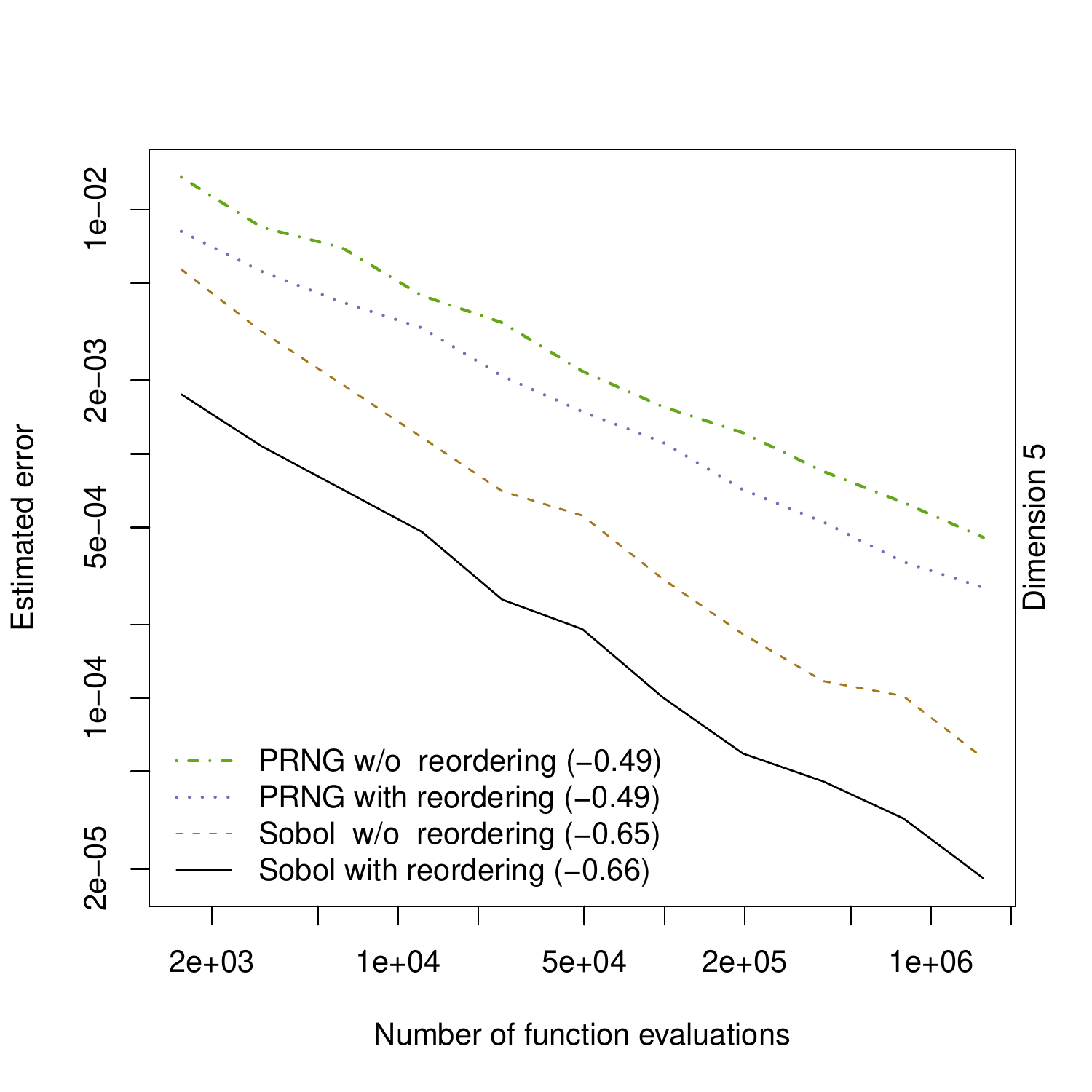}
  \includegraphics[width=0.32\linewidth]{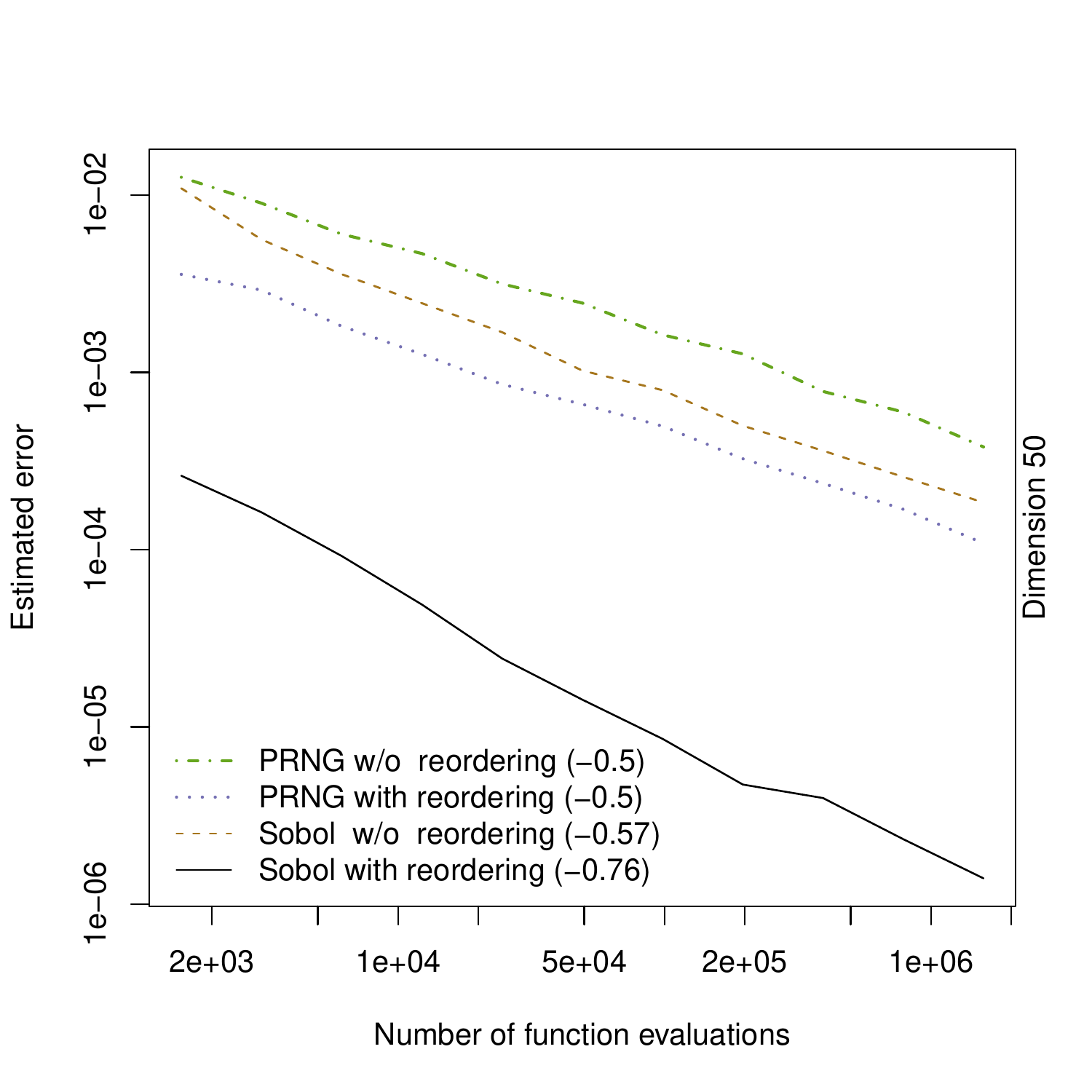}
  \includegraphics[width=0.32\linewidth]{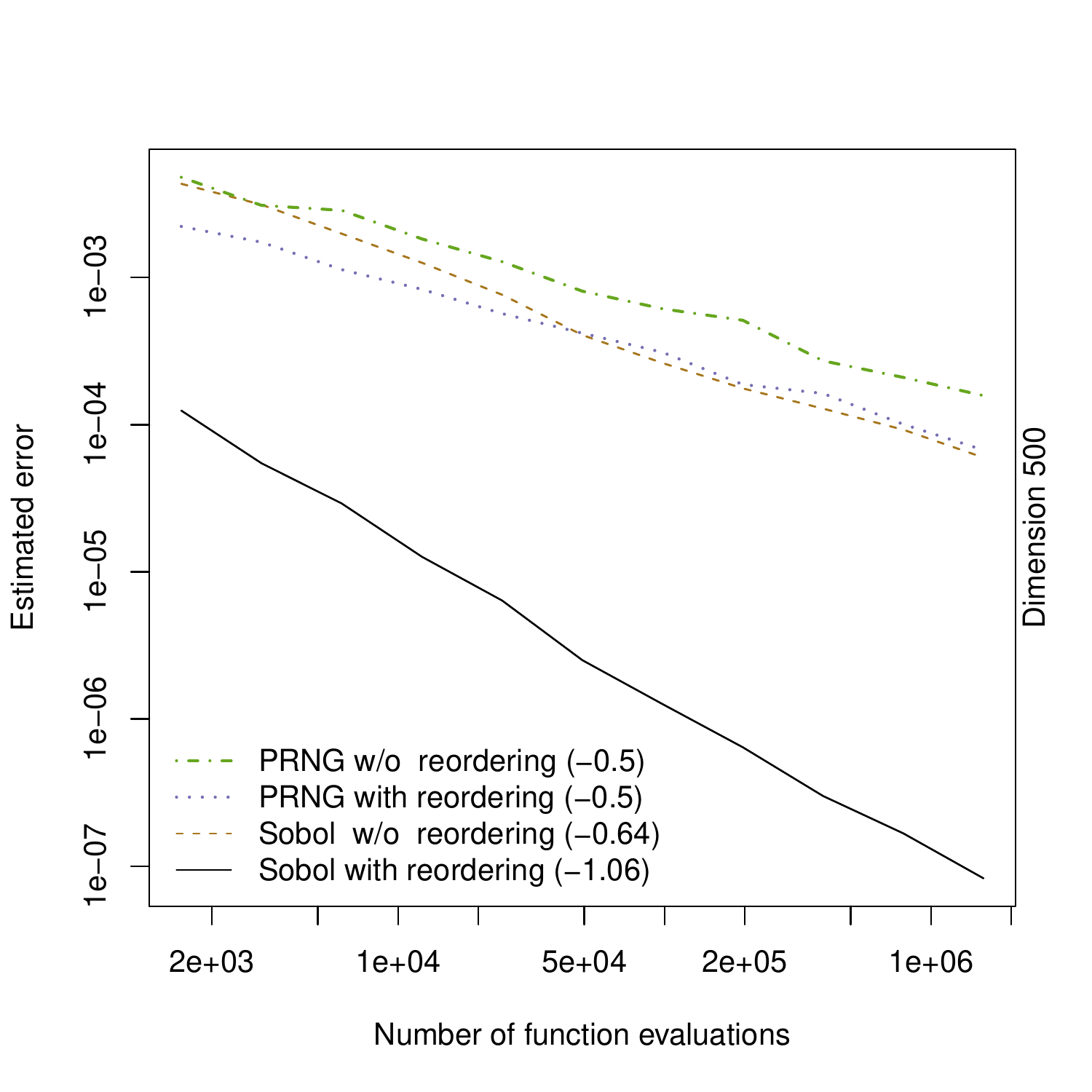}
  \caption{Average absolute errors of different estimators for $F_{\bX}(\bx)$ as a function of $n$ for $\bX\sim\MVT_d(2, \bzero, \Sigma)$, where for each $n$, 15 different settings for $\Sigma$ and $\bx$ are randomly chosen. Regression coefficients are in parentheses in the legends.}
  \label{fig:meanerrors.t}
\end{figure}

In order to assess the performance of our algorithm let us first consider estimated absolute errors as a function of the number of function evaluations. Four settings are considered: (pure) MC with and without reordering and RQMC (using a randomized Sobol' sequence) with and without reordering. In Figures~\ref{fig:meanerrors.t} and~\ref{fig:meanerrors.Pareto}, estimated absolute errors (estimated as in Algorithm~\ref{alg:RQMC:F:a:b} via $\hat{\varepsilon}$ in Step 4.3)) are reported for different sample sizes $n$ (which refer to the total number of function evaluations) in different dimensions using the four aforementioned methods for the multivariate $t$ case and the Pareto mixture. For each dimension and for each $n$ we report the average estimated absolute error for 15 different parameter settings. In each parameter setting, an upper limit is randomly chosen via $\bb\sim\U(0,3\sqrt{d})^d$ and a correlation matrix $R$ is sampled as a standardized Wishart matrix via the function \texttt{rWishart()} in \R\. The lower limit is set to $\ba=(-\infty,\dots,-\infty)$. The degrees of freedom $\nu$ in the $\MVT$ setting and the shape parameter $\alpha$ in the $\PNVM$ setting are set to 2.

It is evident that RQMC methods yield lower errors than their MC counterparts. We also report the convergence speed (as measured by the regression coefficient $\alpha$ of $\log\hat{\varepsilon} = \alpha \log n + c$ displayed in the legend): Variable reordering does not have an influence on the convergence speed $1/\sqrt{n}$ of MC methods; however, it does speed up the RQMC methods. A possible explanation is that variable reordering can reduce the effective dimension. This is discussed below in more detail.

\begin{figure}[!b]
\centering
  \includegraphics[width=0.32\linewidth]{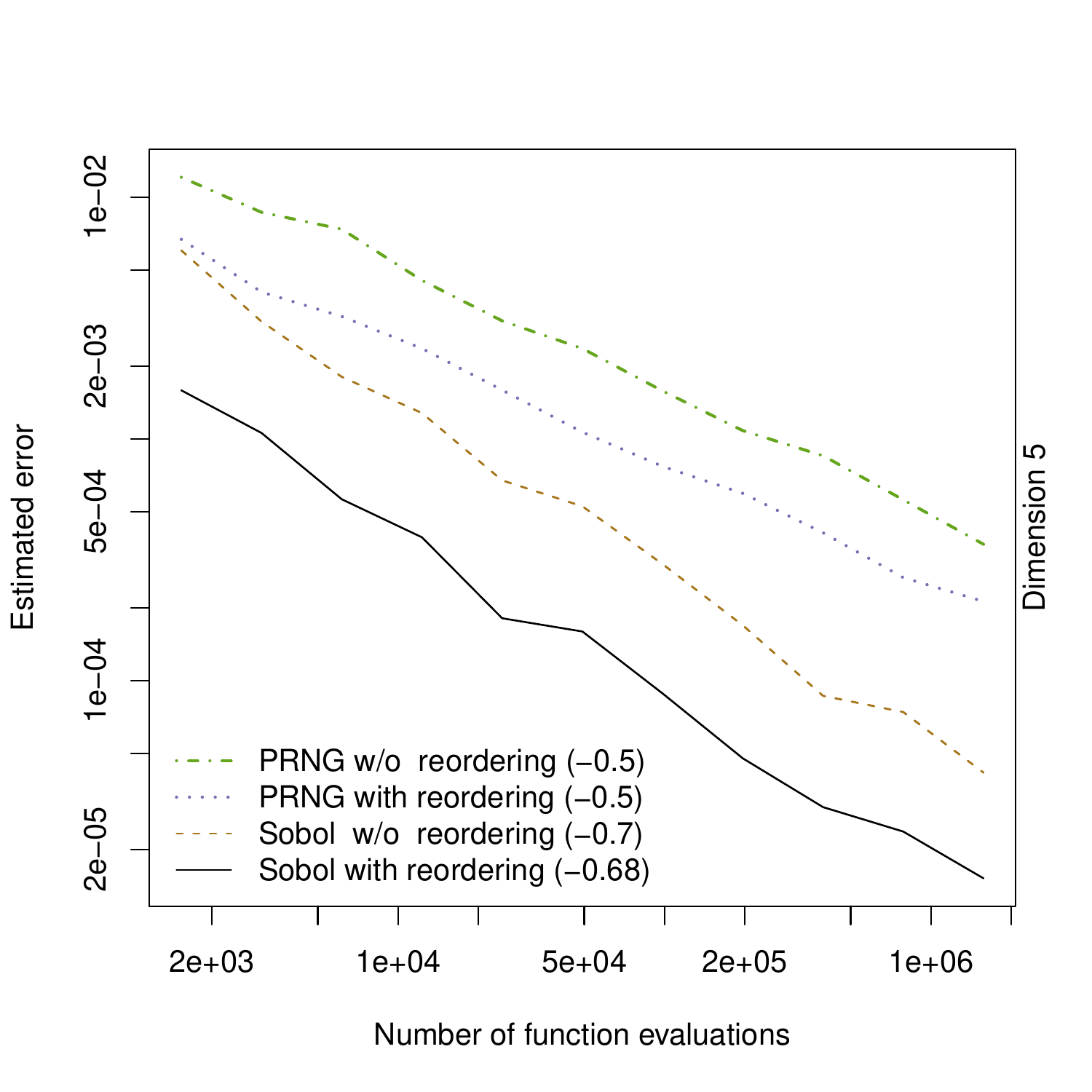}
  \includegraphics[width=0.32\linewidth]{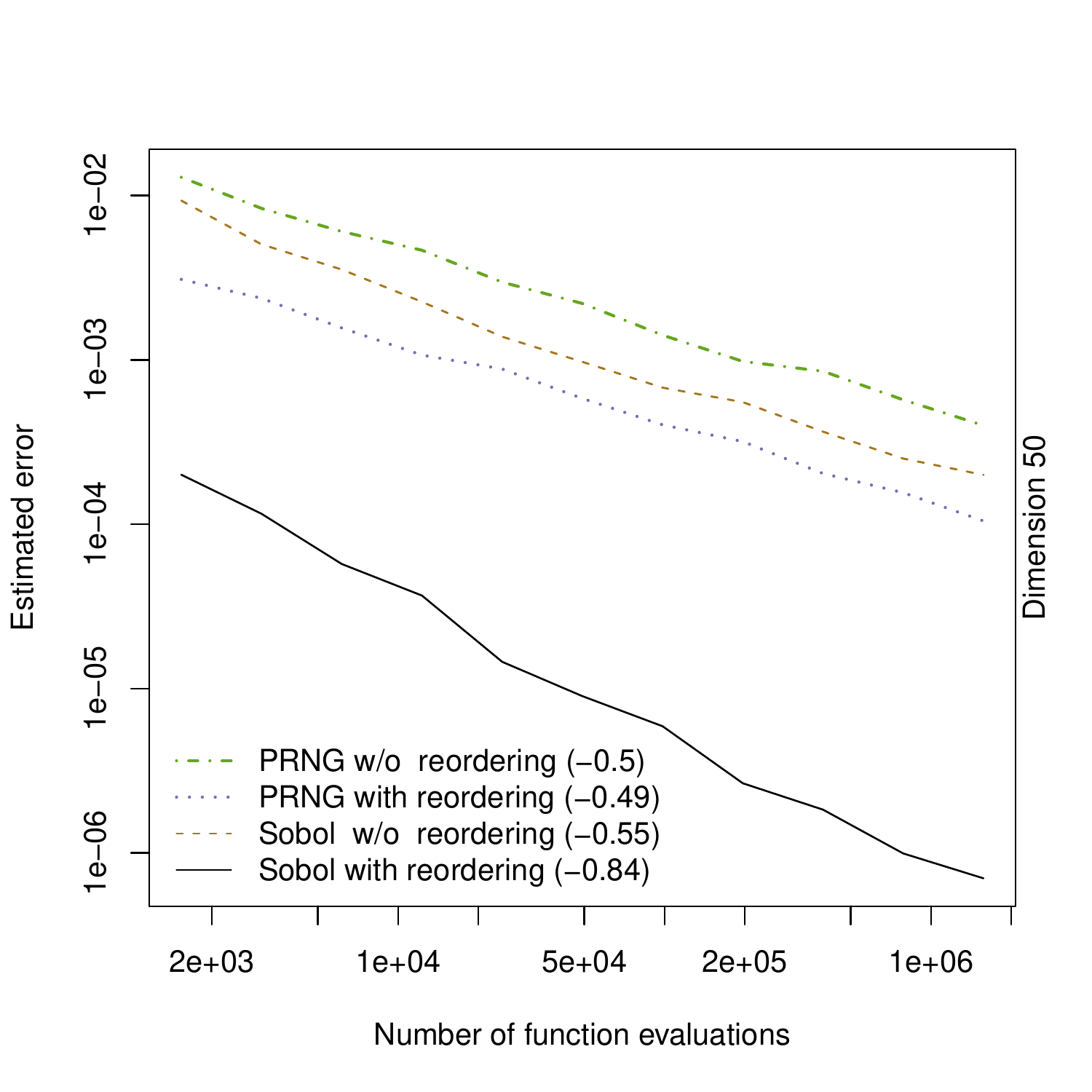}
  \includegraphics[width=0.32\linewidth]{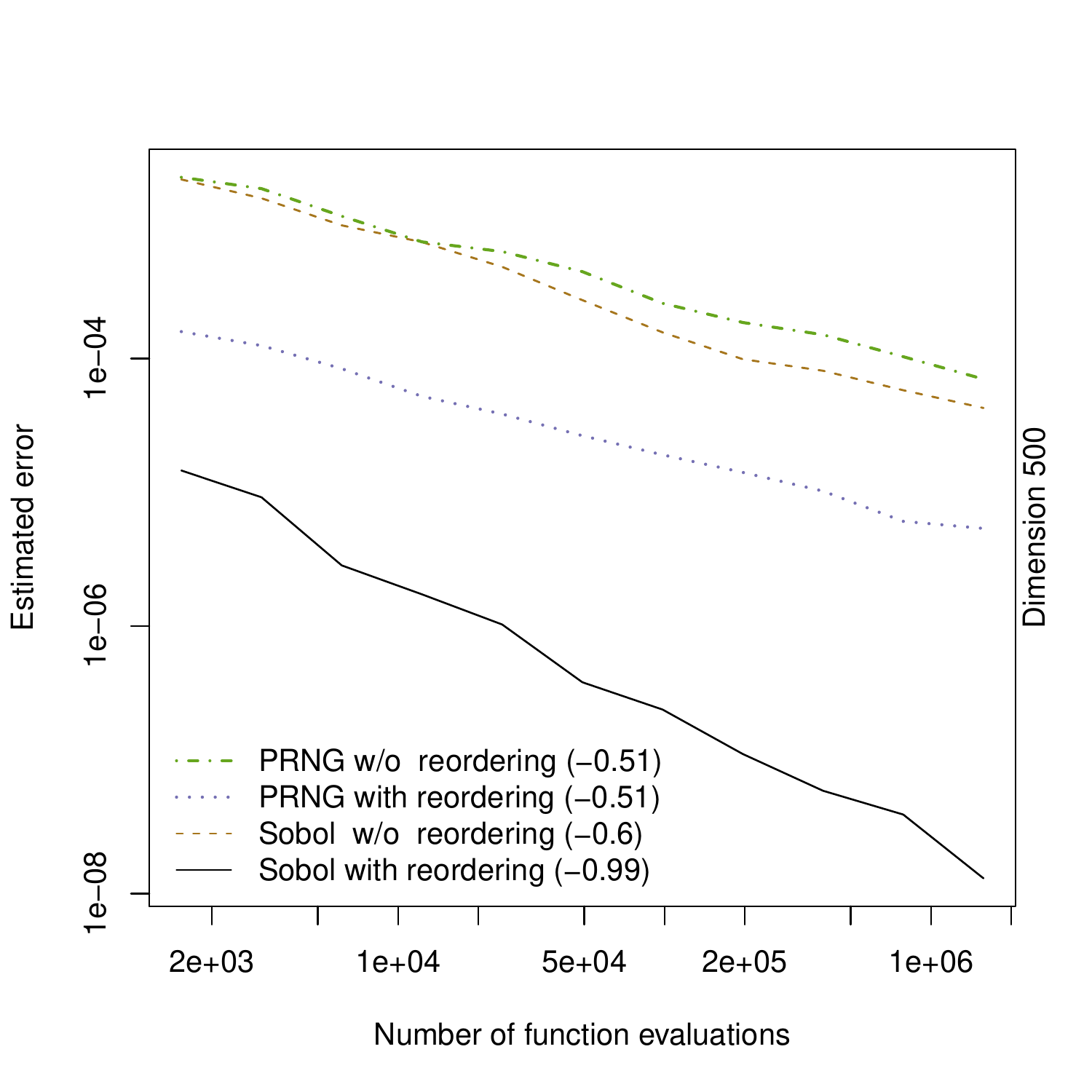}
  \caption{Average absolute errors of different estimators for $F_{\bX}(\bx)$ as a function of $n$ for $\bX\sim\PNVM_d(2, \bzero, \Sigma)$, where for each $n$, 15 different settings for $\Sigma$ and $\bx$ are randomly chosen. Regression coefficients are in parentheses in the legends.}
  \label{fig:meanerrors.Pareto}
\end{figure}

\subsubsection{The effect of variable reordering}
\label{subsubsec:VarReorder}

\begin{figure}[!t]
\centering
\begin{minipage}{.45\textwidth}
  \centering
  \includegraphics[width=\linewidth]{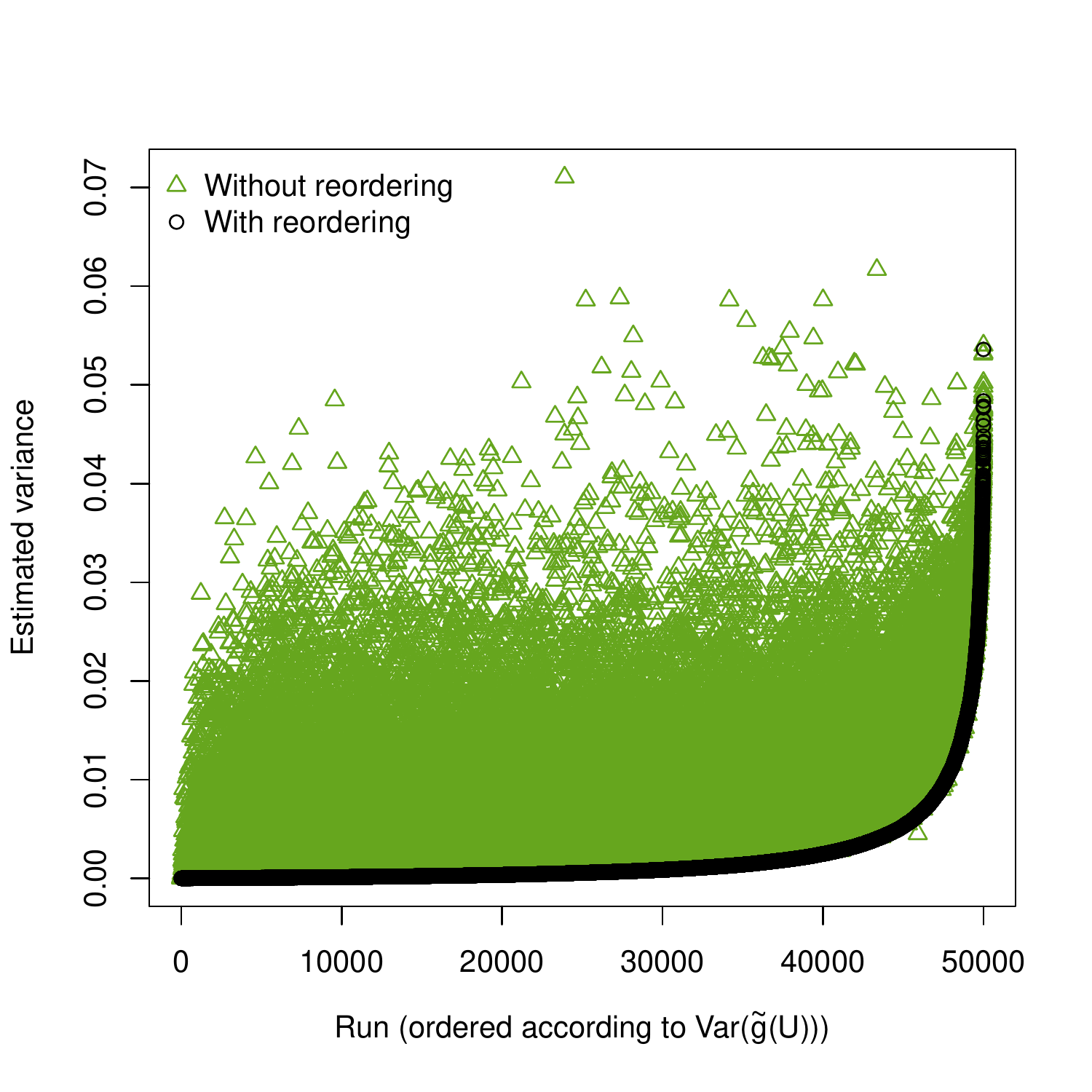}
\end{minipage}
\begin{minipage}{.45\textwidth}
  \centering
  \includegraphics[width=\linewidth]{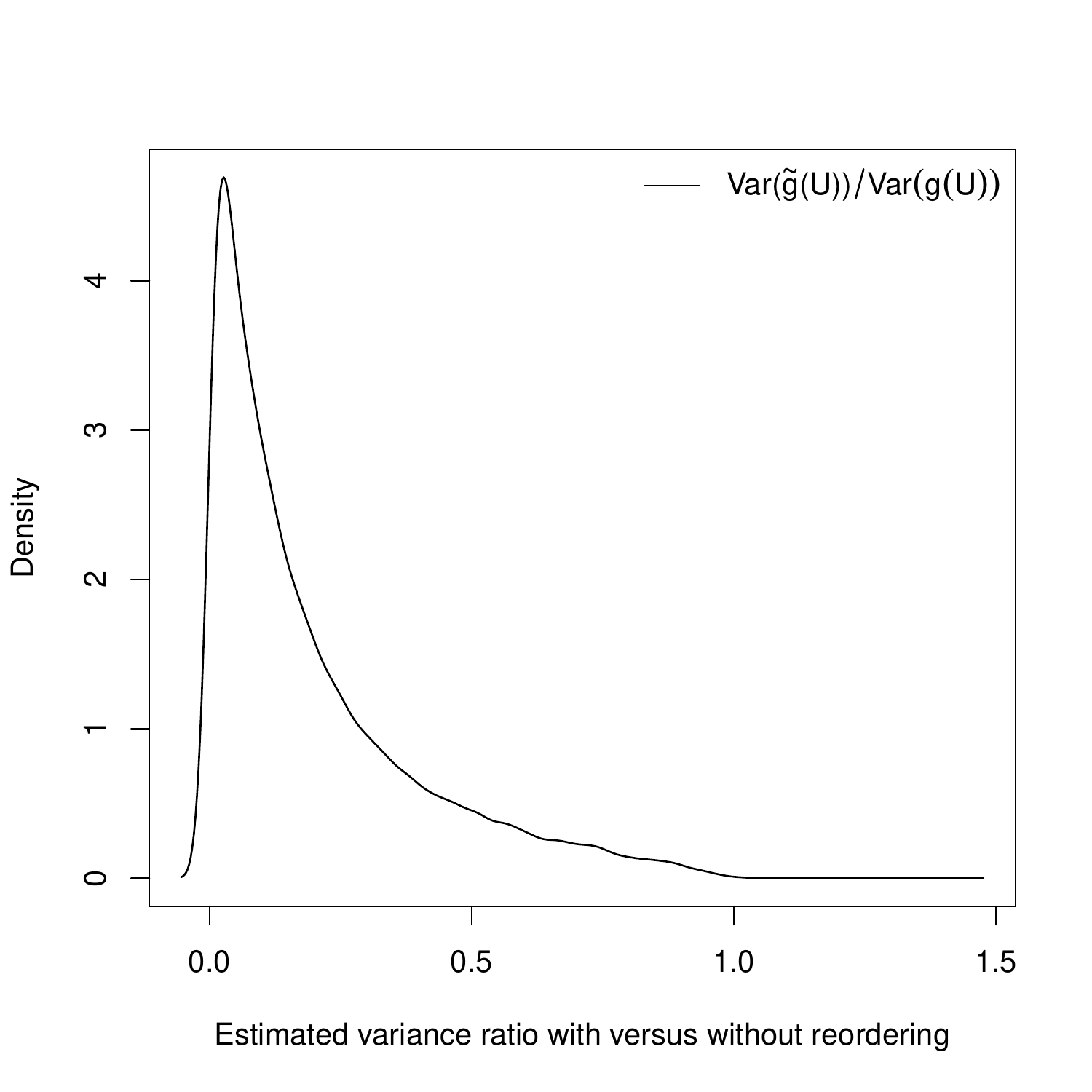}
\end{minipage}%
  \caption{Left: Variance of the integrand $\var(g(\bU))$ with and without variable reordering.
  Right: Density plot of estimated variance ratios.}
  \label{fig:vars}
\end{figure}

\paragraph{Investigating the variance of the integrand}
It is interesting to further investigate the effect of variable reordering as detailed in Section~\ref{subsec:precond}. To this end, the variance of the integrand $g$ from \eqref{eq:gi} given by
$$\var(g(\bU)) = \int_{[0,1]^d} g^2(\bu)\rd\bu - \left(\int_{[0,1]^d}g(\bu)\rd\bu\right)^2$$
is estimated, once with the original $g$ without reordering, and once with $\tilde{g}$ which is the integrand $g$ after applying Algorithm~\ref{alg:precond} to the inputs $\ba, \bb, \Sigma$. We use a randomized experiment and do the following 50\,000 times for an inverse-gamma mixture: Sample $d\sim\U(\{5,\dots,500\})$, $\nu\sim\U(0.1, 5)$ and $\ba$, $\bb$, $\Sigma$ are randomly chosen as in the previous section. The variance of the integrand is then estimated via the sample variance of $g(\bU_1),\dots,g(\bU_N)$ for $N=10\,000$. Results can be found in Figure~\ref{fig:vars}: On the left, variances have been ordered according to the ordering of the variances when variable reordering is employed (for better visibility of the reordering effect). On the right, a density plot of the ratios $\var(\tilde{g}(\bU))/\var(g(\bU))$ is shown.
It can be confirmed that in the vast majority of cases, variable reordering substantially decreases the variance of the integrand. In only 12 of the 50\,000 runs did the estimated variance after reordering exceed the variance without reordering.

\paragraph{Effective dimension of the integrand}
As was seen in Figures~\ref{fig:meanerrors.t} and~\ref{fig:meanerrors.Pareto}, reordering improves both MC and RQMC methods; the effect is however stronger for RQMC methods. A possible explanation for this is that the variable reordering not only reduces the overall variance of the integrand, $\sigma^2=\var(g(\bU))$, as seen in the previous part, but also the \emph{effective dimension} of the integrand, to be defined later. (R)QMC methods often work better if only a small number of variables are important, see \cite{wangfang2003} and references therein for a discussion and examples. Variable reordering, as explained in Section~\ref{subsec:precond}, was derived in a way such that the first components are the most important ones.

Sensitivity indices, such as Sobol' indices, can help understand the importance of different variables of an integrand. Following \cite[Ch. 6.3]{lemieux2009} and \cite{sobol2001}, we consider the ANOVA decomposition of a (square integrable) function $g:(0,1)^d\rightarrow\mathbb{R}$ given by
$$ g(\bu) = \sum_{I\subseteq\{1,\dots,d\}}g_I(\bu)$$
where
$$g_I(\bu) = \int_{[0,1]^{d-k}} g(\bu)\,\rd\bu_{-I} - \sum_{J\subset I}g_J(\bu),\quad g_{\emptyset}(\bu)= \int_{[0,1]^d} g(\bu)\ \rd\bu;$$
here, $k=|I|$ and $\bu_{-I}$ is the vector $\bu$ without components $k\in I$. The $g_I$'s only depend on variables $i\in I$ and are orthogonal; if $I\not=\emptyset$, $g_I$ has mean zero. The overall variance of the integrand can then be decomposed as $ \sigma^2 = \var(g(\bU)) = \sum_{I\subseteq\{1,\dots,d\}} \sigma_I^2$
where $\sigma_I^2 = \var(g_I(\bU)) = \int_{[0,1]^d}g_I(\bu)^2\,\rd\bu$. The number
$$S_I = \frac{\sigma_I^2}{\sigma^2}\in [0,1]$$
is called \emph{Sobol' index} of $I$. It explains the fraction of the overall variance of the integrand explained by the variables in $I$; if this number is close to 1, it means that most of the variance is explained by $g_I$ and therefore by the variables in $I$. If $I=\{l\}$ is a singleton, $S_I = S_{l}$ is called a \emph{first order index}.

Another useful sensitivity index is the \emph{total effect index} of variable $l\in\{1,\dots,d\}$ given by
$$ S_{T_l} = \frac{1}{\sigma^2} \sum_{I\subseteq\{1,\dots,d\}:l\in I} \sigma_{I}^2$$
which measures the relative impact of component $l$ and all its interactions. Care must be taken when interpreting this value as $\sum_{i=1}^d S_{T_i}\geq 1$ in general since interactions are counted several times. For instance, $\sigma_{\{1,2\}}^2$ is contained in $S_{T_1}$ as well as in $S_{T_2}$.

Finally, the \emph{effective dimension in the superposition sense} in proportion $p\in(0,1]$ is the smallest integer $d_S$ so that
$$ \frac{1}{\sigma^2} \sum_{I:|I|\leq d_S}\sigma_I^2 \geq p.$$
If the effective dimension is $d_S$, the integrand can be well approximated by functions of at most $d_S$ variables; see \cite[Sec. 3.6.1]{lemieux2009}.

The indices $S_{\{l\}}$ and $S_{T_l}$ for $l\in\{1,\dots,d\}$ can be estimated using \cite{owen2013}'s method which is implemented in the function \texttt{sobolowen()} in the \R\ package \texttt{sensitivity}; see \cite{sensitivity}. Figure~\ref{fig:sobolindices} shows estimated Sobol' indices in two settings: In each setting, $W\sim\text{IG}(1/2, 1/2)$ (so that $\bX$ follows a multivariate $t$ distribution with 1 degrees of freedom) and $d=10$. The upper limit $\bb$ and the scale matrix $\Sigma$ were found by trial \& error so that there is either a substantial variance reduction (top figure) achieved by reordering or an increase in variance (bottom figure). In order to be consistent with the definition of the integrand $g$ in~\eqref{eq:gi}, variables are called $0,\dots,d-1$ so that they correspond to $u_0,\dots, u_{d-1}$. For instance, in the top figure, one can read that $S_{(0)}\approx 0.52$ after reordering so that $52\%$ of the variance of $g$ can be explained by a function $g_{\{0\}}(u_0)$.

\begin{figure}[!t]
\centering
  \includegraphics[width=0.49\textwidth]{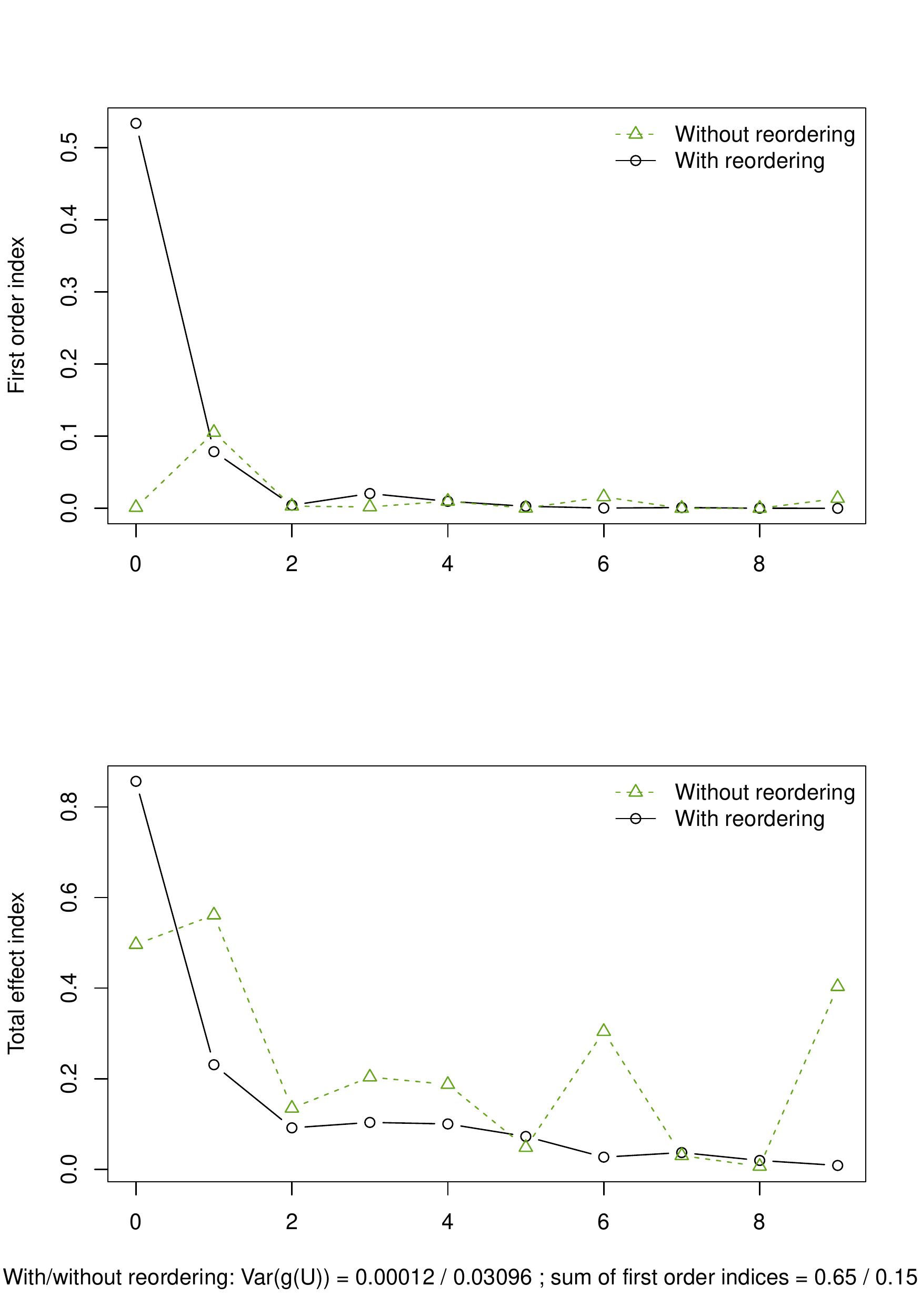}
  \includegraphics[width=0.49\textwidth]{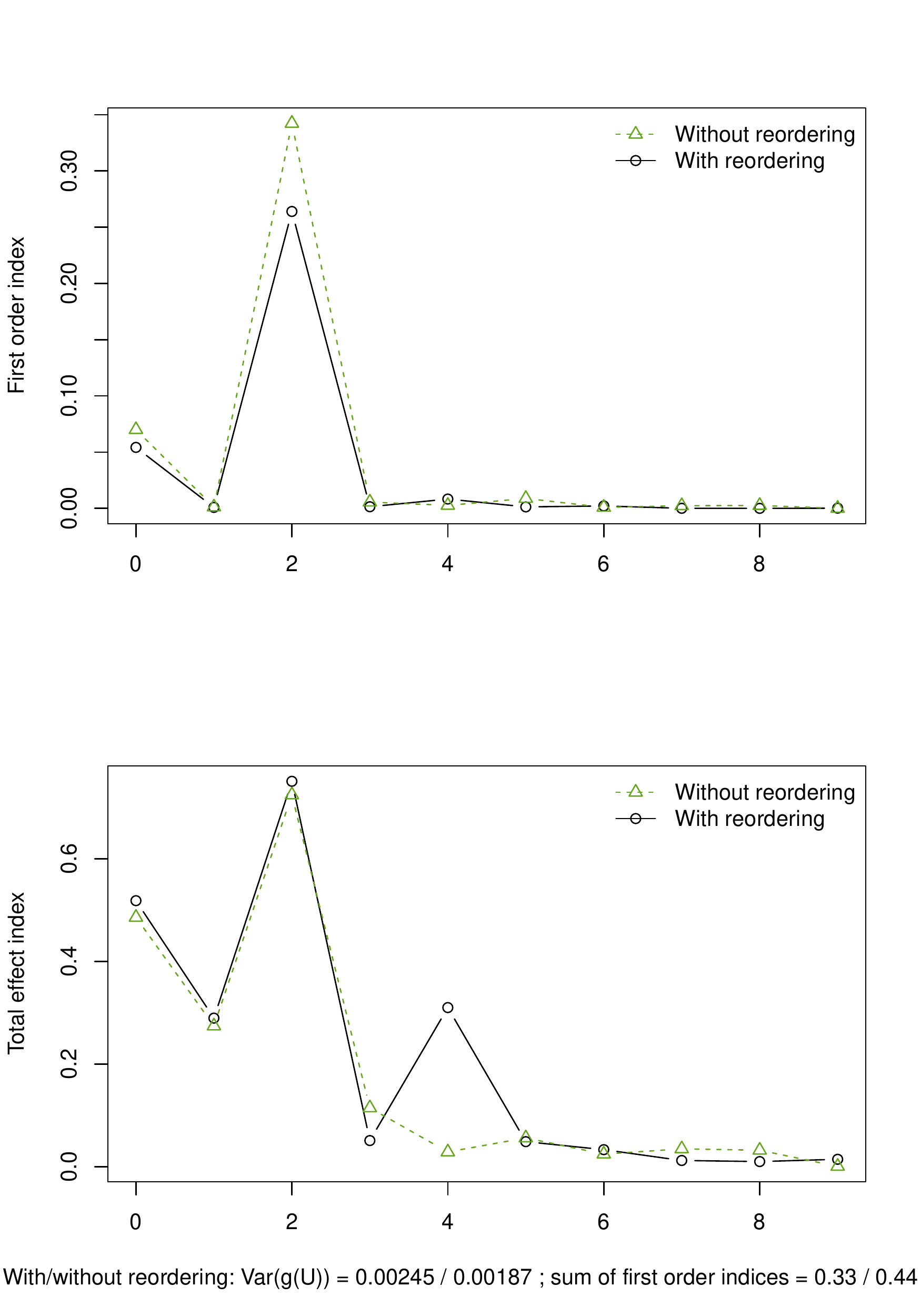}
  \caption{Estimated first order and total effect indices with and without reordering for an inverse-gamma
  mixture in a setting with high variance reduction (top) and increase in variance (bottom).}
  \label{fig:sobolindices}
\end{figure}

Inspecting the top figures where variable reordering led to a decrease in variance of approximately 99\% reveals that both first order and total effect indices are decreasing in the dimension after variable reordering was performed. Also, the figure label includes the sum of the first order indices. After reordering, 65\% (as opposed to 15\%) of the overall variance of the integrand is explained by components $g_{I}$ of $g$ of exactly one variable, hinting at the fact that the effective dimension decreased: The effective dimension in the superposition sense in proportion 65\% decreased to 1 after reordering.

There are rare cases when variable reordering leads to an increase in variance: In the bottom figures, the relative increase is about 31\%. Here, the new ordering is clearly not optimal and indices are not decreasing with the dimension. Given the nature of the greedy procedure it is expected that in some cases, no improvement is achieved.

\noindent
\subsubsection{Run times}
In this part we take a brief look at the run-times of Algorithm~\ref{alg:RQMC:F:a:b} combined with the variable reordering Algorithm~\ref{alg:precond}. We restrict our attention to the important multivariate $t$ case and compare run times of our implementation in \texttt{pStudent()} with the run times of the above mentioned QRSVN algorithm described in \cite{genzbretz2002} and provided by the function \texttt{pmvt()} in the \R\ package \texttt{mvtnorm}.

In order to get meaningful estimates of the
CPU time, for each dimension $d$, the following is done 15 times: Sample
$\bb$ and $\Sigma$ as before when estimating $\var(g(\bU))$, set $\ba=(-\infty,\dots,-\infty)$ and $\nu=2$. Then call
$\texttt{pmvt()}$ and $\texttt{pStudent()}$ three times each and average their CPU
times obtained using the package \texttt{microbenchmark} of \cite{mersmann2015}. The above procedure is done for an absolute error tolerance
$\eps=0.001$ and the maximum number of function evaluations is chosen such that both algorithms always terminate with the correct precision.

Figure
\ref{fig:times_all} shows the run times obtained. The symbols represent the
corresponding means whereas the lines show the largest/smallest CPU time measured for that dimension. Note that \texttt{pmvt()} only works for dimensions up to 1\,000. Figure
\ref{fig:times_all} shows that our implementation significantly outperforms the existing standard which takes up to 8 times more run time.

\begin{figure}[!t]
\centering
  \includegraphics[width=0.45\textwidth]{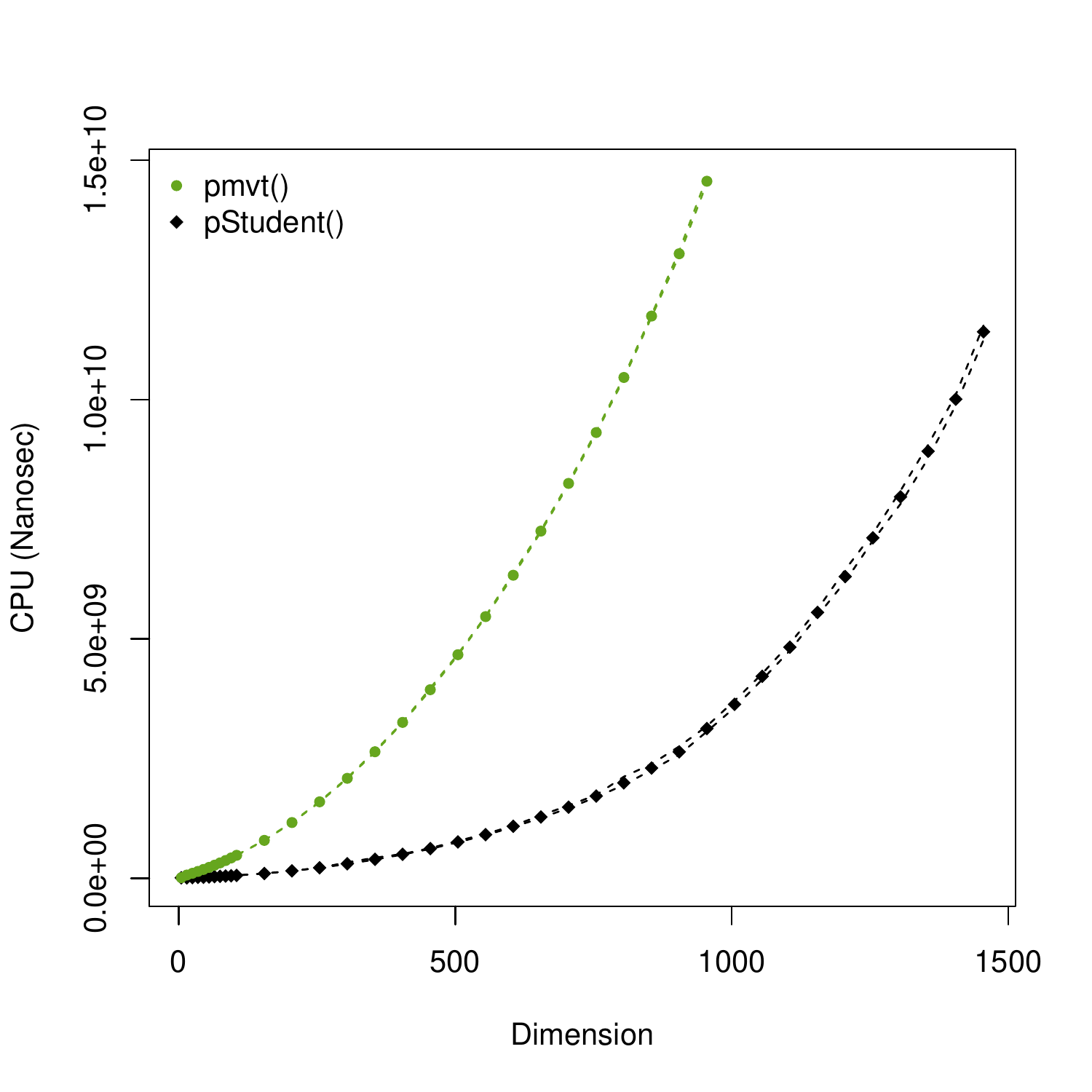}
  \includegraphics[width=0.45\textwidth]{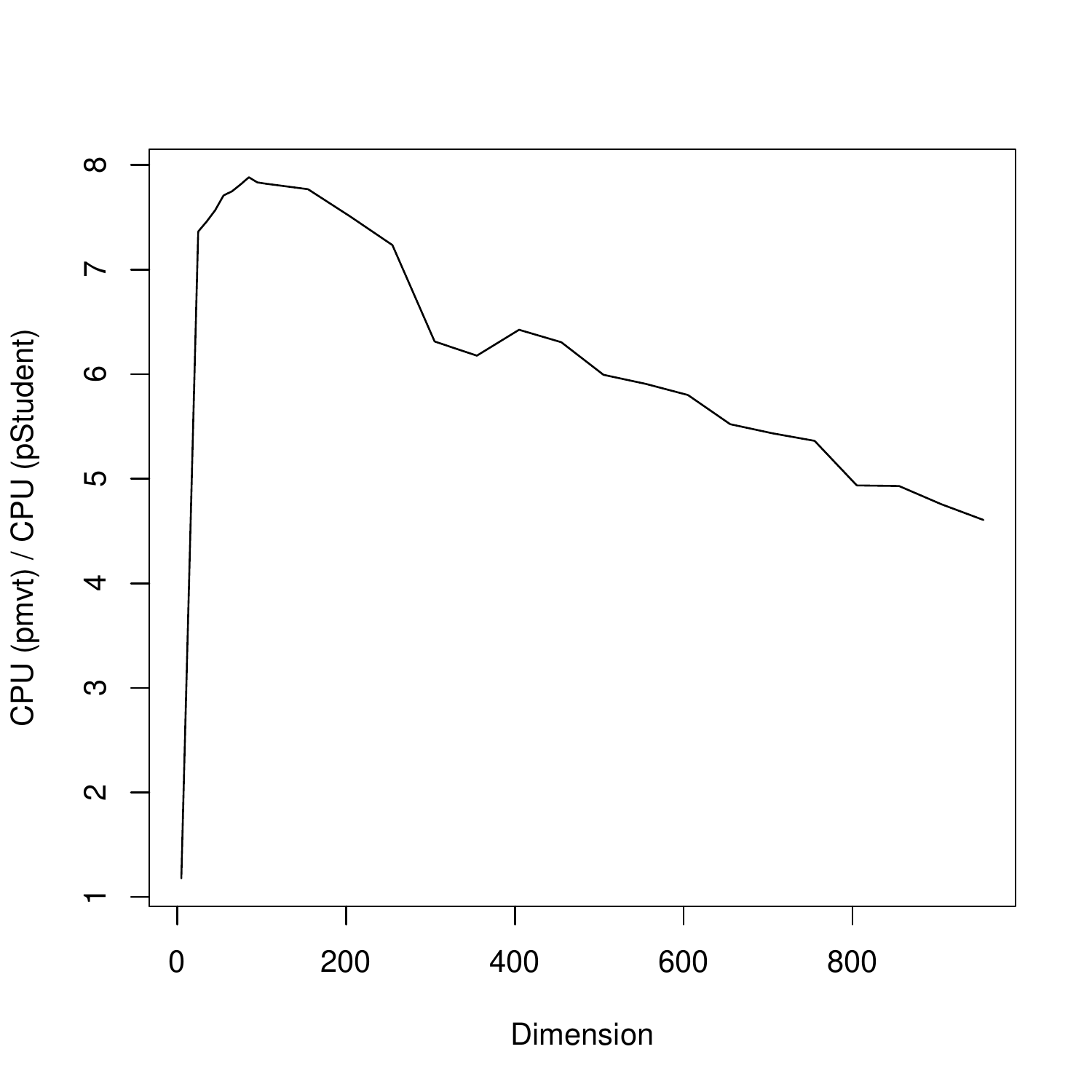}
  \caption{Run times based on three replications of 15 randomly chosen inputs $\bb$ and $\Sigma$ in each dimension (left); run-time ratios relative to \texttt{pStudent()} (right).}
  \label{fig:times_all}
\end{figure}

\subsection{Estimating the density function}

In this section we test the performance of Algorithm~\ref{alg:RQMC:f:x:adaptive} to estimate the log-density of $\bX\sim \MVT_d(\nu,\bmu,\Sigma)$ and $\bX\sim \PNVM_d(\alpha,\bmu,\Sigma)$. Note that the density is known in either case and given in~\eqref{eq:dens:mvt} and~\eqref{eq:densityX:Paretomix} so that estimated and true log-density values can be compared.

We sample $n=1\,000$ points from $\bX\sim \text{MVT}_d(\nu=1,\bzero,I_d)$ in dimension $d=10$ and evaluate the density of $\text{MVT}_d(\nu=4,\bzero,I_d)$ at the sampled points. The Pareto case is done similarly. Figure~\ref{fig:dnvmix} displays results obtained by the adaptive algorithm (Algorithm~\ref{alg:RQMC:f:x:adaptive}) and by the crude (non-adaptive) Algorithm
\ref{alg:RQMC:logmu}; %
the true log-density and the probability $\P(D^2(\bX, \bzero, I_d)> m^2)$ are also plotted. The latter probability gives an idea of how likely it is to see a sample point $\bx$ with Mahalanobis distance greater than $m$.
For small Mahalanobis distances, both algorithms perform well. For larger ones the problem becomes harder as the underlying integrand becomes more difficult to integrate (recall Figure~\ref{fig:density_integrand} and the discussion thereafter) and the crude, non-adaptive version gives highly biased results. The adaptive version, however, is able to accurately estimate the log-density for any Mahalanobis distance and is furthermore much faster (it takes only approximately 1~second for a total of $n=1\,000$ log-density estimations).

By inspecting the axes in Figure~\ref{fig:dnvmix}, one can see that our procedure performs well even for very large Mahalanobis distances that would rarely been observed. For likelihood-based methods, such as Algorithm~\ref{alg:fitnvmix}, it is, however, crucial to be able to evaluate the density function for a wide range of inputs. For instance, consider the problem where a sample $\bX_1,\dots,\bX_n\isim\MVT_d(\nu,\bzero,I_d)$ for unknown $\nu$ is given. It is then necessary to evaluate the log-density of $\bX_1,\dots,\bX_n$ at a range of values of $\nu$ in order to find the maximum likelihood estimator. In fact, this was the motivation for performing the experiments undertaken to produce Figure~\ref{fig:dnvmix}: The sample is coming from a heavy-tailed multivariate $t$ distribution and the log-density function of a less heavy tailed multivariate $t$ distribution is evaluated at that sample. The same intuition lies behind the experiment to produce the plot on the right of Figure~\ref{fig:dnvmix}.

\begin{figure}[!h]
\centering
  \includegraphics[width=0.45\textwidth]{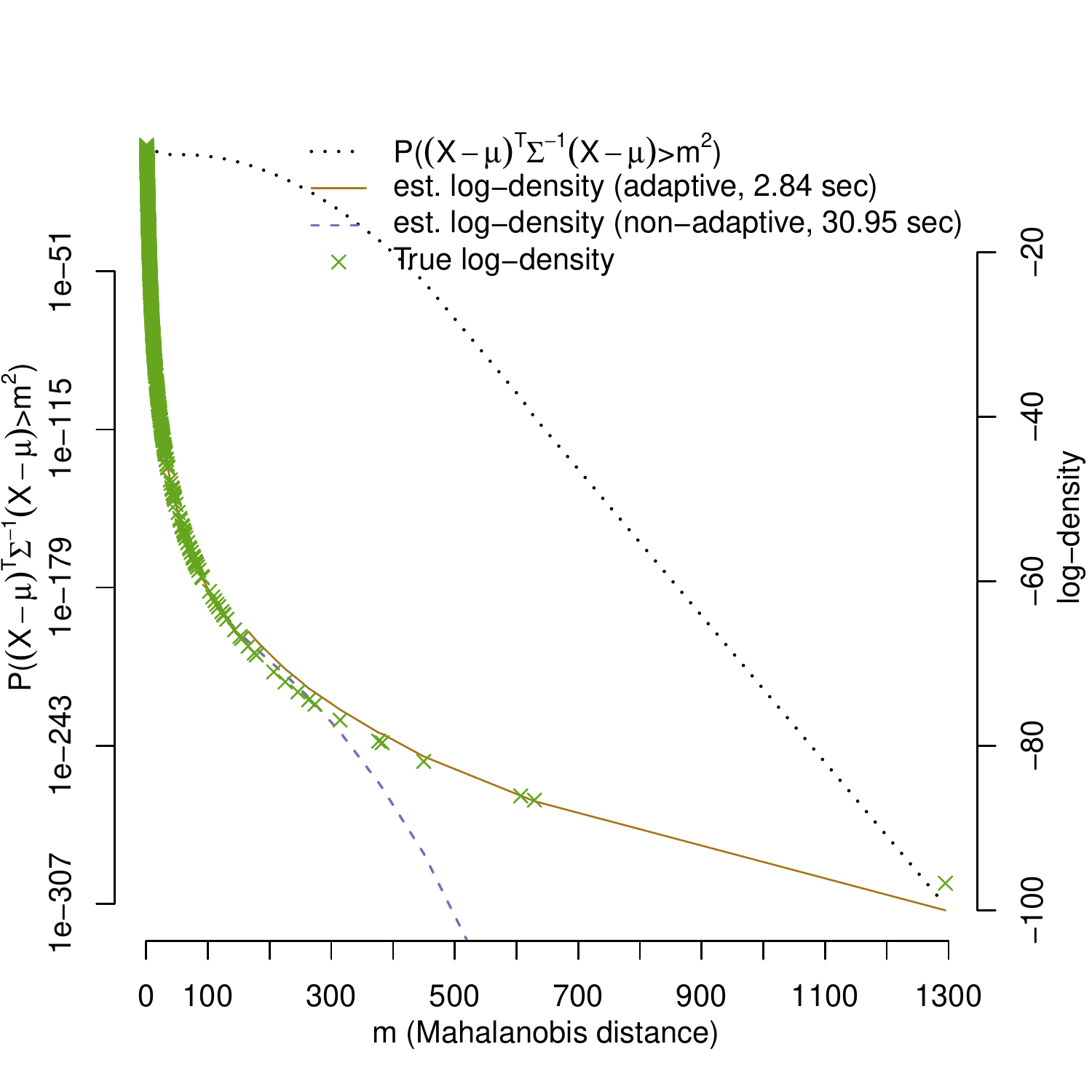}
  \includegraphics[width=0.45\textwidth]{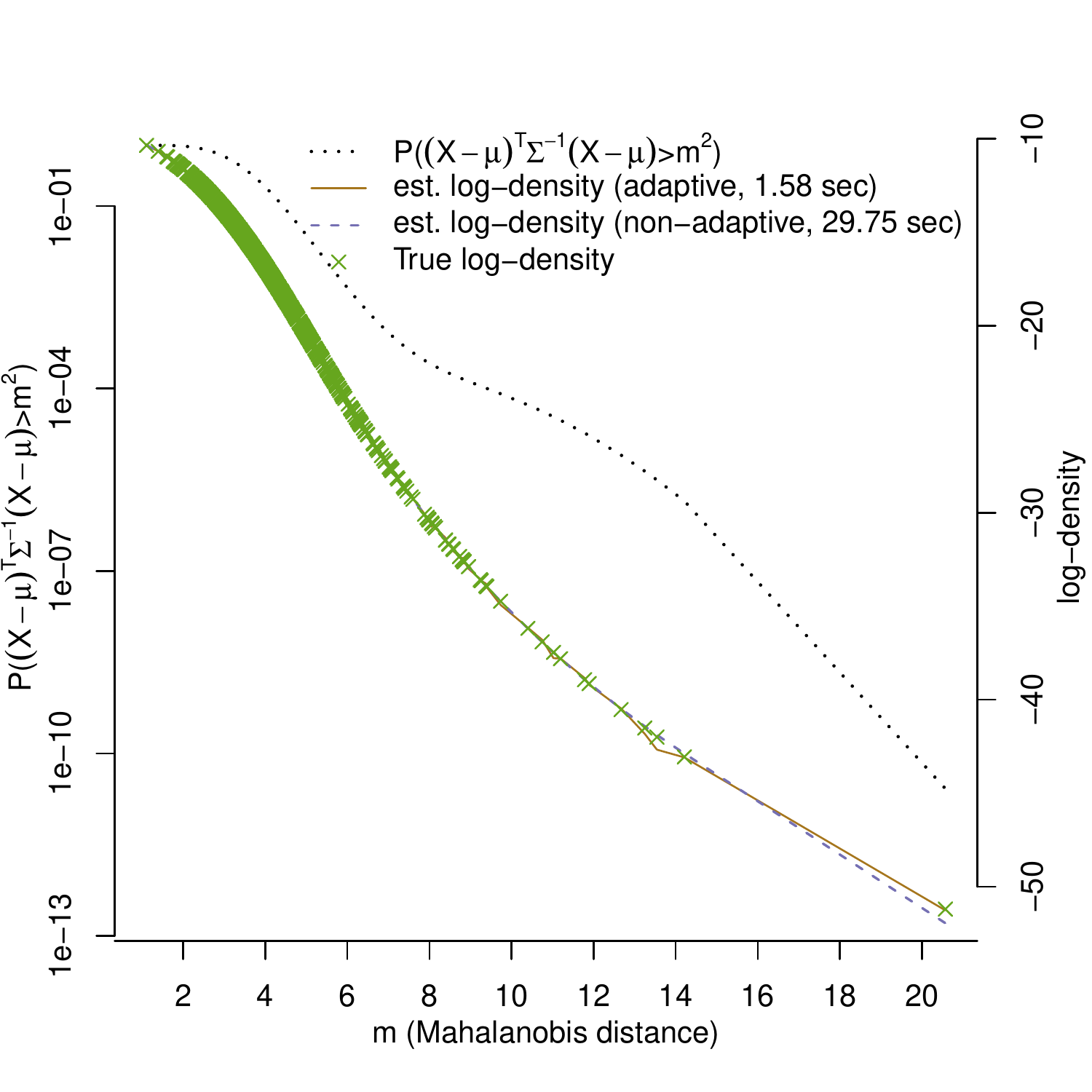}
  \caption{Estimated log-density of $\MVT_d(\nu=4, \bzero, I_d)$ (left) and $\PNVM_d(\alpha=6, \bzero, I_d)$ (right) in $d=10$ evaluated at $n=1\,000$ points sampled from $\MVT_d(\nu=1, \bzero, I_d)$ (left) and $\PNVM(\alpha=2, \bzero, I_d)$ (right).}
  \label{fig:dnvmix}
\end{figure}

\subsection{Fitting normal variance mixture distributions}

In this section we provide examples for our fitting procedure Algorithm~\ref{alg:fitnvmix}. While in the special case where $W$ follows an inverse-gamma distribution (i.e., $\bX\sim\MVT_d(\nu, \bmu, \Sigma)$ for which the joint density function is available in closed form), ECME methods described in \cite{liurubin1995} and \cite{nadarajahkotz2008} can be applied directly (implemented, for instance, in the function \texttt{fit.mst()} in the \R\ package \texttt{QRM}; see \cite{pfaffmcneil2016}), this is not the case for a general normal variance mixture distribution where the density function may not be available in closed form. In the latter case, we do rely on Algorithm~\ref{alg:fitnvmix} in combination with our adaptive procedure described in Algorithm~\ref{alg:RQMC:f:x:adaptive} to estimate the log-density function. This is all done automatically in the function \texttt{fitnvmix()} which merely needs a specification of the mixing distribution in the form of its quantile function.

As in the previous section, we consider an inverse-gamma and a Pareto mixture as test cases. We chose these two distributions where the density function is known in closed form so that we are able to investigate if
optimizing the log-likelihood estimated via Algorithm~\ref{alg:RQMC:f:x:adaptive} (as opposed to using a closed formula for the log-likelihood) has a significant effect on parameter estimates. In a practical setting where the density function is not known in closed form (as is the case for the inverse-Burr mixture considered in the data analysis) such comparison is not possible.

\begin{figure}[!t]
\centering
\includegraphics[width = 0.45\textwidth]{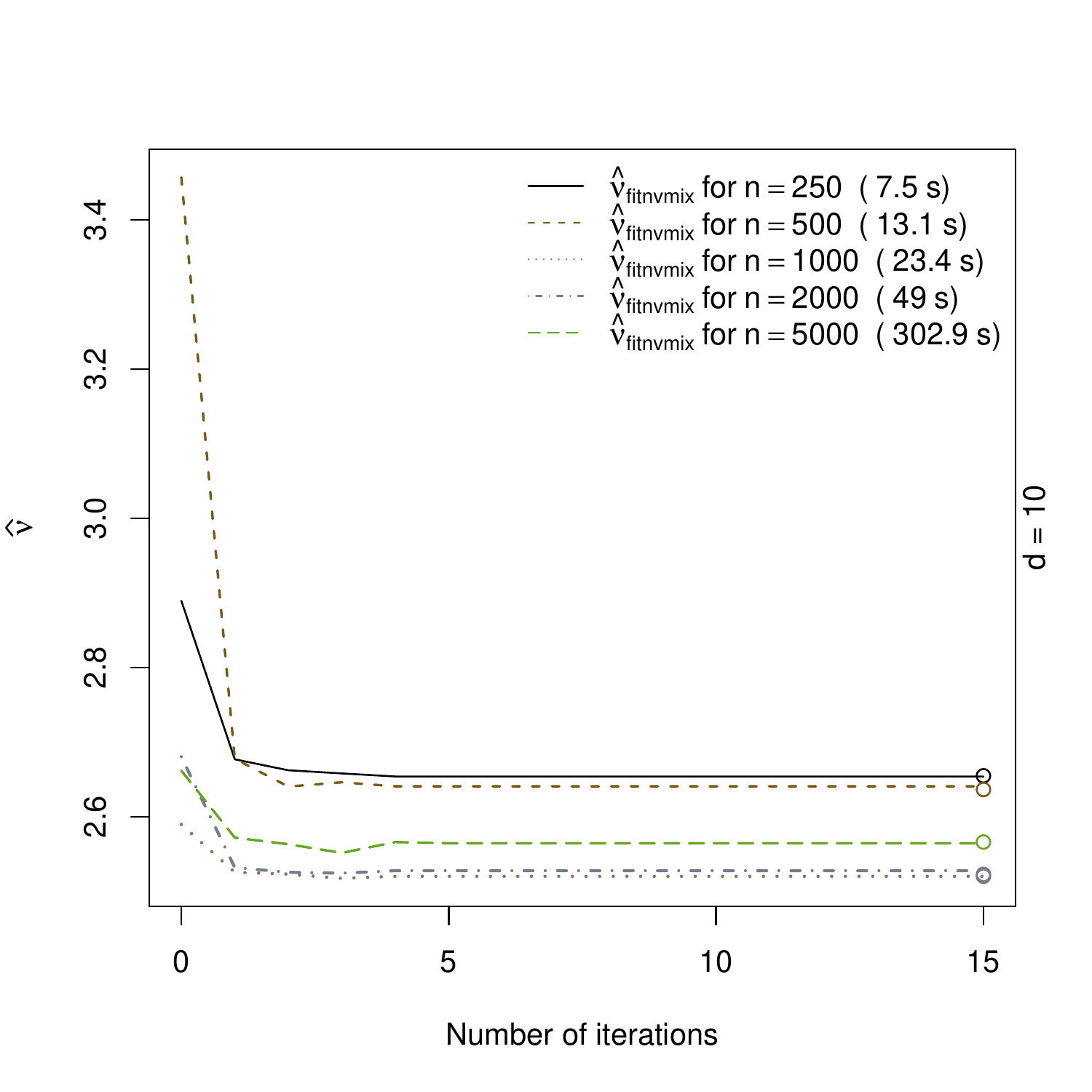}
\includegraphics[width = 0.45\textwidth]{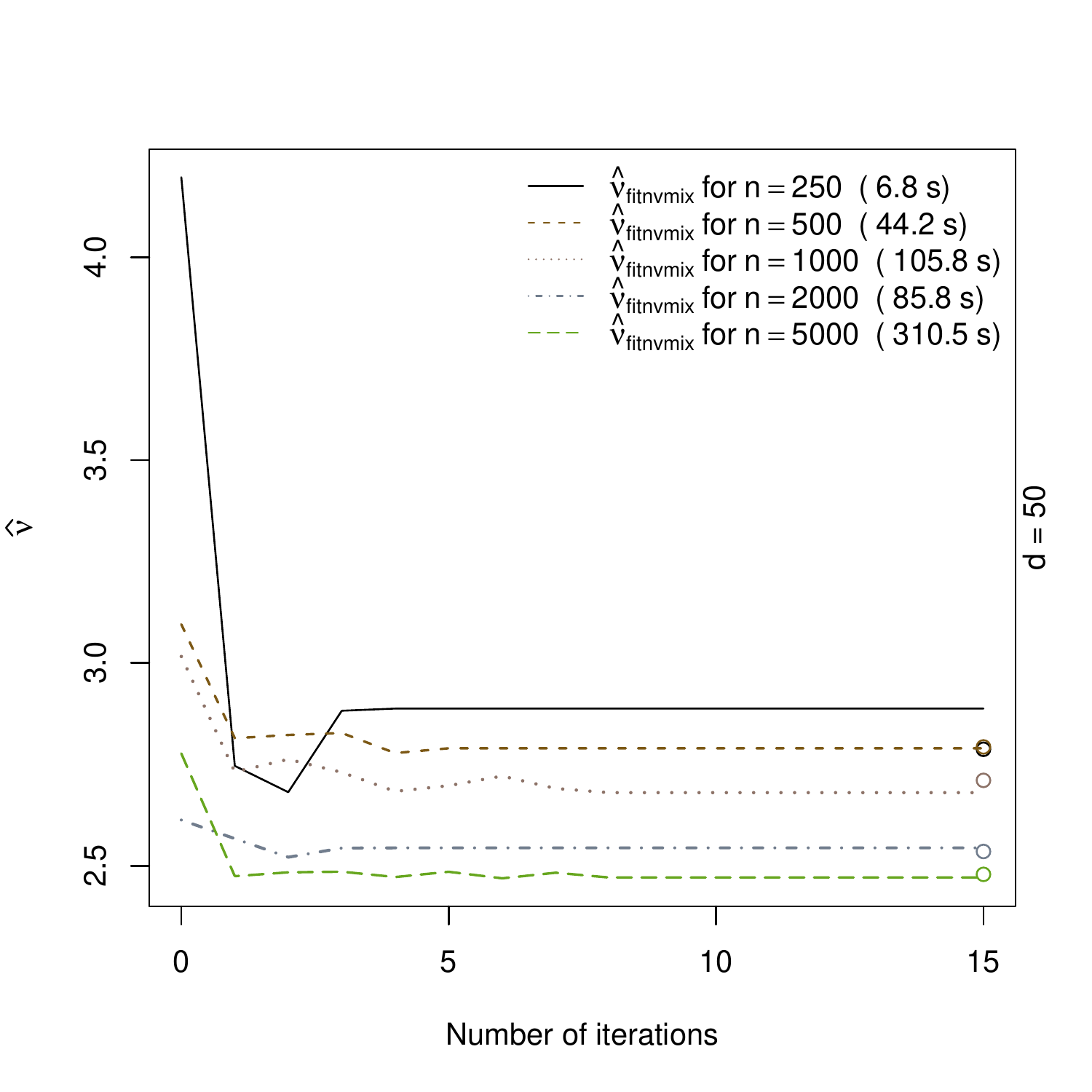}
\caption{Estimates $\hat{\nu}$ computed by Algorithm~\ref{alg:fitnvmix} as a function of the number of ECME iterations for multivariate $t$ distributions of different sample sizes and dimensions. The symbols at the end of each curve denote the maximum likelihood estimator of $\nu$ as found by the ECME algorithm with analytical weights and densities.}
\label{fig:fitnvmix.t}
\includegraphics[width = 0.45\textwidth]{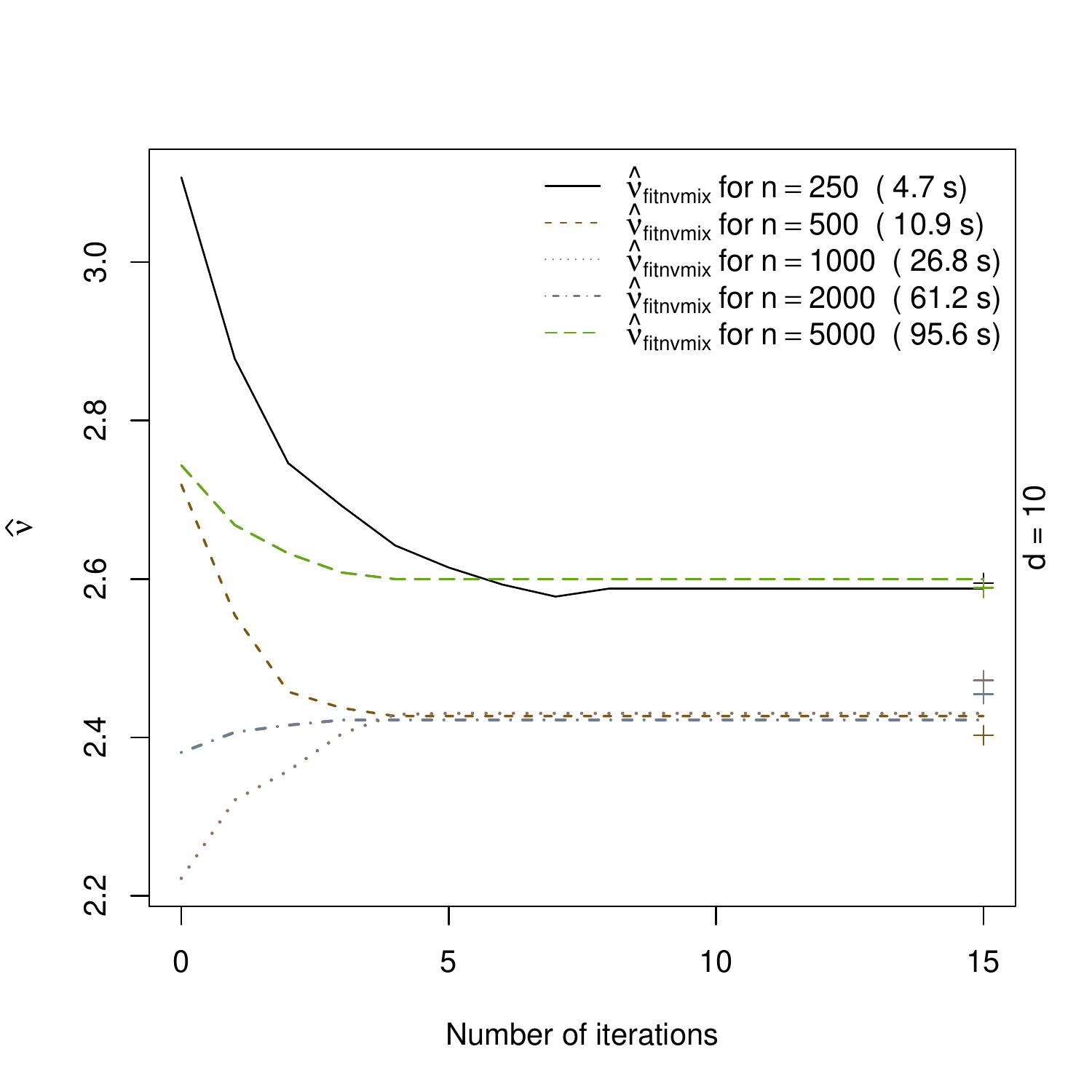}
\includegraphics[width = 0.45\textwidth]{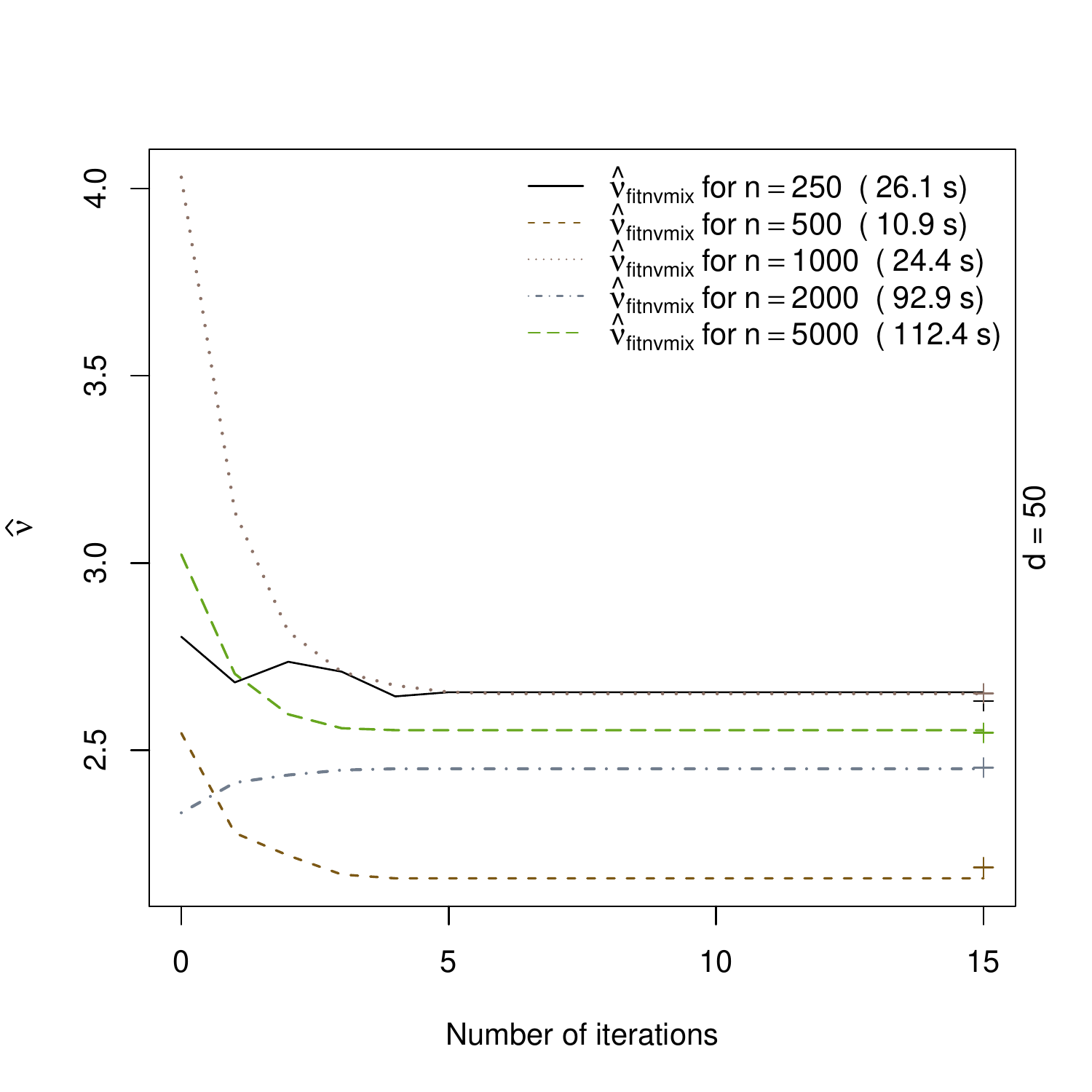}
\caption{Estimates $\hat{\nu}$ computed by Algorithm~\ref{alg:fitnvmix} as a function of the number of ECME iterations for Pareto mixture distributions of different sample sizes and dimensions. The symbols at the end of each curve denote the maximum likelihood estimator of $\nu$ as found by the ECME algorithm with analytical weights and densities.}
\label{fig:fitnvmix.Pareto}
\end{figure}

Our algorithm is tested in dimensions $d\in\{10, 50\}$ for sample sizes $n$ between 250 and 5\,000.  In each setting, $n$ random vectors $\bX_1,\dots,\bX_n\isim\MVT_d(\nu=2.5, \bzero, \Sigma)$ are sampled and then Algorithm~\ref{alg:fitnvmix} is used to estimate the parameters. We randomly choose $\Sigma$ as $D R D$ where $R$ is a random Wishart matrix and $D$ is diagonal with entries $D_{ii}\isim\U(2,5)$ for $i=1,\dots,d$. Results are displayed in Figure~\ref{fig:fitnvmix.t} where the estimate $\hat{\nu}$ of $\nu$ is plotted as a function of the number of ECME iterations (see Step~\ref{step:ecmeiteration} of Algorithm~\ref{alg:fitnvmix}). The optimizations in Steps~\ref{step:startingvalue} and~\ref{step:updatenu} of Algorithm~\ref{alg:fitnvmix} are based on the estimated log-likelihood function via Algorithm~\ref{alg:RQMC:f:x:adaptive}.

As mentioned earlier, an ECME procedure for estimating parameters of a multivariate $t$ distribution is available in the function \texttt{fit.mst()}. The symbols at the end of the curves in Figure~\ref{fig:fitnvmix.t} denote estimates obtained from this function. It can be confirmed that not only does our procedure converge to the correct maximum likelihood estimate in the given examples, but also that run times are reasonably small for this challenging problem. Note that only few iterations are needed until convergence is detected.

A similar experiment is performed for the Pareto-mixture case, see Figure~\ref{fig:fitnvmix.Pareto}. Here, the symbols at the end of each line display results obtained from Algorithm~\ref{alg:fitnvmix} using analytical weights and densities, obtained by calling our function \texttt{fitnvmix()} with \texttt{qmix = "pareto"}.

The run times displayed in Figures~\ref{fig:fitnvmix.t} and~\ref{fig:fitnvmix.Pareto} may seem counter-intuitive; however, several factors influence run time: The larger the sample size $n$, the more integrals need to be approximated and the higher the probability of observing extreme Mahalanobis distances. Furthermore, the problem of estimating the log-density and the weights becomes harder the larger the Mahalanobis distance of the input. However, larger sample sizes can also lead to a quicker convergence of the weights in Step~\ref{step:updatemusigma} of Algorithm~\ref{alg:fitnvmix} and also to faster convergence of the estimates of the mixing variable in Step~\ref{step:updatenu} of Algorithm~\ref{alg:fitnvmix}.
Overall, as there are numerical approximations involved at many levels, it will depend on the sample at hand how long the algorithm takes. This explains why run times are not monotone in the sample size $n$.

\subsection{Example application}
This section demonstrates an application of all our methods presented to a real financial data set. We consider daily return data  of 5 constituents of the SP500 index between 2007-01-03 and 2009-12-31 ($n=755$ data points in $d=5$). The dataset \texttt{SP500} is obtained from the \R\ package \texttt{qrmdata}, see \cite{hoferthornik2016}, and the stocks considered are \texttt{AAPL} (Apple), \texttt{ADBE} (Adobe), \texttt{INTC} (Intel), \texttt{ORCL} (Oracle) and \texttt{GOOGL} (Google). We first fit marginal $\operatorname{ARMA}(1,1)-\operatorname{GARCH}(1,1)$ models and then fit normal variance mixture models to the standardized residuals (``innovations").

Four normal variance mixture models are considered: The multivariate $t$ (an inverse-gamma mixture), a Pareto-mixture, an inverse-Burr mixture and the multivariate normal, where $\bX$ follows an inverse-Burr mixture if $F_W^\i(u, \bnu)= (u^{-1/\nu_2}-1)^{-1/\nu_1}$ (which is the quantile function of $1/\tilde{W}$ where $\tilde{W}\sim\operatorname{Burr}(\nu_1,\nu_2)$ has distribution function $F_{\tilde{W}}(\tilde{w})=1-(1+\tilde{w}^{\nu_1})^{-\nu_2}$ for $\tilde{w}>0$ and $\nu_1,\nu_2>0$). We highlight that in the inverse-Burr mixture case, neither the density of the resulting mixture nor weights for our estimation procedure are available in closed form, so that in this case, we indeed rely on our adaptive estimation procedure Algorithm~\ref{alg:RQMC:f:x:adaptive} to estimate the log-density function. As such, we supply aforementioned quantile function as a ``black box'' to our fitting procedure via \code{fitnvmix(, qmix = function(u, nu) (u^(-1/nu[2])-1)^(-1/nu[1]))}. We re\-mark that the multivariate normal case is trivial from an estimation point of view, as the maximum likelihood estimators for
$\bmu$ and $\Sigma$ are merely the sample mean and the sample variance,
respectively; this case is included for the sake of comparison.

We fit the aforementioned distributions to the stock data using Algorithm~\ref{alg:fitnvmix}. For the inverse-gamma and Pareto-mixtures we find $\hat{\nu}=5.65$ and $\hat{\nu}=1.64$, respectively, when using the closed form densities and weights; if weights and densities are estimated, we found $\hat{\nu}=5.62$ (20 sec) and $\hat{\nu}=1.61$ (13 sec), respectively. Overall it is reassuring that the estimates obtained from analytical and estimated weights and densities only differ slightly; given the difficulty of the problem the run times also seem reasonable. For the inverse-Burr mixture, we found $\hat{\bnu}=(2.15, 3.61)$ after 30 seconds run-time.

\begin{figure}[!t]
\centering
\includegraphics[width = 0.24\textwidth]{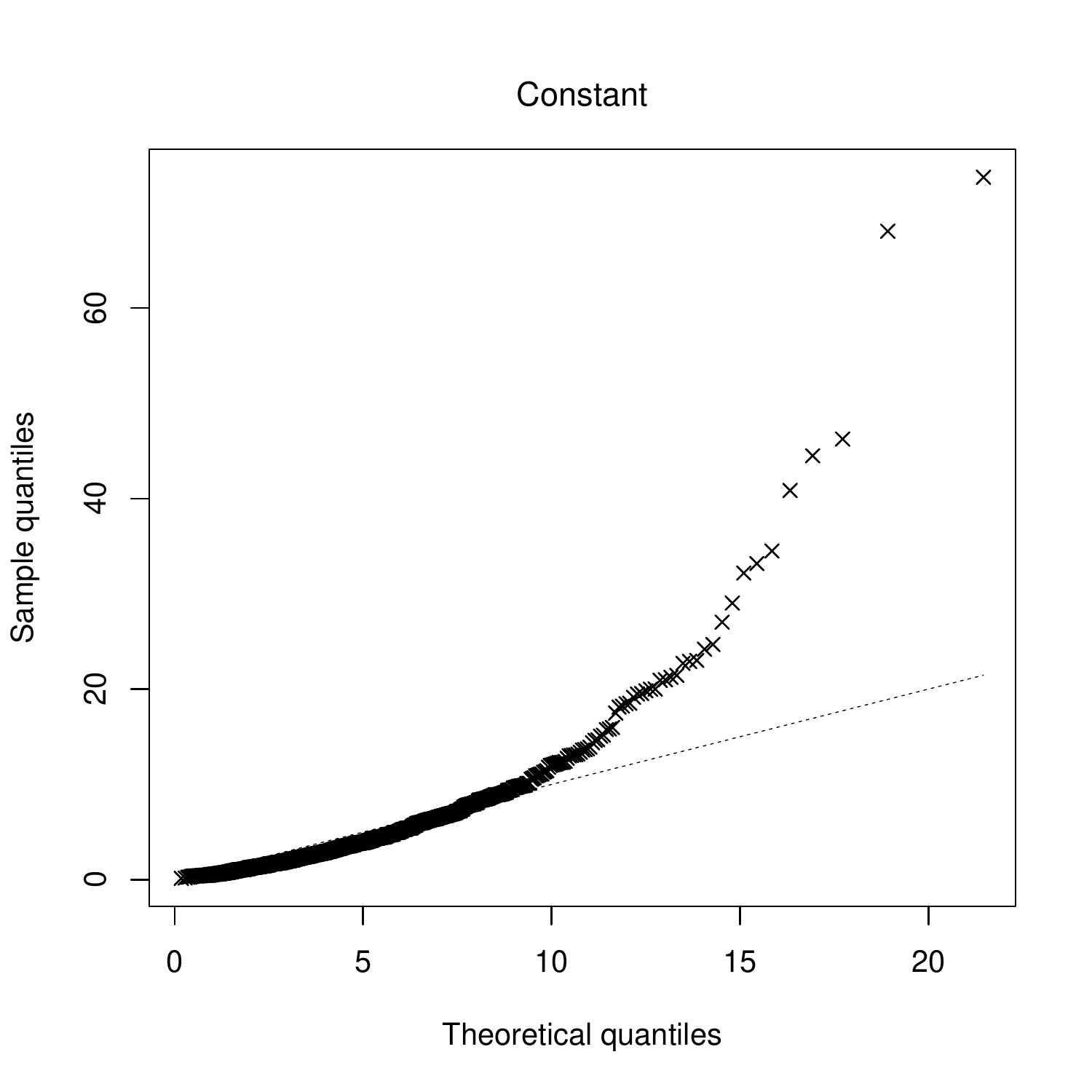}
\includegraphics[width = 0.24\textwidth]{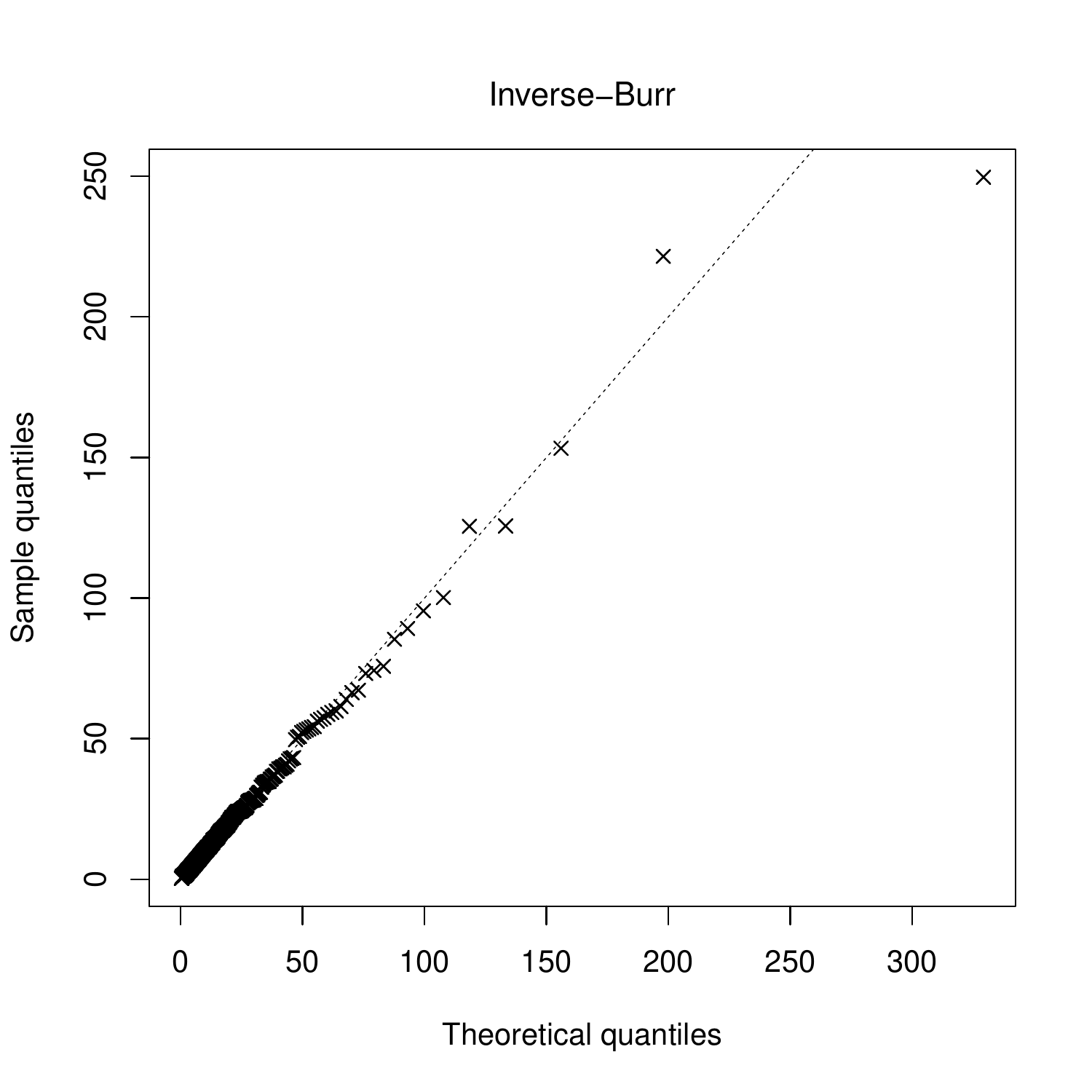}
\includegraphics[width = 0.24\textwidth]{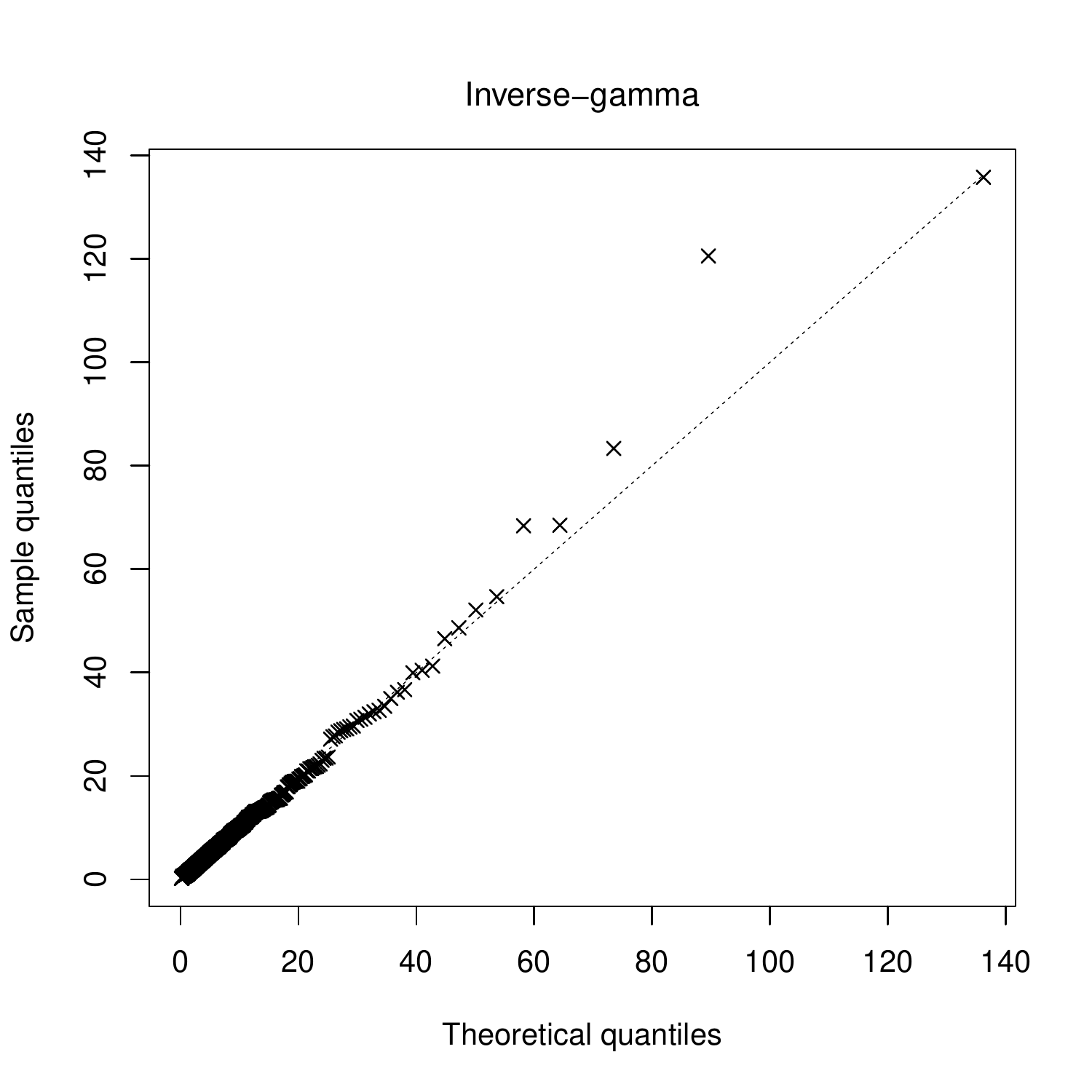}
\includegraphics[width = 0.24\textwidth]{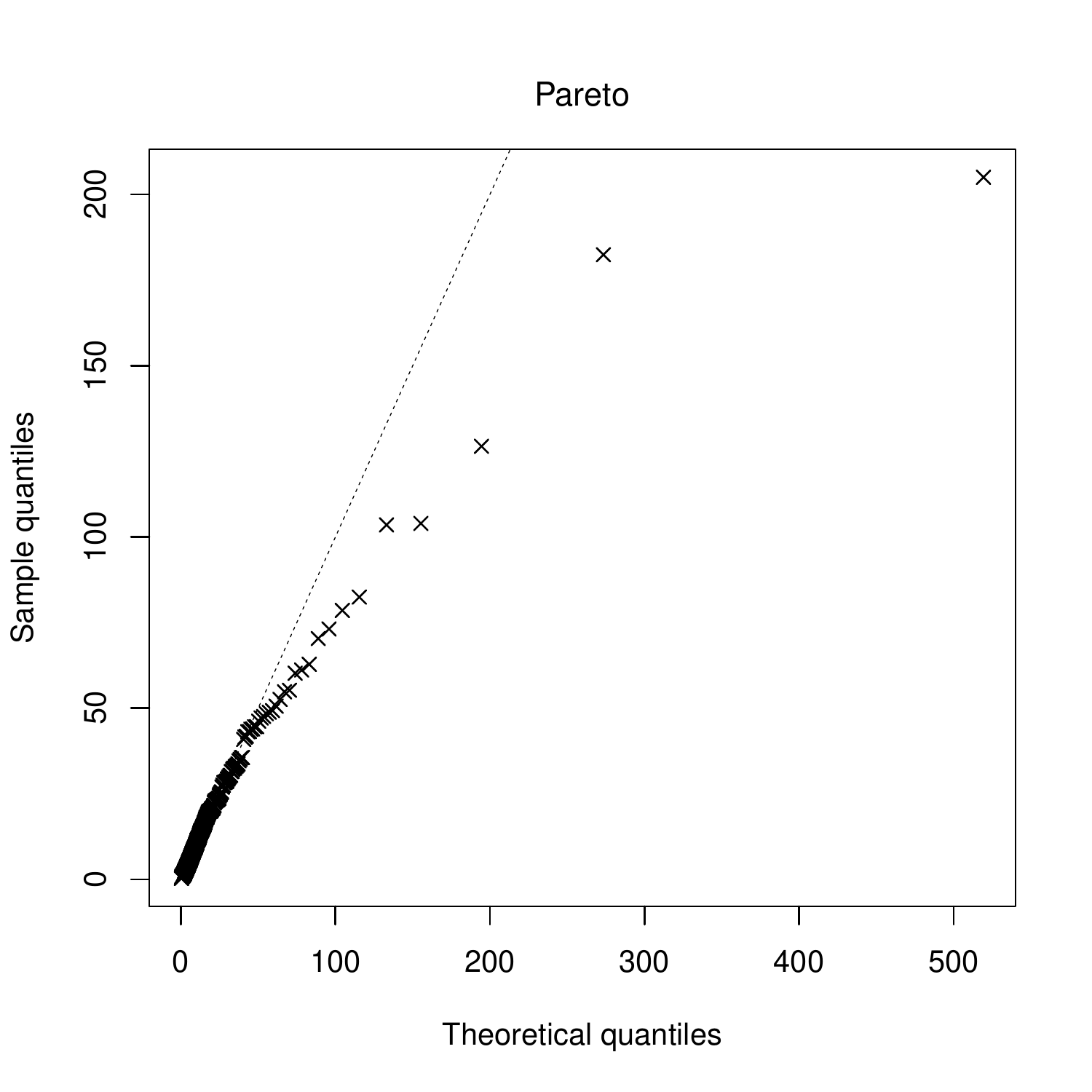}
\caption{Q-Q Plots of the empirical quantiles of the Mahalanobis distances $D^2(\bx_i, \hat{\bmu}, \hat{\Sigma})$, $i=1,\dots,n$, versus their theoretical quantiles for different models using a 5 stock portfolio with data from the \texttt{SP500} data set.}
\label{fig:qqplotssp500}
\end{figure}

Figure~\ref{fig:qqplotssp500} displays Q-Q Plots of $D^2(\bx_i,\hat{\bmu}, \hat{\Sigma})$, $i=1,\dots,n$ as a graphical goodness-of-fit test. Theoretical quantiles are estimated using the methods described in Section~\ref{sec:gammamix}. Clearly, the multivariate normal distribution (corresponding to constant $W$) provides a poor fit to the data as the tail is heavily underestimated. Both the inverse-gamma mixture and the inverse-Burr mixture provide an excellent fit to the data; the Pareto-mixture however shows too heavy tails. These plots confirm our main motivation outlined in the introduction: The multivariate normal is poorly suited for heavy-tailed return-data; normal variance mixtures, however, are more flexible in that they allow for heavier joint tails, often giving a better fit.

Finally, we use Algorithm~\ref{alg:RQMC:F:a:b} to estimate the joint quantile shortfall probability
$$ Q(u) := \P( X_1 \leq F_{X_1}^\i(u),\dots, X_d \leq F_{X_d}^\i(u))$$
for $u\in(0,1)$. In our context this is the probability that each of the 5 stocks yields a return smaller than its respective $u$ quantile; for small $u$, $Q(u)$ is the probability of a joint large loss and a rare event. This quantity is often considered in risk management to quantify the risk associated with joint extreme events. Since the margins are continuous, $Q(u)$ is the underlying copula evaluated at $(u,\dots,u)$.
In Figure~\ref{fig:shortfallprobs} we plot the estimated quantile shortfall probability $Q(u)$ for a range of values of $u$ for each fitted model separately. The figure on the right-hand-side shows the same probabilities $Q(u)$ standardized by the corresponding normal probability. The plots show again that the Pareto-mixture is significantly more heavy tailed than the multivariate $t$ distribution: It yields significantly higher shortfall probabilities. Furthermore these plots exemplify that our Algorithm~\ref{alg:RQMC:F:a:b} is also capable of estimating small probabilities despite the increasing numerical difficulty when moving outwards in the joint tail.

\begin{figure}[!t]
\centering
\includegraphics[width = 0.45\textwidth]{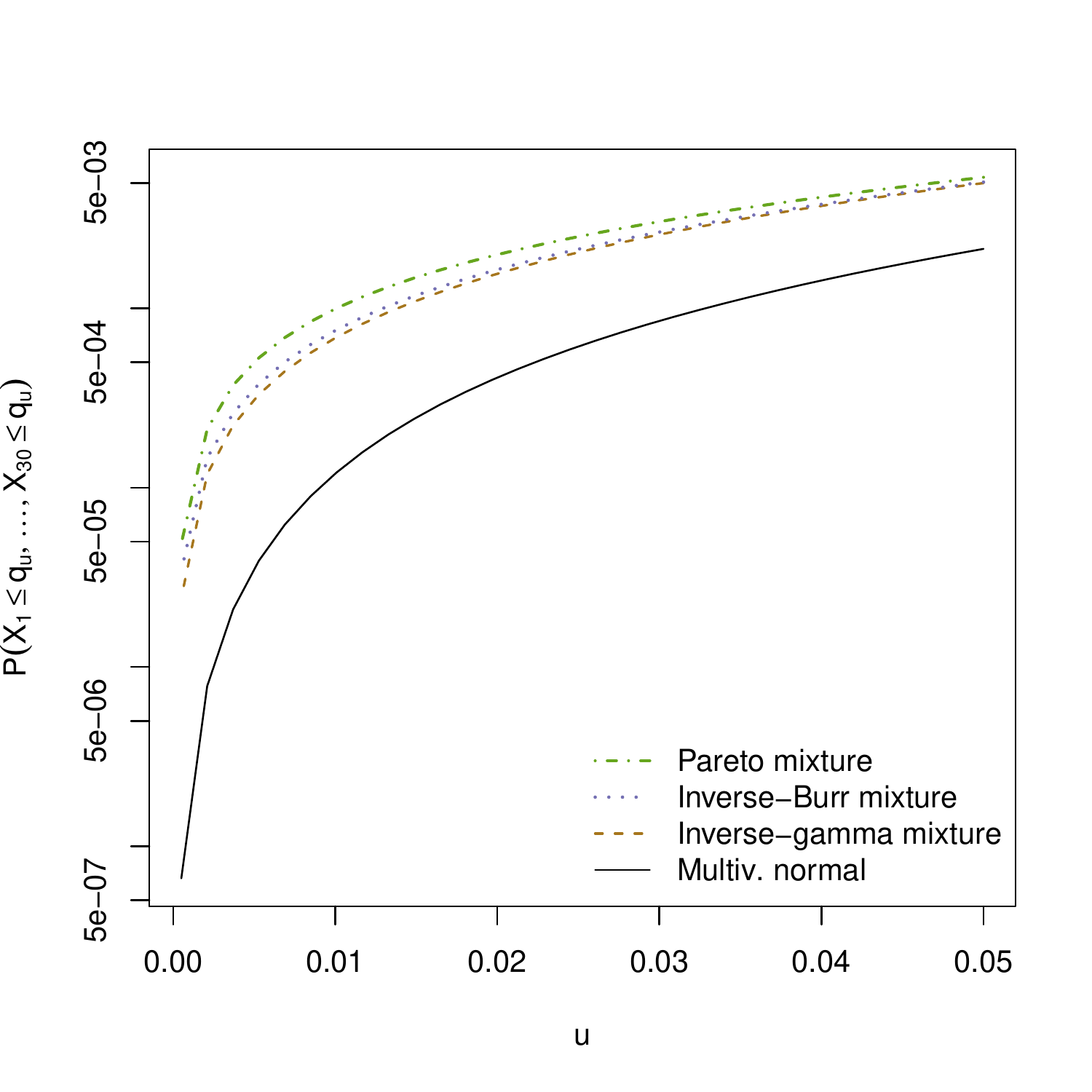}
\includegraphics[width = 0.45\textwidth]{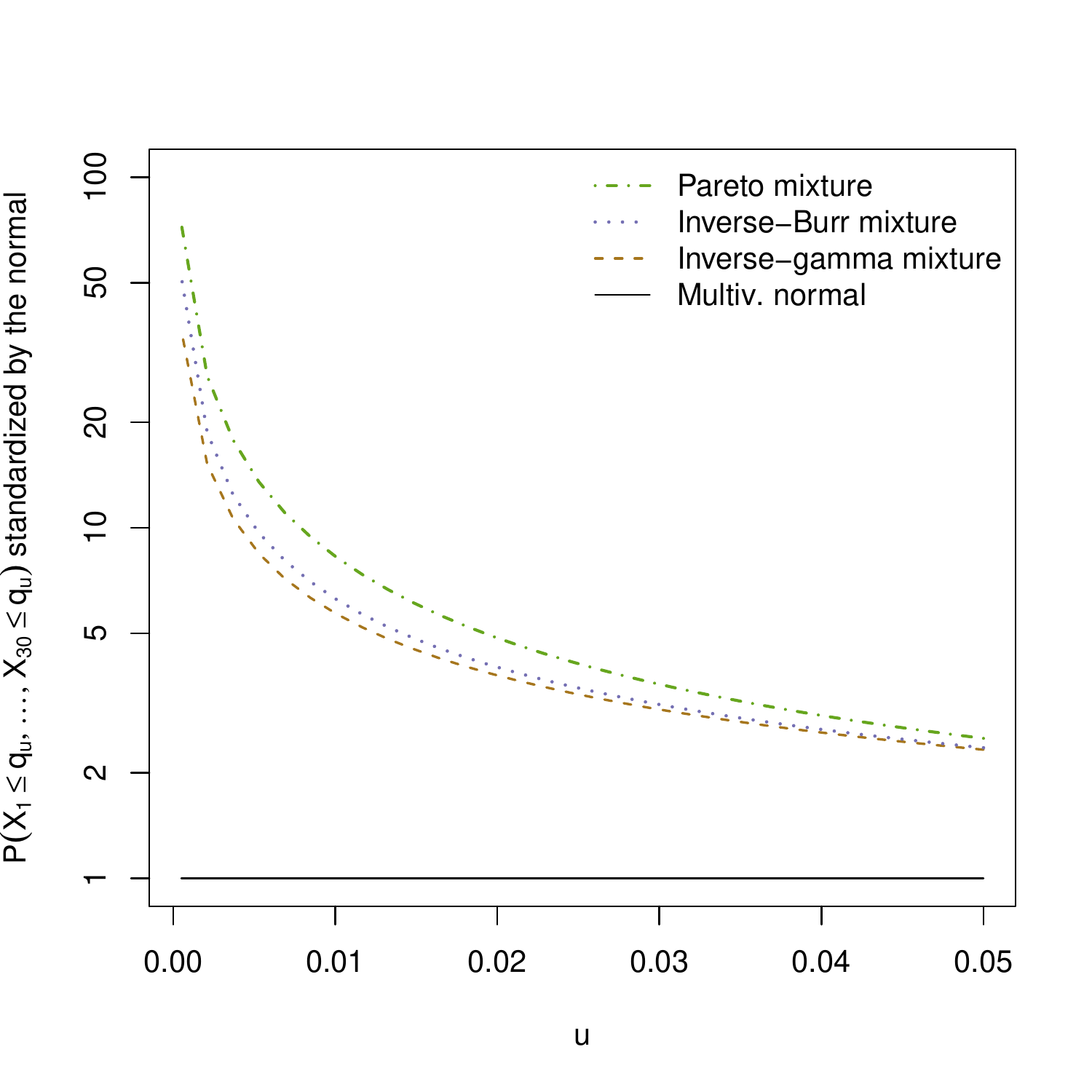}
  \caption{Estimated shortfall probabilities for different models for a 5 stock portfolio with data from the \texttt{SP500} data set (left);
  same probabilities standardized by the normal case (right).}
  \label{fig:shortfallprobs}
\end{figure}

\section{Conclusion}
We introduced efficient algorithms to perform the four main tasks for multivariate normal variance mixtures: Estimating the distribution function, the log-density function, sampling and estimating parameters for a given data set when only the quantile function of the mixing variable $W$ is available.
Due to the importance of multivariate normal variance mixtures for disciplines such as actuarial science or quantitative risk management, these algorithms are also widely applicable in practice.

We saw that the distribution function and the log-density function of normal variance mixtures can be accurately and quickly estimated even in high dimensions using RQMC algorithms. The algorithm for the distribution function relies on a generalization of methods that were used for estimating multivariate normal and $t$ probabilities in the past, including an efficient variable reordering algorithm. The algorithm for the log-density is based on an adaptive RQMC procedure that samples only in important regions. We also saw that it is possible to fit multivariate normal variance mixtures using an ECME algorithm in such generality where all involved quantities such as log-densities and weights need to be estimated via RQMC methods. Numerical results validate our methods. An implementation of all methods is provided in the \R\ package \texttt{nvmix}.

We remark that our work also exemplifies the superiority of RQMC methods even in very high dimensions over MC methods for this class of problems.

Another application of our methods is related to normal variance mixture copulas, the implicit copulas derived from normal variance mixture distributions. These copulas can be used to build flexible models with different joint and marginal behaviours. The methods presented here can be used directly to evaluate the distribution and log-density function and for sampling; corresponding methods are implemented in the \R\ package \texttt{nvmix}, too.

A possible limitation of our methods is the assumption of a computationally tractable quantile function of the mixing variable $W$. For more complicated distributions such quantile function may not be available so that an avenue for future research could be to modify our methods so that they work with a random number generator (RNG) for $W$ (for instance, based on acceptance-rejection algorithms). While sampling and estimating the distribution function is possible when instead of the quantile function of $W$ a RNG for $W$ is provided, this is not the case for estimating the log-density (and thus for the fitting procedure) as our methods are adaptive and thus require sampling in certain low-probability subregions of the support of $W$.

We also demonstrated via a few examples how variable reordering affects Sobol' indices of the integrand and therefore the effective dimension; given by how much the reordering improves the performance of our RQMC estimator for the distribution function, we believe it would be interesting to explore if this idea can be exploited in other problems as well.
\clearpage

 \appendix

 \section{Evaluation of singular normal variance mixtures}\label{sec:app:singular}

If $\Sigma\in\IR^{d\times d}$ is positive semidefinite with rank $1\leq r<d$, the resulting singular normal variance mixture can be estimated by applying results described in \cite{genzkwong2000}, who developed an accurate method to evaluate the distribution function of a multivariate normal distribution with singular correlation matrix $\Sigma$, see also \cite[Section 5.2]{genzbretz2009} for more details.

Let $\Sigma = CC\T$ with $C_{ij} = 0$ for $j>r$, $i=1,\dots,d$, that is, $C$ is lower triangular with some diagonal elements being zero; see \cite{healy1968} for an algorithm to compute such $C$ which uses a numerical tolerance to determine zero-entries. After permutations and scalings (that must also be applied to $\ba$ and $\bb$), $C$ shall have the following form where ``$\ast$'' denotes an entry that can be zero or non-zero:
$$ C = \left( \begin{matrix} 	1	  & 0         & 0         & \dots   & \dots   & \dots  & \dots    & 0 \\
					\vdots & \vdots 	& \vdots & \vdots & \vdots & \vdots & \vdots & \vdots \\
					1  	  & 0		& \dots   & \dots   & \dots   & \dots  & \dots   & 0 \\
					\ast  	  & 1		& 0         & \dots   & \dots   & \dots  & \dots   & 0 \\
					\vdots & \vdots 	& \vdots & \vdots & \vdots & \vdots & \vdots & \vdots \\
					\ast  	  & 1		& 0         & \dots   & \dots   & \dots  & \dots   & 0 \\
					\vdots &            &             &           &            &           &            & \vdots \\
					\ast  	  & \ast		& \dots    & \ast        & 1        & 0        & \dots   & 0 \\
					\vdots & \vdots 	& \vdots & \vdots & \vdots & \vdots & \vdots & \vdots \\
					\ast  	  & \ast		& \dots    &\ast         & 1        & 0        & \dots   & 0 \end{matrix} \right)
					\begin{matrix} 1 \\ \vdots \\ k_1 \\ 1 \\ \vdots \\ k_2 \\ \vdots \\ 1 \\ \vdots \\ k_r \end{matrix}$$

Note that $\sum_{j=1}^r k_j = d$. Define $m_i = \sum_{j=1}^{i-1}k_j$ with $m_1 = 0$.
As demonstrated in \cite{genzkwong2000},  $\Phi_{\Sigma}$ can then be written in a similar fashion as in \eqref{eq:normalcdf}:
\begin{align}\label{eq:normalcdfsingular}
    \Phi_\Sigma(\ba, \bb) = \int_{\ba < C\by \leq \bb} \phi(y_1)\dots\ \phi(y_d) \,\rd \by = \int_{\ta_1}^{\tb_1}\phi(y_1)\dots  \int_{\ta_r}^{\tb_r}\phi(y_r)\,\rd y_r\dots \rd y_1
 \end{align}
Note that the $r$-dimensional integral still has $d$ active constraints: For variable $l$, the $k_l$ constraints $\ba_j < C_j\T \by \leq \bb_j$ for $j\in\{m_l+1,\dots,m_{l+1}\}$ need to be satisfied simultaneously so that the limits in \eqref{eq:normalcdfsingular} are given by
\begin{align*}
    \ta_l = \max_{ m_l < i \leq m_{l+1}} \left\{ a_i - \sum_{j=1}^{l-1} C_{i, j} y_j\right\}\quad\text{and}\quad \tb_l = \min_{ m_l < i \leq m_{l+1}} \left\{ b_i - \sum_{j=1}^{l-1} C_{i, j} y_j\right\}
\end{align*}
for $l=1,\dots,r$.

This idea can be generalized to singular normal variance mixtures. Proceeding as in Section~\ref{sec:reformulationFab} one obtains
$$ F(\ba, \bb) = \int_{ (0,1)^r} g(\bu)\,\rd u,\quad  g(\bu) = \prod_{l=1}^r \left(  e_l(u_0,\dots,u_{l-1}) - d_l(u_0,\dots,u_{l-1})\right) $$
with
\begin{align*}
    d_l(u_0,\dots, u_{l-1}) &= \Phi\left(\max_{ m_l < i \leq m_{l+1}} \left\{ \frac{a_i}{\sqrt{F_W^\i(u_0)}} - \sum_{j=1}^{l-1} C_{i, j} \Phi^{-1}(d_j + u_j(e_j-d_j))\right\}\right),\\
    e_l(u_0,\dots, u_{l-1}) &= \Phi\left(\min_{ m_l < i \leq m_{l+1}} \left\{ \frac{b_i}{\sqrt{F_W^\i(u_0)}} - \sum_{j=1}^{l-1} C_{i, j} \Phi^{-1}(d_j + u_j(e_j-d_j))\right\}\right)
\end{align*}
for $l=1,\dots,r$.  The RQMC methods described in Section~\ref{sec:mcrqmc} can be applied to the problem in this form to estimate $F(\ba,\bb)$. The main difference is that the dimension of the problem in the singular case is given by the rank $r$ as opposed to the dimension $d>r$ of the normal variance mixture.

\section{Gamma Mixture Models}\label{sec:gammamix}

For statistical purposes it is often interesting to study the distribution of
the squared Mahalanobis distance of $\bX\sim\NVM_d(\bmu,\Sigma, F_W)$ given by $D^2(\bX; \bmu, \Sigma) = (\bX-\bmu)\T \Sigma^{-1} (\bX-\bmu)$. We write $D^2 := D^2(\bX; \bmu, \Sigma)$ if there is no confusion.

It follows readily from the stochastic representation~\eqref{eq:nvm:stoch:rep} of $\bX$ that, in distribution,
$$ D^2 = W\, X^2$$
where $X^2\sim \chi^2_d$. This immediately gives rise to a sampling algorithm to generate random variates from $D^2$. Since a $\chi^2$ distribution is a special case of a gamma distribution, it follows that $ D^2\mid W \sim \Gamma(d/2, 2W)$
where $\Gamma(\alpha, \beta)$ denotes a gamma distribution with shape $\alpha>0$ and scale $\beta>0$ which admits the density
$ f_{\Gamma(\alpha, \beta)}(x) = ( \beta^\alpha \Gamma(\alpha))^{-1}  x^{\alpha-1} e^{-x/\beta}$, $x>0$, and distribution function
$ F_{\Gamma}(x; \alpha,\beta)=\int_0^x f_{\Gamma(\alpha, \beta)}(t)\,\rd t$
for $x>0$. The function $\Gamma(z)=\int_0^\infty t^{z-1}e^{-t}\,\rd t$, $z>0$ denotes the gamma function.

In the special case where $W=1$ almost surely, $D^2\sim\chi^2_d$; if $W$ follows an inverse-gamma distribution so that $\bX$ follows a multivariate $t$ with $\nu>0$ degrees of freedom, it can be easily seen that $D^2 / d \sim \text{F}(d, \nu)$. For the general case where only $F_W^\i$ is available, we can use methods similar to the ones developed so far to approximate the density and the distribution function of $D^2$.

\paragraph{Estimating the distribution function of $D^2$}
Using a conditioning argument similar to the normal variance mixture case, we obtain that
$$ F_{D^2}(x) = \P(D^2\leq x) = \E\left( F_{\Gamma(d/2,2)}\left(\frac{x}{W}\right)\right),\;\;\; x \geq 0.$$
This univariate integral can be approximated directly using an RQMC approach similar to Algorithm~\ref{alg:RQMC:F:a:b}. An implementation can be found in the function \texttt{pgammamix()} in the \R\ package \texttt{nvmix}.

\paragraph{Estimating the density function of $D^2$}
In a similar fashion as in the derivation of Equation~\eqref{eq:densityX}, the density of $D^2$ can be calculated as $ f_{D^2}(x) = \int_0^1\tilde{h}(u)\,\rd u$ for $x>0$, where
$$\tilde{h}(u) = \frac{1}{\Gamma(d/2) (2 F_W^\i(u))^{d/2}} x^{d/2-1}\exp\left(-\frac{x}{2F_W^\i(u)}\right), \;\;\; u\in(0,1).$$

The functions $\tilde{h}$ and $h$ from Equation~\eqref{eq:integrandh} differ only in constants with respect to $u$, the functional form is identical. Algorithm~\ref{alg:RQMC:f:x:adaptive}  can then, with some slight modifications, be used to estimate the density $ f_{D^2}(x)$ (or $\log f_{D^2}(x)$); see also Remark~\ref{remark:htilde}. This is implemented in the function \texttt{dgammamix()} in the \R\ package \texttt{nvmix}.

\paragraph{Estimating the quantile function of $D^2$}
Many applications, such as graphical goodness-of-fit assessment or random variate generation, rely on the quantile function of $D^2$. Note that both the density and the distribution
function of $D^2$ can be estimated as discussed above; the quantile function can then be estimated by
numerically solving the equation $F_{D^2}(q_u) - u= 0$ for $q_u$ where $u\in(0,1)$ is given. We suggest using
Newton's method: In iteration $k\geq 1$, given a current iterate $q_u^{(k)}$, the next iterate is given by
\begin{align*}
 q_u^{(k+1)} &= q_u^{(k)} - \frac{ F_{D^2}(q_u^{(k)}) - u}{f_{D^2}(q_u^{(k)})}\\
 &=q_u^{(k)} - \operatorname{sign}(F_{D^2}(q_u^{(k)}) - u_i) \exp\left\{ \log\left(|F_{D^2}(q_u^{(k)}) - u_i|\right) - \log f_{D^2}(q_u^{(k)})\right\}.
\end{align*}

The second line is a numerically more stable version of the first. We remark that (potentially) many calls to $F_{D^2}(\cdot)$ and $f_{D^2}(\cdot)$ are necessary until convergence takes place. We also note that in most applications, the quantile function has to be evaluated at multiple inputs, say $u_1,\dots,u_n$. In order to reduce run time, one can sort the inputs $u_i$ in increasing order and also store all calls to $F_{D^2}(\cdot)$ and $f_{D^2}(\cdot)$. These values can be used as starting values for the next quantile calculation. If they are reasonably close to the true quantile, the procedure enjoys local quadratic convergence so that only a few calls to $F_{D^2}(\cdot)$ and $f_{D^2}(\cdot)$ are needed. Furthermore, $F_{D^2}(\cdot)$ and $f_{D^2}(\cdot)$ can be estimated simultaneously using the same realizations of $W$, and all those realizations can also be stored so that they do not need to be generated more often than necessary. This is implemented in the function \texttt{qgammamix()} in the \R\ package \texttt{nvmix}; the same idea can be exploited to estimate the
quantile function of univariate normal variance mixtures which is implemented in the function \texttt{qnvmix()}.

\section{Algorithms}

\begin{algorithm}[RQMC Algorithm to estimate $\log\mu$ where $\mu=\int_{(0,1)^d} g(\bu)\;\rd \bu$.]\label{alg:RQMC:logmu}
Given $\eps$, $B$, $n_0$, $\imax$, estimate $\log\mu=\log(\int_{(0,1)^d} g(\bu)\;\rd \bu)$ via:
    \begin{enumerate}
    \item Set $n=n_0$, $i=1$, and compute $\hmurqmc_{b,n,\log}=\hmurqmc_{b,0,n_0,\log}$ for $b=1,\dots,B$ and $\hmurqmc_{n,\log}$ from~\eqref{eq:hmurqmcn:log} and~\eqref{eq:hmurqmcbn:log}.
            \item Set $\hat{\eps}=3.5 \hsigma_{\hmurqmc_{n,\log}}$ with $\hsigma_{\hmurqmc_{n,\log}}$ as in~\eqref{eq:sigma:rqmc}.
        \item While $\hat\eps>\eps$ and $i\leq \imax$ do:
        \begin{enumerate}
            \item Set $n=n+n_0$, compute $\hmurqmc_{b,in_0,(i+1)n_0,\log}$, $b=1,\dots,B$ and update $\hmurqmc_{b,n,\log} = -\log(i+1) + \LSE(i\hmurqmc_{b,n},\hmurqmc_{b,in_0,(i+1)n_0})$ for $b=1,\dots,B$.
           \item Update $\hmurqmc_{n,\log}=-\log(B)+\LSE(\hmurqmc_{1,n,\log},\dots,\hmurqmc_{B,n,\log})$ and update $\hat{\eps}=3.5 \hsigma_{\hmurqmc_{n,\log}}$
            \item Set $i=i+1$.
        \end{enumerate}
        \item Return $\hmurqmc_{n,\log}$.
 \end{enumerate}
 \end{algorithm}

\begin{algorithm}[Variable reordering]\label{alg:precond}
    \begin{enumerate}
        \item Start with given $\ba, \bb$ and $\Sigma$.
        \item Calculate or approximate $\mu_{\sqrt{W}} = \E(\sqrt{W})$.
        \item
        \begin{enumerate}
            \item[a)] Choose the first integration variable as
            $$i = \argmin_{j\in\{1,\dots,d\}} \left\{ \Phi\left( \frac{b_j}{\mu_{\sqrt{W}}  \sqrt{\Sigma_{jj}}}  \right) -\Phi\left( \frac{a_j}{\mu_{\sqrt{W}} \sqrt{\Sigma_{jj}}}  \right) \right\}.$$
            Swap components 1 and $i$ of $\ba$ and $\bb$ and interchange both rows and columns of $\Sigma$ corresponding to the variables $i$ and 1.
            \item[b)] Update $C_{11}=\sqrt{\Sigma_{11}}$ and $C_{j1}=\Sigma_{j1} / C_{11}$ for $j=1,\dots,d$. Set
            $$ y_1 = \frac{ \int_{\hat{a}_1 }^{\hat{b}_1 }s\phi(s)ds}{\Phi(\hat{b}_1 )-\Phi(\hat{a}_1) }$$ as expected value for $u_1$, where
            $$ \hat{a}_1 = \frac{a_1}{\mu_{\sqrt{W}} C_{11}}\quad \text{and} \quad \hat{b}_1 = \frac{b_1 }{\mu_{\sqrt{W}} C_{11}}.$$
            This is the same as $\E(Z\mid Z\in [\hat{a}_1 ,\hat{b}_1] )$ for $Z\sim N(0,1)$.
        \end{enumerate}
        \item For $j=2,\dots,d$,
        \begin{enumerate}
            \item[a)] Choose the $j$th integration variable as
            $$ i = \argmin_{ l\in\{j,\dots,d\}} \left\{  \Phi\left( \frac{\frac{b_l}{\mu_{\sqrt{W}}} -
            \sum_{k=1}^{j-1} C_{lk}y_k}{\sqrt{\Sigma_{l,l}-\sum_{k=1}^{j-1}C_{lk}^2}} \right) -
            \Phi\left( \frac{\frac{a_l}{\mu_{\sqrt{W}}} -
            \sum_{k=1}^{j-1} C_{lk}y_k}{\sqrt{\Sigma_{l,l}-\sum_{k=1}^{j-1}C_{lk}^2}} \right)\right\}.$$
            Swap components $i$ and $j$ of $\ba$ and $\bb$ and interchange both rows
            and columns of $\Sigma$ corresponding to variables $i$ and $j$ and interchange rows $i$ and $j$ in $C$.
            \item[b)] Update $C_{jj}=\sqrt{\Sigma_{jj}-\sum_{k=1}^{j-1}C_{jk}^2}$ and $C_{lj} = \frac{1}{C_{jj}}\left( \Sigma_{lj}-\sum_{k=1}^{j-1}C_{jk}C_{lk}\right)$ for $l=j+1,\dots,d$ and set
        $$y_j = \frac{ \int_{\hat{a}_j }^{\hat{b}_j }s\phi(s)ds}{\Phi(\hat{b}_j )-\Phi(\hat{a}_j ) }$$
        where
        $$ \hat{a}_j = \frac{\frac{a_j}{\mu_{\sqrt{W}}} - \sum_{k=1}^{j-1} C_{jk}y_k }{C_{jj}}\quad \text{and}\quad \hat{b}_j = \frac{\frac{b_j}{\mu_{\sqrt{W}}} - \sum_{k=1}^{j-1} C_{jk}y_k }{C_{jj}}.$$
       \end{enumerate}
\end{enumerate}
\end{algorithm}

\clearpage
\printbibliography[heading=bibintoc]
\end{document}

%
%
%
%